\numberwithin{equation}{section} 
\newcommand{\bea}{\begin{eqnarray}}
\newcommand{\eea}{\end{eqnarray}}
\newcommand{\ba}{\begin{array}}
\newcommand{\ea}{\end{array}}
\newcommand{\edc}{\end{document}}
\newcommand{\bc}{\begin{center}}
\newcommand{\ec}{\end{center}}
\newcommand{\be}{\begin{equation}}
\newcommand{\ee}{\end{equation}}
\newcommand{\dsf}{\displaystyle\frac}
\def\cb{{\mathcal B}}
\def\ce{{\mathcal E}}
\def\cg{{\mathcal G}}
\def\bc{{\mathbb C}}
\def\bbf{{\mathbb F}}
\def\bn{{\mathbb N}}
\def\bq{{\mathbb Q}}
\def\br{{\mathbb R}}
\def\bz{{\mathbb Z}}
\def\a{\alpha}
\def\b{\beta}
\def\g{\gamma}  \def\G{\Gamma}
\def\d{\delta}  
\def\e{\epsilon}
\def\m{\mu}
\def\r{\rho}
\def\s{\sigma} 
\def\t{\theta}
\def\w{\omega} \def\Om{\Omega}
\def\h{{\mathbf{h}}}
\def\xb{{\mathbf{x}}}
\def\sb{{\mathbf{s}}}
\newtheorem{thm}{Theorem}[section]
\newtheorem{lem}[thm]{Lemma}
\newtheorem{prop}[thm]{Proposition}
\theoremstyle{remark}
\newtheorem{rem}{Remark}[section]
\begin{document}

\title[On $p$-adic Gibbs measures]
{Phase transitions for $P$-adic Potts model on the Cayley tree of
order three}


\author{Farrukh Mukhamedov}
\address{Farrukh Mukhamedov\\
 Department of Computational \& Theoretical Sciences\\
Faculty of Science, International Islamic University Malaysia\\
P.O. Box, 141, 25710, Kuantan\\
Pahang, Malaysia} \email{{\tt far75m@yandex.ru} {\tt
farrukh\_m@iium.edu.my}}

\author{Hasan Ak\i n}
\address{Hasan Ak\i n, Department of Mathematics, Faculty of Education,
 Zirve University, Kizilhisar Campus, Gaziantep, 27260, Turkey}
\email{{\tt hasanakin69@gmail.com}}

\begin{abstract}
In the present paper, we study a phase transition problem
for the $q$-state $p$-adic Potts model over the
Cayley tree of order three. We consider a more general notion
of $p$-adic Gibbs measure which depends on parameter $\rho\in\bq_p$.
Such a measure
is called {\it generalized $p$-adic quasi Gibbs measure}. When $\rho$ equals to $p$-adic
exponent, then it coincides with the $p$-adic Gibbs measure. When $\rho=p$, then
it coincides with $p$-adic quasi Gibbs measure. Therefore, we investigate
two regimes with respect to the value of $|\rho|_p$. Namely, in the first
regime, one takes $\rho=\exp_p(J)$ for some $J\in\bq_p$, in the second one $|\rho|_p<1$.
In each regime, we first find conditions for the existence of generalized
$p$-adic quasi Gibbs measures. Furthermore, in the first regime, we
establish the existence of
the phase transition under some conditions. In the second regime,
when $|\r|_p,|q|_p\leq p^{-2}$ we prove the existence of a quasi phase transition. It
turns out that if
$|\r|_p<|q-1|_p^2<1$ and $\sqrt{-3}\in\bq_p$, then one finds the existence of
the strong phase transition.

\vskip 0.3cm \noindent {\it
Mathematics Subject Classification}: 46S10, 82B26, 12J12, 39A70, 47H10, 60K35.\\
{\it Key words}: $p$-adic numbers, Potts model; $p$-adic quasi
Gibbs measure, phase transition.
\end{abstract}

\maketitle

\section{introduction}

It is know that (see \cite{W}) the $q$-state Potts model
is one of the most studied models in
statistical mechanics. It has wide theoretical interest and practical
applications. Originally, the Potts model was introduced as a
generalization of the Ising model to more than two spin components.
The model has enough rich structure to illustrate
almost every conceivable nuance of statistical mechanics. In
\cite{Per1,Per2,Ga} the phase diagrams of the model on the Bethe
lattices (Cayley tree in our terminology)
were studied and the pure phases of the ferromagnetic
Potts model were found. Note that the Bethe lattices were fruitfully used,
providing a deeper insight into the behavior of the Potts models.

We have to stress that one of the central problems
in the theory of Gibbs measures of lattice systems
is to describe infinite-volume (or limiting) Gibbs measures corresponding to a given
Hamiltonian. A complete analysis of this set is often a difficult problem. Many papers
have been devoted to these studies when the underlying lattice is a Cayley tree (see for
example, \cite{Ga,G,GR,R,Za}).

On the other hand, since the 1980s various models in physics described in the
language of $p$-adic analysis (see \cite{ADV,ADFV,FO,MP,V1,V2}), and
numerous applications of
such an analysis to mathematical physics have been studied in
\cite{ABK,Kh1,Kh2,Ko,MP,Vi,VVZ}. In those studies, it
appeared unconventional probability measures.
This means that a number of $p$-adic models in physics cannot be described
using ordinary Kolmogorov's probability theory. New probability
models, namely $p$-adic ones were investigated in
\cite{BD,K3,KYR,KL,Lu,Ro,F1}. Note that such kind of
probabilities measures have many applications in algebraic number theory (see for example,
\cite{BP,BS,Ch,HT,Vi}.

In \cite{GMR1,MR1,MR2} we have developed a
$p$-adic probability theory approaches to study $p$-adic Potts models
on Cayley tree of order two. We especially interested in the
construction of $p$-adic Gibbs measures for the mentioned model. It was
proved that if $q$ is divisible by $p$ then a phase transitions
occurs for the $q$-state Potts model. Further, in \cite{Mq} we have
introduced a
new kind of $p$-adic measures, associated with $q+1$-state Potts model,
called {\it $p$-adic quasi Gibbs measure}, which is totaly different
from the $p$-adic Gibbs measure.
 Note that such measures present more natural concrete examples of
$p$-adic Markov processes (see \cite{F31}).  We established the
existence $p$-adic quasi Gibbs measures for the mentioned model on a Cayley tree of order two.
Moreover, if  $q$ is divisible by $p$, then we prove
the occurrence of a strong phase transition.
If $q$ and $p$ are relatively prime, then there is a quasi phase transition.
These results totaly different from
the results of \cite{MR1,MR2}.
It is natural to ask: do we have the similar phase transitions if one decreases the order
of the tree?

In the present paper, our aim is to investigate a phase transition problem
for the $q$-state $p$-adic Potts model over the
Cayley tree of order three. In this paper, we consider a more general notion
of $p$-adic Gibbs measure which depends on parameter $\rho\in\bq_p$ (see \cite{Mq2}).
Such a measure
is called {\it generalized $p$-adic quasi Gibbs measure}. When $\rho$ equals to $p$-adic
exponent, then it coincides with the $p$-adic Gibbs measure. When $\rho=p$, then
it coincides with $p$-adic quasi Gibbs measure. Therefore, in the sequel, we will
consider two regimes with respect to the values of $|\rho|_p$. Namely, in the first
regime, one takes $\rho=\exp_p(J)$ for some $J\in\bq_p$, in the second one $|\rho|_p<1$.
In each regime, we first find conditions for the existence of generalized
$p$-adic quasi Gibbs measures. Furthermore, we will study the occurrence of phase transitions.
In the first regime, in \cite{MR1} it was predicted that a phase transition
may occur if $|q|_p<1$ (i.e. $q$ is divisible by $p$). In this paper,
we able to establish the existence of
the phase transition under a stronger condition, i.e. when $|q|_p\leq p^{-3}$.
If $|q|_p=1$, such a
case was not investigated neither in \cite{MR1} nor \cite{Mq}. So, in that case, we
proved the existence of a phase transition under some conditions. In the second regime,
when $|\r|_p,|q|_p\leq p^{-2}$ we will prove the existence of a quasi phase transition. It
turns out that if
$|\r|_p<|q-1|_p^2<1$, then one finds the existence of
the strong phase transition.

\section{Preliminaries}

\subsection{$p$-adic numbers}

In what follows $p$ will be a fixed prime number, and by $\bq_p$ it is denoted the field of
$p-$adic numbers, which is a completion of the rational numbers
$\bq$ with respect to the norm $|\cdot|_p:\bq\to\br$
given by
\begin{eqnarray}
|x|_p=\left\{
\begin{array}{c}
  p^{-r} \ x\neq 0,\\
  0,\ \quad x=0,
\end{array}
\right.
\end{eqnarray}
here, $x=p^r\frac{m}{n}$ with $r,m\in\bz,$ $n\in\bn$,
$(m,p)=(n,p)=1$. A number $r$ is called \textit{a $p-$order} of $x$
and it is denoted by $ord_p(x)=r.$ The
absolute value $|\cdot|_p$, is non- Archimedean, meaning that it
satisfies the ultrametric triangle inequality $|x + y|_p \leq
\max\{|x|_p, |y|_p\}$.

Any $p$-adic number $x\in\bq_p$, $x\neq 0$ can be uniquely represented in the form
\begin{equation}\label{canonic}
x=p^{ord_p(x)}(x_0+x_1p+x_2p^2+...),
\end{equation}
where $x_j$ are integers, $0\leq x_j\leq p-1$,
$x_0>0$, $j=0,1,2,\dots$ In this case $|x|_p=p^{-ord_p(x)}$.

We respectively denote the set of all {\it $p-$adic integers} and
{\it units} of $\bq_p$ by
$$\bz_p=\{x\in\bq_{p}: |x|_p\leq1\}, \quad \bz_p^{*}=\{x\in\bq_{p}: |x|_p=1\}.$$

Any nonzero $p-$adic number $x\in\bq_p$ has a unique representation
of the form $x =\cfrac{x^{*}}{|x|_p}$, where $x^{*}\in\bz_p^{*}$.

\begin{lem}[Hensel's Lemma, \cite{VVZ}]\label{Hensel}
Let $f(x)$ be polynomial whose the coefficients are $p-$adic
integers. Let $\theta$ be a $p-$adic integer such that for some
$i\geq 0$ we have
$$
f(\theta)\equiv 0 \ (mod \ p^{2i+1}),
$$
$$
f'(\theta)\equiv 0 \ (mod \ p^{i}), \quad f'(\theta)\not\equiv 0 \
(mod \ p^{i+1}).
$$
Then $f(x)$ has a unique $p-$adic integer root $x_0$ which satisfies
$x_0\equiv \theta\ (mod \ p^{i+1}).$
\end{lem}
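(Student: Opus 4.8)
The plan is to construct the desired root by a $p$-adic Newton iteration and to control it entirely through the ultrametric inequality. I would set $\theta_0=\theta$ and define recursively
\[
\theta_{n+1}=\theta_n-\frac{f(\theta_n)}{f'(\theta_n)}.
\]
The engine of every estimate is the non-Archimedean Taylor expansion: because $f$ has coefficients in $\bz_p$, for all $x,h\in\bz_p$ one may write
\[
f(x+h)=f(x)+f'(x)h+\sum_{k\geq 2}c_k(x)h^k,
\]
where the divided-difference coefficients $c_k(x)$ again lie in $\bz_p$ (they equal $\sum_j a_j\binom{j}{k}x^{j-k}$ for $f=\sum_j a_jx^j$). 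It is this integrality, rather than the naive coefficients $f^{(k)}/k!$, that keeps the $p$-adic bookkeeping clean.

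First I would rewrite the hypotheses as $ord_p(f'(\theta))=i$ and $ord_p(f(\theta))\geq 2i+1$, and introduce $\delta_n=|f(\theta_n)|_p/|f'(\theta_n)|_p^2$, noting $\delta_0\leq p^{-1}<1$. The core is a single simultaneous induction establishing (i) $|f'(\theta_n)|_p=p^{-i}$ and (ii) $\delta_{n+1}\leq\delta_n^2$. For (i), the correction $h_n=\theta_{n+1}-\theta_n$ satisfies $|h_n|_p=\delta_n|f'(\theta_n)|_p\leq p^{-(i+1)}$; feeding $h_n$ into the Taylor expansion of $f'$ gives $|f'(\theta_{n+1})-f'(\theta_n)|_p\leq|h_n|_p<|f'(\theta_n)|_p$, so the isosceles property of $|\cdot|_p$ freezes $|f'(\theta_{n+1})|_p=|f'(\theta_n)|_p$. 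For (ii), substituting $h=h_n$ makes the first two terms of the expansion cancel by the very definition of $h_n$, while every surviving term is bounded by $|h_n|_p^2$, so $|f(\theta_{n+1})|_p\leq|h_n|_p^2$ and hence $\delta_{n+1}\leq\delta_n^2$. Iterating yields $\delta_n\leq p^{-2^n}\to 0$, whence $|h_n|_p=\delta_n p^{-i}\to 0$; thus $(\theta_n)$ is Cauchy and, by completeness of $\bq_p$ and closedness of $\bz_p$, converges to some $x_0\in\bz_p$ with $f(x_0)=\lim f(\theta_n)=0$.

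It remains to settle the congruence and uniqueness. Since $|x_0-\theta|_p\leq\max_n|h_n|_p\leq p^{-(i+1)}$, we get $x_0\equiv\theta\ (mod\ p^{i+1})$. For uniqueness I would take a second integer root $x_0'$ with $x_0'\equiv\theta\ (mod\ p^{i+1})$, write $x_0'=x_0+e$ with $|e|_p\leq p^{-(i+1)}$, observe that $x_0\equiv\theta\ (mod\ p^{i+1})$ forces $|f'(x_0)|_p=p^{-i}$, and expand $0=f(x_0+e)=f'(x_0)e+\sum_{k\geq 2}c_k(x_0)e^k$; comparing valuations shows the linear term strictly dominates the rest unless $e=0$. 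I expect the main obstacle to be the bookkeeping in the simultaneous induction, above all verifying that $ord_p(f'(\theta_n))$ is frozen at exactly $i$ for every $n$ — this is what guarantees the quotients $f(\theta_n)/f'(\theta_n)$ remain in $\bz_p$ and what produces the quadratic decay of $\delta_n$; the integrality of the coefficients $c_k(x)$ is the quiet technical fact on which all these estimates rest.
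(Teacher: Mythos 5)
Your proof is correct and complete: the simultaneous induction freezing $ord_p(f'(\theta_n))$ at $i$ while forcing the quadratic decay $\delta_{n+1}\leq\delta_n^2$, the integrality of the divided-difference coefficients $c_k(x)$, and the valuation comparison in the uniqueness step are all sound. Note that the paper does not prove this lemma at all --- it imports it as a known result from the reference [VVZ] --- and your Newton-iteration argument is precisely the standard proof found in such sources, so there is nothing to contrast.
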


Let $p$ be a prime number, $q\in\bn$, $a\in\bbf_p$ with $a\neq \bar{0}.$
The number $a$ is called \textit{a $q$-th power
residue modulo $p$} if the the following equation
\begin{eqnarray}\label{kthresidue}
x^q=a
\end{eqnarray} has a solution in $\bbf_p$.

\begin{prop}[\cite{Ros}]\label{aisresidueofp}
Let $p$ be an odd prime number, $q\in\bn$, $d=(q,p-1),$ and $a\in\bbf_p$ with $a\neq\bar{0}.$ Then the following statements hold true:
\begin{itemize}
  \item [(i)] $a$ is the $q$-th power residue modulo $p$ if and only if one has
$a^{\frac{p-1}{d}}=\bar{1};$
  \item [(ii)] If $a^{\frac{p-1}{d}}=\bar{1}$ then the equation \eqref{kthresidue} has $d$ number of solutions in $\bbf_p$.
\end{itemize}
\end{prop}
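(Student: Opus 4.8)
The plan is to exploit the fact that, since $p$ is prime, the multiplicative group $\bbf_p^*$ is cyclic of order $p-1$. I would fix a generator (primitive root) $g$, so that every nonzero $a\in\bbf_p$ admits a unique representation $a=g^k$ with $0\leq k\leq p-2$. The key reduction is then to translate the multiplicative equation $x^q=a$ into an additive congruence in the exponents: writing $x=g^m$, the equation $x^q=a$ becomes $g^{mq}=g^k$, which holds precisely when $mq\equiv k \pmod{p-1}$. Everything else is bookkeeping about this linear congruence.

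With this reduction in hand, part (i) follows from the elementary theory of linear congruences. The congruence $mq\equiv k \pmod{p-1}$ is solvable in $m$ if and only if $d=(q,p-1)$ divides $k$. On the other hand, writing $a^{(p-1)/d}=g^{k(p-1)/d}$, this equals $\bar 1$ if and only if $(p-1)\mid k(p-1)/d$, i.e.\ if and only if $d\mid k$. Since the condition ``$a$ is a $q$-th power residue'' and the condition ``$a^{(p-1)/d}=\bar 1$'' are thus each equivalent to $d\mid k$, they are equivalent to one another, which is exactly the assertion of (i).

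For part (ii) I would invoke the standard count for linear congruences: when $d\mid k$, the congruence $mq\equiv k\pmod{p-1}$ has exactly $d$ incongruent solutions modulo $p-1$, and distinct residues $m$ modulo $p-1$ yield distinct elements $g^m$ of $\bbf_p^*$. Hence $x^q=a$ has exactly $d$ solutions in $\bbf_p$. Equivalently, one may phrase the whole argument group-theoretically: the map $\phi(x)=x^q$ is an endomorphism of $\bbf_p^*$ whose kernel $\{x:x^q=\bar 1\}$ has order $d$ and whose image is the unique subgroup of order $(p-1)/d$, namely $\{a:a^{(p-1)/d}=\bar 1\}$. Statement (i) identifies the $q$-th power residues with this image, and statement (ii) is then the remark that every nonempty fiber of $\phi$ is a coset of the kernel and so has cardinality $d$.

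Since this is a classical and self-contained fact, I do not anticipate a genuine obstacle. The only points requiring care are the exponent computation in the equivalence $a^{(p-1)/d}=\bar 1 \Leftrightarrow d\mid k$ and the precise counting of solutions of the linear congruence; both become routine once the cyclicity of $\bbf_p^*$ is used to pass from the multiplicative equation to exponents.
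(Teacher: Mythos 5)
Your proof is correct: the reduction via a primitive root $g$ to the linear congruence $mq\equiv k\pmod{p-1}$, the equivalence of both conditions in (i) with $d\mid k$, and the count of $d$ solutions in (ii) are all sound, and the group-theoretic rephrasing (image and kernel of $x\mapsto x^q$ on the cyclic group $\bbf_p^*$) is a clean packaging of the same facts. Note, however, that the paper itself offers no proof of this proposition -- it is imported as a known result with a citation to the textbook \cite{Ros} -- so there is nothing internal to compare against; your argument is precisely the classical primitive-root proof one finds in that reference, and it is complete as written.
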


Let $B(a,r)=\{x\in \bq_p : |x-a|_p< r\}$, where $a\in \bq_p$, $r>0$.
The $p$-adic logarithm is defined by series
$$
\log_p(x)=\log_p(1+(x-1))=\sum_{n=1}^{\infty}(-1)^{n+1}\dsf{(x-1)^n}{n},
$$
which converges for every $x\in B(1,1)$. And $p$-adic exponential is
defined by
$$
\exp_p(x)=\sum_{n=1}^{\infty}\dsf{x^n}{n!},
$$
which converges for every $x\in B(0,p^{-1/(p-1)})$.

\begin{lem}\label{21} \cite{Ko},\cite{VVZ} Let $x\in
B(0,p^{-1/(p-1)})$ then we have $$ |\exp_p(x)|_p=1,\ \ \
|\exp_p(x)-1|_p=|x|_p<1, \ \ |\log_p(1+x)|_p=|x|_p<p^{-1/(p-1)} $$
and $$ \log_p(\exp_p(x))=x, \ \ \exp_p(\log_p(1+x))=1+x. $$
\end{lem}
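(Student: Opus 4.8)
The plan is to reduce every assertion to a valuation count on the individual terms of the defining series, and then to invoke the non-Archimedean ``dominant term'' principle: if a convergent series in $\bq_p$ has a single term whose $p$-adic norm strictly exceeds the norm of every other term, then the norm of the sum equals the norm of that term (this is the ultrametric isosceles-triangle phenomenon). The two arithmetic inputs I would use are Legendre's formula $ord_p(n!)=\frac{n-s_p(n)}{p-1}$, where $s_p(n)$ denotes the sum of the base-$p$ digits of $n$, and the elementary bound $ord_p(n)\leq\frac{n-1}{p-1}$ for $n\geq1$. The latter follows from $n\geq p^{\,ord_p(n)}\geq 1+(p-1)\,ord_p(n)$, the last step being Bernoulli's inequality applied to $p^k=(1+(p-1))^k$. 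Throughout, the hypothesis $x\in B(0,p^{-1/(p-1)})$ is used in the form $ord_p(x)>\frac{1}{p-1}$.

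First I would treat $\exp_p$. For $n\geq1$ the $n$-th term has valuation
\begin{equation*}
ord_p\Big(\frac{x^n}{n!}\Big)=n\,ord_p(x)-\frac{n-s_p(n)}{p-1}>\frac{n}{p-1}-\frac{n-1}{p-1}=\frac{1}{p-1}>0,
\end{equation*}
where I used $s_p(n)\geq1$. Hence the constant term $1$ (valuation $0$) strictly dominates all others, and the dominant-term principle gives $|\exp_p(x)|_p=1$. Discarding this constant term, the series for $\exp_p(x)-1$ begins with $x$; the same estimate shows every term with $n\geq2$ has valuation strictly larger than $ord_p(x)$, so the $n=1$ term dominates and $|\exp_p(x)-1|_p=|x|_p$.

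Next I would handle $\log_p(1+x)=\sum_{n\geq1}(-1)^{n+1}\frac{x^n}{n}$. For $n\geq2$ I would write
\begin{equation*}
ord_p\Big(\frac{x^n}{n}\Big)-ord_p(x)=(n-1)\,ord_p(x)-ord_p(n)>\frac{n-1}{p-1}-ord_p(n)\geq0,
\end{equation*}
using $ord_p(n)\leq\frac{n-1}{p-1}$. Thus every term with $n\geq2$ has valuation strictly greater than that of the leading term $x$, and the dominant-term principle again yields $|\log_p(1+x)|_p=|x|_p$. The remaining inequality $|\log_p(1+x)|_p<p^{-1/(p-1)}$ is then immediate from the hypothesis $|x|_p<p^{-1/(p-1)}$.

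Finally, for the two functional identities I would start from the formal power-series identities $\log(\exp(X))=X$ and $\exp(\log(1+X))=1+X$, which hold in $\bq[[X]]$ and hence coefficient-wise in $\bq_p[[X]]$. The norm computations just established guarantee that the arguments stay inside the relevant domains of convergence: $|\exp_p(x)-1|_p=|x|_p<1$ places $\exp_p(x)$ in $B(1,1)$, the domain of $\log_p$, while $|\log_p(1+x)|_p=|x|_p<p^{-1/(p-1)}$ places $\log_p(1+x)$ in $B(0,p^{-1/(p-1)})$, the domain of $\exp_p$. I expect the main obstacle to lie precisely here: transferring the formal identity to an analytic one requires substituting one convergent series into another and rearranging the resulting double series, which is legitimate in the non-Archimedean setting only because the general term tends to $0$, giving unconditional convergence. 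Verifying this uniform lower bound on the valuations of the composed terms (so that reordering does not alter the sum) is the only genuinely non-formal step; once it is in place, the two identities follow at once from their formal counterparts.
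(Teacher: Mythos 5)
The paper gives no proof of this lemma at all---it is quoted from the references [Ko] and [VVZ]---and your argument is precisely the standard one found there: Legendre's formula $ord_p(n!)=\frac{n-s_p(n)}{p-1}$ together with the ultrametric dominant-term principle for the three norm identities, and formal power-series identities plus justified substitution for the two functional equations; the valuation computations you give are correct as written. The single step you flag but do not execute (legitimacy of rearranging the composed double series) is routine with the tools you already set up: a composed term of total degree $N$ in $x$ has valuation at least $N\,ord_p(x)-\frac{N-1}{p-1}=N\left(ord_p(x)-\frac{1}{p-1}\right)+\frac{1}{p-1}$, which tends to $+\infty$ with $N$, so only finitely many terms exceed any given norm, every arrangement converges to the same sum, and the formal identities transfer; with that two-line estimate added, your proposal is a complete proof.
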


In what follows we will use the following

\begin{lem}[\cite{KMM}]\label{pr} If $|a_i|_p\leq 1$, $|b_i|_p\leq 1$, $i=1,\dots,n$, then
\begin{equation*}
\bigg|\prod_{i=1}^{n}a_i-\prod_{i=1}^n b_i\bigg|_p\leq \max_{i\leq
i\leq n}\{|a_i-b_i|_p\}
\end{equation*}
\end{lem}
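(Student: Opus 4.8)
The plan is to prove this by a telescoping decomposition combined with the strong (ultrametric) triangle inequality, which is the defining feature of the $p$-adic norm. First I would introduce the hybrid products
$$
P_k = a_1\cdots a_k\, b_{k+1}\cdots b_n, \qquad k=0,1,\dots,n,
$$
so that $P_0=\prod_{i=1}^n b_i$ and $P_n=\prod_{i=1}^n a_i$. The difference of the two full products then becomes the telescoping sum $\prod_{i=1}^n a_i-\prod_{i=1}^n b_i=\sum_{k=1}^n (P_k-P_{k-1})$, and a direct computation gives the clean expression $P_k-P_{k-1}=a_1\cdots a_{k-1}(a_k-b_k)\,b_{k+1}\cdots b_n$, in which only a single factor $a_k-b_k$ carries the difference while the remaining factors are the original $a_i$ and $b_i$.

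The second step is to bound each summand. Since the $p$-adic norm is multiplicative and the hypotheses give $|a_i|_p\leq 1$, $|b_i|_p\leq 1$ for all $i$, every factor other than $a_k-b_k$ contributes at most $1$, so
$$
|P_k-P_{k-1}|_p=\Big(\prod_{i<k}|a_i|_p\Big)\,|a_k-b_k|_p\,\Big(\prod_{i>k}|b_i|_p\Big)\leq |a_k-b_k|_p.
$$
The final step applies the ultrametric inequality to the telescoping sum, which bounds the norm of a sum by the maximum of the norms of its terms:
$$
\Big|\prod_{i=1}^n a_i-\prod_{i=1}^n b_i\Big|_p=\Big|\sum_{k=1}^n (P_k-P_{k-1})\Big|_p\leq \max_{1\leq k\leq n}|P_k-P_{k-1}|_p\leq \max_{1\leq k\leq n}|a_k-b_k|_p,
$$
which is exactly the assertion.

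There is no serious obstacle here; the only point requiring care is recognizing that the boundedness hypotheses $|a_i|_p\leq 1$ and $|b_i|_p\leq 1$ are precisely what prevents the surrounding factors in each telescoping term from inflating the norm beyond $|a_k-b_k|_p$. Without boundedness the estimate would acquire extra factors and fail. An equivalent and equally short route is induction on $n$: writing $\prod_{i=1}^n a_i-\prod_{i=1}^n b_i=a_n\big(\prod_{i=1}^{n-1}a_i-\prod_{i=1}^{n-1}b_i\big)+(a_n-b_n)\prod_{i=1}^{n-1}b_i$ and applying the ultrametric inequality together with the inductive hypothesis yields the same bound in one line.
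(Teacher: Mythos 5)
Your proof is correct. Note that the paper itself gives no proof of this lemma at all: it is imported as a known result with a citation to \cite{KMM}, so there is no in-paper argument to compare against. Your telescoping decomposition $P_k=a_1\cdots a_k\,b_{k+1}\cdots b_n$, the factorization $P_k-P_{k-1}=a_1\cdots a_{k-1}(a_k-b_k)b_{k+1}\cdots b_n$, the use of multiplicativity of $|\cdot|_p$ together with the hypotheses $|a_i|_p\leq 1$, $|b_i|_p\leq 1$, and the final application of the ultrametric inequality constitute the standard proof of this estimate (your one-line induction is the same argument in disguise), and it correctly fills the gap the citation leaves.
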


Denote
\begin{equation}\label{Exp}
\ce_p=\{x\in\bq_p: \ |x|_p=1, \ \ |x-1|_p<p^{-1/(p-1)}\}.
\end{equation}

So, from Lemma \ref{21} one concludes that if $x\in\ce_p$, then
there is an element $h\in B(0,p^{-1/(p-1)})$ such that
$x=\exp_p(h)$.

Note that the basics of $p$-adic analysis, $p$-adic mathematical physics
are explained in \cite{Ko,Ma,S,Ro,VVZ}.

\subsection{$p$-adic measure}

Let $(X,\cb)$ be a measurable space, where $\cb$ is an algebra of
subsets $X$. A function $\m:\cb\to \bq_p$ is said to be a {\it
$p$-adic measure} if for any $A_1,\dots,A_n\subset\cb$ such that
$A_i\cap A_j=\emptyset$ ($i\neq j$) the equality holds
$$
\mu\bigg(\bigcup_{j=1}^{n} A_j\bigg)=\sum_{j=1}^{n}\mu(A_j).
$$

A $p$-adic measure is called a {\it probability measure} if
$\mu(X)=1$.  One of the important condition (which was already
invented in the first Monna--Springer theory of non-Archimedean
integration \cite{Mona}) is boundedness, namely a $p$-adic
probability measure $\m$ is called {\it bounded} if $\sup\{|\m(A)|_p
: A\in \cb\}<\infty $. We should stress that the boundedness
condition by itself provides a fruitful
integration theory (see for example \cite{Kh07}). Note that, in
general, a $p$-adic probability measure need not be bounded
\cite{K3,KL,Ko}. For more detail information about $p$-adic measures
we refer to \cite{AKh},\cite{K3},\cite{KhN},\cite{Ro}.

\subsection{Cayley tree}

Let $\Gamma^k_+ = (V,L)$ be a semi-infinite Cayley tree of order
$k\geq 1$ with the root $x^0$ (whose each vertex has exactly $k+1$
edges, except for the root $x^0$, which has $k$ edges). Here $V$ is
the set of vertices and $L$ is the set of edges. The vertices $x$
and $y$ are called {\it nearest neighbors} and they are denoted by
$l=<x,y>$ if there exists an edge connecting them. A collection of
the pairs $<x,x_1>,\dots,<x_{d-1},y>$ is called a {\it path} from
the point $x$ to the point $y$. The distance $d(x,y), x,y\in V$, on
the Cayley tree, is the length of the shortest path from $x$ to $y$.

Recall a coordinate structure in $\G^k_+$:  every vertex $x$ (except
for $x^0$) of $\G^k_+$ has coordinates $(i_1,\dots,i_n)$, here
$i_m\in\{1,\dots,k\}$, $1\leq m\leq n$ and for the vertex $x^0$ we
put $(0)$.  Namely, the symbol $(0)$ constitutes level 0, and the
sites $(i_1,\dots,i_n)$ form level $n$ ( i.e. $d(x^0,x)=n$) of the
lattice.


Let us set
$$ W_n=\{x\in V| d(x,x^0)=n\}, \ \ \
V_n=\bigcup_{m=1}^n W_m, \ \ L_n=\{l=<x,y>\in L | x,y\in V_n\}.
$$
For $x\in \G^k_+$, let us denote
\begin{equation}\label{S(x)}
S(x)=\{y\in W_{n+1} :  d(x,y)=1 \} \ \ x\in W_n.
\end{equation} here
This set is called a set of {\it direct successors} of $x$.

Using the coordinate system one can define translations of $\G^k_+$
by
\begin{equation}\label{transl}
\tau_{(i_1,\dots,i_n)}(j_1,\dots,j_m)=(i_1,\dots,i_n,j_1,\dots,j_m).
\end{equation}

\subsection{Generalized $p$-adic quasi Gibbs
measure}

In this section we define a notion of generalized $p$-adic quasi
Gibbs measure in a general setting, i.e. for arbitrary
nearest-neighbor models (see \cite{Mq2}).

Let $\Phi=\{0,1,2,\cdots,q-1\}$, here $q\geq 1$, ($\Phi$ is called a
{\it state space}) and is assigned to the vertices of the tree
$\G^k_+=(V,\Lambda)$. A configuration $\s$ on $V$ is then defined as
a function $x\in V\to\s(x)\in\Phi$; in a similar manner one defines
configurations $\s_n$ and $\w$ on $V_n$ and $W_n$, respectively. The
set of all configurations on $V$ (resp. $V_n$, $W_n$) coincides with
$\Omega=\Phi^{V}$ (resp. $\Omega_{V_n}=\Phi^{V_n},\ \
\Omega_{W_n}=\Phi^{W_n}$). One can see that
$\Om_{V_n}=\Om_{V_{n-1}}\times\Om_{W_n}$. Using this, for given
configurations $\s_{n-1}\in\Om_{V_{n-1}}$ and $\w\in\Om_{W_{n}}$ we
define their concatenations  by
$$
(\s_{n-1}\vee\w)(x)= \left\{
\begin{array}{ll}
\s_{n-1}(x), \ \ \textrm{if} \ \  x\in V_{n-1},\\
\w(x), \ \ \ \ \ \ \textrm{if} \ \ x\in W_n.\\
\end{array}
\right.
$$
It is clear that $\s_{n-1}\vee\w\in \Om_{V_n}$.

Let $H_n$ be a Hamiltonian given by
\begin{equation}\label{Ham}
H_n(\s)=\sum_{<x,y>\in L_n}\Psi_{x,y}(\s),  \ \ \s\in\Om_{V_n}, \
n\in\mathbb{N},
\end{equation}
where $\Psi_{x,y}(\s):\Om_{V_n}\to\bz$ is a given function for every
$x,y$.

A construct of a  generalized $p$-adic quasi Gibbs measure
corresponding to the model is given below.

Assume that  $\h: V\setminus\{x^{(0)}\}\to\bq_p^{\Phi}$ is a
mapping, i.e. $\h_x=(h_{0,x},h_{1,x},\dots,h_{q-1,x})$, where
$h_{i,x}\in\bq_p$ ($i\in\Phi$) and $x\in V\setminus\{x^{(0)}\}$.
Given $n\in\bn$, we consider a $p$-adic probability measure
$\m^{(n)}_{\h,\rho}$ on $\Om_{V_n}$ defined by
\begin{equation}\label{mu}
\mu^{(n)}_{\h,\rho}(\s)=\frac{1}{Z_{n,\rho}^{(\h)}}{\rho}^{H_n(\s)}\prod_{x\in
W_n}h_{\s(x),x}
\end{equation}
Here, $\rho\in\bq_p$, $\s\in\Om_{V_n}$, and $Z_{n,\rho}^{(\h)}$ is
the corresponding normalizing factor called a {\it partition
function} given by
\begin{equation}\label{ZN1}
Z_{n,\rho}^{(\h)}=\sum_{\s\in\Omega_{V_n}}{\rho}^{H_n(\s)}\prod_{x\in
W_n}h_{\s(x),x}.
\end{equation}

Note that, in general, $Z_{n,\rho}^{(\h)}$ could be zero, but we
will show that, in a Potts model case, the partition function is not
zero. In this paper, we are interested in a construction of an
infinite volume distribution with given finite-dimensional
distributions. More exactly, we would like to find a
$p$-adic probability measure $\m$ on $\Om$ which is compatible
with given ones $\m_{\h,\rho}^{(n)}$, i.e.
\begin{equation}\label{CM}
\m(\s\in\Om: \s|_{V_n}=\s_n)=\m^{(n)}_{\h,\rho}(\s_n), \ \ \
\textrm{for all} \ \ \s_n\in\Om_{V_n}, \ n\in\bn.
\end{equation}

In general, \`{a} priori the existence such a kind of measure $\m$
is not known, since there is not much information on topological
properties, such as compactness, of the set of all $p$-adic measures
defined even on compact spaces \cite{F4}. Note that certain
properties of the set of $p$-adic measures has been studied in
\cite{kas2,kas3}, but those properties are not enough to prove the
existence of the limiting measure. Therefore, at a moment, we are going to employ
so called the $p$-adic Kolmogorov's extension Theorem (see
\cite{GMR},\cite{KL}) which is based on {\it compatibility
condition} for the measures $\m_{\h,\rho}^{(n)}$, $n\geq 1$, i.e.
\begin{equation}\label{comp}
\sum_{\w\in\Om_{W_n}}\m^{(n)}_{\h,\rho}(\s_{n-1}\vee\w)=\m^{(n-1)}_{\h,\rho}(\s_{n-1}),
\end{equation}
for any $\s_{n-1}\in\Om_{V_{n-1}}$. This condition according to the
theorem implies the existence of a unique $p$-adic measure
$\m_{\h,\rho}$ defined on $\Om$ with a required condition
\eqref{CM}. Such a measure $\m_{\h,\rho}$ is said to be {\it a
genaralized $p$-adic quasi Gibbs measure} corresponding to the
model. Note that more general theory of $p$-adic measures has been
developed in \cite{kas1}.

By $GQ\cg_\rho(H)$ we denote the set of all generalized $p$-adic
quasi Gibbs measures associated with functions $\h=\{\h_x,\ x\in
V\}$. If there are at least two distinct generalized $p$-adic quasi
Gibbs measures such that at least one of them is unbounded,
then we say that \textit{a phase transition}
occurs. If there are two
different functions $\sb$ and $\h$ defined on $\bn$ such that there
exist the corresponding measures $\m_{\sb,\rho}$, $\m_{\h,\rho}$,
and they are bounded, then we say there is a \textit{quasi phase
transition}. If one finds two different functions $\sb$
and $\h$ defined on $\bn$ such that there exist the corresponding
measures $\m_{\r,\sb}$ and $\m_{\r,\h}$, for which one is bounded, another one
is unbounded, and there is a sequence of sets $\{A_n\}$
such that $A_n\in\Om_{V_n}$ with $|\m_{\r,\s}(A_n)|_p\to 0$,
$|\m_{\r,\h}(A_n)|_p\to\infty$ as $n\to\infty$, then we say that there
occurs a \textit{strong phase transition} (see \cite{Mq2}).

\section{$p$-adic Potts model and its $p$-adic quasi Gibbs measures}

In this section we consider the $p$-adic Potts model where spin
takes values in the set $\Phi=\{0,1,2,\cdots,q-1\}$. In what follows, we will always
assume that $q\geq 3$. In this case,
for every $\s\in\Om_{V_n}$ one has
$\Psi_{x,y}(\s)=\delta_{\s(x),\s(y)}$, i.e.  the Hamiltonian of {\it
$q$-state Potts} model has the following form
\begin{equation}\label{Potts}
H_n(\s)=\sum_{<x,y>\in L_n}\delta_{\s(x),\s(y)},  \ \
\s\in\Om_{V_n}, \  n\in\mathbb{N},
\end{equation}
where $\delta$ is the Kronecker symbol. Note that when $q=1$, then
the corresponding model reduces to the $p$-adic Ising model. Such a
model was investigated in \cite{GMR,KM}.

Given $\rho\in\bq_p$ one can construct generalized $p$-adic quasi
Gibbs measures corresponding to the $q$-state Potts model. In what
follows, we are interested in the existence of phase transition for
the Potts model.

\begin{rem} Note that if one takes $\rho=p$, then the generalized
$p$-adic quasi Gibbs measure reduces to the {\it $p$-adic quasi
Gibbs measure} (see \cite{Mq}). If one takes $\rho\in\ce_p$ with
$h_x\in \ce_p$, then the defined measure reduces to {\it $p$-adic
Gibbs measure} (see \cite{MR1}).
\end{rem}

\begin{rem} Note that in \cite{GMR,KM,Mq,MR1,MR2} phase transitions
in class of  $p$-adic quasi Gibbs measures, $p$-adic Gibbs measures
for Ising and Potts models on Cayley tree have been studied over the
Cayley tree of order two. When a state space $\Phi$ is countable,
the corresponding $p$-adic Gibbs measures (resp. generalized $p$-adic quasi Gibbs measures)
have been investigated in
\cite{KMM,M} (resp. \cite{M1}).
\end{rem}

Using the same argument as \cite{MR1,KMM} we can prove the following

\begin{thm}\label{comp1} The measures $\m^{(n)}_{\h,\rho}$, $
n=1,2,\dots$ (see \eqref{mu}) associated with $q$-state Potts model
\eqref{Potts} satisfy the compatibility condition \eqref{comp} if
and only if for any $n\in \bn$ the following equation holds:
\begin{equation}\label{eq1}
\hat h_{x}=\prod_{y\in S(x)}{\mathbf{F}}(\hat \h_{y};\r),
\end{equation}
here and below a vector $\hat \h=(\hat h_1,\dots,\hat
h_{q-1})\in\bq_p^{q-1}$ is defined by a vector
$\h=(h_0,h_1,\dots,h_{q-1})\in\bq_p^{q}$ as follows
\begin{equation}\label{H}
\hat h_i=\frac{h_i}{h_0}, \ \ \ i=1,2,\dots,q-1
\end{equation}
and mapping ${\mathbf{F}}:\bq_p^{q-1}\times\bq_p\to\bq_p^{q-1}$ is
defined by
${\mathbf{F}}(\xb;\r)=(F_1(\xb;\r),\dots,F_{q-1}(\xb;\r))$ with
\begin{equation}\label{eq2}
F_i(\xb;\r)=\frac{(\r-1)x_i+\sum\limits_{j=1}^{q-1}x_j+1}
{\sum\limits_{j=1}^{q-1}x_j+\r}, \ \ \xb=\{x_i\}\in\bq_p^{q-1}, \ \
i=1,2,\dots,q-1.
\end{equation}
\end{thm}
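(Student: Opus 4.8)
The plan is to verify the compatibility condition \eqref{comp} by hand, exploiting the acyclic structure of the tree, and then to convert the resulting system of equations into the recursion \eqref{eq1} by passing to ratios. First I would split the Hamiltonian according to the step from $V_{n-1}$ to $V_n$: since the edges lying in $L_n$ but not in $L_{n-1}$ are exactly those joining a vertex $x\in W_{n-1}$ to its direct successors $y\in S(x)$, one has
\[
H_n(\s_{n-1}\vee\w)=H_{n-1}(\s_{n-1})+\sum_{x\in W_{n-1}}\sum_{y\in S(x)}\delta_{\s_{n-1}(x),\w(y)}.
\]
Inserting this into the left-hand side of \eqref{comp} through \eqref{mu} pulls the factor $\r^{H_{n-1}(\s_{n-1})}$ out of the sum over $\w$. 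The key point is that, because the Cayley tree has no cycles, every $y\in W_n$ has a unique predecessor in $W_{n-1}$, so the sum over $\w\in\Om_{W_n}$ factorizes into independent sums over the single spin $\w(y)$ at each successor $y$. Evaluating each inner sum as $\sum_{i\in\Phi}\r^{\delta_{\s_{n-1}(x),i}}h_{i,y}=(\r-1)h_{\s_{n-1}(x),y}+\sum_{i\in\Phi}h_{i,y}$ and cancelling the common factor $\r^{H_{n-1}(\s_{n-1})}$, the condition \eqref{comp} becomes
\[
\frac{1}{Z_{n,\r}^{(\h)}}\prod_{x\in W_{n-1}}\prod_{y\in S(x)}\Big[(\r-1)h_{\s_{n-1}(x),y}+\sum_{i\in\Phi}h_{i,y}\Big]=\frac{1}{Z_{n-1,\r}^{(\h)}}\prod_{x\in W_{n-1}}h_{\s_{n-1}(x),x},
\]
required to hold for every $\s_{n-1}\in\Om_{V_{n-1}}$.

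To eliminate the partition functions I would pass to ratios. Fixing a vertex $x\in W_{n-1}$ and comparing the two configurations that coincide off $x$ and take the values $s\in\{1,\dots,q-1\}$ and $0$ at $x$, every factor cancels except the one attached to $x$, leaving
\[
\hat h_{s,x}=\frac{h_{s,x}}{h_{0,x}}=\prod_{y\in S(x)}\frac{(\r-1)h_{s,y}+\sum_{i\in\Phi}h_{i,y}}{(\r-1)h_{0,y}+\sum_{i\in\Phi}h_{i,y}}.
\]
Dividing the numerator and denominator of each factor by $h_{0,y}$ and splitting off the $i=0$ term (so that $\hat h_{0,y}=1$ contributes the constant $1$ while the remaining terms give $\sum_{j=1}^{q-1}\hat h_{j,y}$) turns the right-hand side into $\prod_{y\in S(x)}F_s(\hat\h_y;\r)$ with $F_s$ exactly as in \eqref{eq2}. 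Reading this over $s=1,\dots,q-1$ and over all $x$ is precisely \eqref{eq1}, which proves the ``only if'' direction.

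For the ``if'' direction I would reverse the argument. Assuming \eqref{eq1}, the ratio identity above gives $a_{j,x}/a_{0,x}=h_{j,x}/h_{0,x}$ for the quantities $a_{j,x}:=\prod_{y\in S(x)}\big[(\r-1)h_{j,y}+\sum_{i\in\Phi}h_{i,y}\big]$, so $a_{j,x}=c_x h_{j,x}$ for a constant $c_x$ not depending on $j$. Hence the left-hand side of \eqref{comp} equals a single constant $K=Z_{n-1,\r}^{(\h)}\big(\prod_{x\in W_{n-1}}c_x\big)/Z_{n,\r}^{(\h)}$ times its right-hand side, uniformly in $\s_{n-1}$; summing \eqref{comp} over all $\s_{n-1}\in\Om_{V_{n-1}}$ and using that $\m^{(n)}_{\h,\r}$ and $\m^{(n-1)}_{\h,\r}$ are probability measures forces $K=1$, which is \eqref{comp}. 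I expect the only genuine obstacle to be the factorization in the first paragraph: it is where the tree's absence of loops is essential, and it is what legitimizes replacing the sum over $\Om_{W_n}$ by a product of one-vertex sums. The sole nontrivial computation is the algebraic reduction of the successor factor to the map $\mathbf F$; everything else is bookkeeping.
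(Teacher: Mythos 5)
Your proof is correct, and it is essentially the argument the paper relies on (it defers to \cite{MR1,KMM}, where the same scheme appears): decompose $H_n$ over the new edges, factor the sum over $\Om_{W_n}$ vertex by vertex using the tree structure, pass to ratios at a single vertex of $W_{n-1}$ to obtain \eqref{eq1}, and for sufficiency use the proportionality constants $a_\h(x)$ together with normalization — exactly the mechanism the paper itself re-derives in \eqref{aN1}--\eqref{ZN2}. The only implicit assumptions (nonvanishing of $h_{0,y}$, of the bracketed factors, and of $\rho$) are the same ones the paper makes tacitly, so nothing is missing.
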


\begin{rem}\label{r3} In what follows, without loss of generality, we may
assume that $h_0=1$. Otherwise, in \eqref{mu} we multiply and divide
the expression on the right hand side by $\prod_{x\in W_n}h_{0,x}$,
and after replacing $h_i$ by $h_i/h_0$, we get the desired equality.
\end{rem}

\section{Existence of generalized $p$-adic quasi Gibbs measures}

In this section we are going to establish the existence of
generalized $p$-adic quasi Gibbs measures for the Potts model
\eqref{Potts} on a Cayley tree of order 3, i.e. $k=3$.
In the sequel, for the sake of simplicity
we will always assume that $p>3$.

Recall that a function $\h=\{\h_x\}_{x\in V\setminus\{x^0\}}$ is
called {\it translation-invariant} if $\h_{\tau_x(y)}=\h_{y}$ for
all $x,y\in V\setminus\{x^0\}$. A $p$-adic measure $\m_\h$,
corresponding to a translation-invariant function $\h$, is called
{\it translation-invariant generalized $p$-adic quasi Gibbs
measure}. Note that the translational-invariance of
$\h=\{\h_x\}_{x\in V\setminus\{x^0\}}$ implies that $\h_x=\h_y$ for
all $x,y\in V\setminus\{x^0\}$.

 Let us first restrict ourselves to
the description of translation-invariant solutions of \eqref{eq1},
namely $\h_x=\h(=(h_0,h_1,\dots,h_q))$ for all $x\in V$. Then
\eqref{eq1} can be rewritten as follows
\begin{equation}\label{eq11}
\hat h_{i}=\bigg(\frac{(\theta-1)\hat h_{i}+\sum_{j=1}^{q}\hat
h_j+1} {\sum_{j=1}^{q}\hat h_j+\theta}\bigg)^3, \ \ i=1,2,\dots,q.
\end{equation}

One can see that $(\underbrace{1,\dots,1,h}_m,1,\dots,1)$ is an
invariant line for \eqref{eq11} ($m=1,\dots,q-1$). Over such a line
equation \eqref{eq11} reduces to the following fixed point problem
\begin{equation}\label{eq12}
x=f_\r(x),
\end{equation}
where
\begin{equation}\label{f(x)}
f_\r(x)=\bigg(\frac{\r x+q-1}{x+\r+q-2}\bigg)^3.
\end{equation}

A simple calculation shows that \eqref{eq12} has a form
$$
(x-1)(x^3-Ax^2-Bx+C)=0,
$$
where

\begin{eqnarray}\label{ABC}
\left\{
\begin{array}{lll}
A=\r^3-3\r-3q+5,\\[2mm]
B=\r(\r-3)^2+9q+3\r q(\r-2)-3q^2-7, \\[2mm]
C=(q-1)^3.
\end{array}
\right.
\end{eqnarray}

 Hence, $x_0=1$ solution defines a generalized $p$-adic quasi
Gibbs measure $\m_0$.

Now we are interested in finding other solutions of \eqref{eq12},
which means we need to solve the following one
\begin{equation}\label{eq13}
x^3-Ax^2-Bx+C=0.
\end{equation}

By performing standard substitution $x=z+A/3$, one can reduce the
equation \eqref{eq13} to
\begin{equation}\label{eq-alp1}
z^3+\a z=\b,
\end{equation}
where
\begin{equation}\label{alpha-beta1}
\a=-\bigg(B+\frac{A^2}{3}\bigg), \ \
\b=\frac{2}{3^3}A^3+\frac{AB}{3}-C.
\end{equation}

Now we are going to consider two regimes with respect to $\rho$.
Namely, (i) $\rho\in\ce_p$, and (ii) $|\rho|_p<1$.

\subsection{Regime $\r\in\ce_p$}. In this case, due to Lemma \ref{21} one has $|\r|_p=1$ and
$|\r-1|_p<1$.

Note that in this regime, the measure corresponding to the
solution $x=1$ is always $p$-adic Gibbs measure.

Direct calculations show that one has
\begin{eqnarray}\label{A}
A&=&3+3(1-\r)+(\r^3-1)-3q,\\[2mm] \label{alpha2}
 \alpha
&=&\frac{1}{3}(\r-1)^2\bigg(-9+9q-9(\r-1)+6q(\r-1)-9(\r-1)^2\nonumber \\[2mm]
&&-6(\r-1)^3-(\r-1)^4\bigg)\\[2mm] \label{beta2}
\beta&=&\frac{1}{27}(\r-1)^3\bigg(54-81q+27q^2+81(\r-1)-81q(\r-1)\nonumber \\[2mm]
&&+108(\r-1)^2-81q(\r-1)^2+81(\r-1)^3-18q(\r-1)^3\nonumber \\[2mm]
&&+54(\r-1)^4+18(\r-1)^5+2(\r-1)^6 \bigg)
\end{eqnarray}

In \cite{MR1} we have proved the following

\begin{thm}[\cite{MR1}]\label{Un} Let $|q|_p=1$ and $\r\in\ce_p$. Then there is a unique
$p$-adic Gibbs measure for the Potts model over the Cayley tree of
order $k$ ($k\geq 1$).
\end{thm}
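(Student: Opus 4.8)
The plan is to show that, under the hypotheses $|q|_p=1$ and $\r\in\ce_p$, the only solution of the fixed point equation \eqref{eq12} in the relevant $p$-adic domain is $x=1$, so that the measure $\m_0$ is the unique translation-invariant $p$-adic Gibbs measure; one then appeals to the general uniqueness machinery to extend this to all Gibbs measures. Since the factorization $(x-1)(x^3-Ax^2-Bx+C)=0$ already isolates $x=1$, the heart of the matter is to prove that the cubic \eqref{eq13}, equivalently the depressed cubic \eqref{eq-alp1}, has no admissible root other than the one corresponding to $x=1$. First I would exploit the regime $\r\in\ce_p$: by Lemma~\ref{21} we have $|\r-1|_p<p^{-1/(p-1)}$, so $\r-1$ is a genuinely small quantity, and the expansions \eqref{A}, \eqref{alpha2}, \eqref{beta2} show that $\a$ and $\b$ are divisible by high powers of $(\r-1)$. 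Concretely $\a=(\r-1)^2(\cdots)$ and $\b=(\r-1)^3(\cdots)$, where the parenthetical factors are $p$-adic integers whose reductions modulo $p$ I would compute using $|q|_p=1$.

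The key step is a valuation (Newton polygon) analysis of the depressed cubic $z^3+\a z=\b$. Because a $p$-adic Gibbs measure requires the solution $x$ to lie in $\ce_p$, the corresponding $z=x-A/3$ must have controlled norm; I would first record the admissible range of $|z|_p$ forced by $x\in\ce_p$ and by the formula \eqref{A} for $A$. Writing $\mathrm{ord}_p(\a)\geq 2\,\mathrm{ord}_p(\r-1)$ and $\mathrm{ord}_p(\b)\geq 3\,\mathrm{ord}_p(\r-1)$, I would compare the three valuations $3\,\mathrm{ord}_p(z)$, $\mathrm{ord}_p(\a)+\mathrm{ord}_p(z)$, and $\mathrm{ord}_p(\b)$ governing the terms $z^3$, $\a z$, $\b$. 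The balancing of these slopes pins down $\mathrm{ord}_p(z)=\mathrm{ord}_p(\r-1)$ as the only possibility for a nonzero root, and then a Hensel-type argument (Lemma~\ref{Hensel}) applied to the normalized cubic shows that any such root in fact reassembles, via $x=z+A/3$, into $x=1$; hence \eqref{eq13} contributes no new measure. Equivalently, one checks that the reduction of the cubic modulo $p$, after clearing the powers of $(\r-1)$, is a polynomial whose only root is the one already accounted for, so no additional lift exists.

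The main obstacle I anticipate is the valuation bookkeeping in the inner factors of \eqref{alpha2} and \eqref{beta2}: I must verify that, when $|q|_p=1$, the leading parenthetical coefficients are $p$-adic units (so that $\mathrm{ord}_p(\a)$ and $\mathrm{ord}_p(\b)$ are exactly $2\,\mathrm{ord}_p(\r-1)$ and $3\,\mathrm{ord}_p(\r-1)$, not larger), since an accidental extra factor of $p$ would ruin the slope comparison and could admit a spurious root. The condition $p>3$ guarantees that the factors of $3$ and $27$ in the denominators are invertible, which is what allows the substitution $x=z+A/3$ and the reduction to be carried out without loss of information. Once the uniqueness of translation-invariant solutions is secured, the passage to uniqueness of the full Gibbs measure follows by the same contraction/fixed-point argument used in \cite{MR1}, since \eqref{eq1} is driven by the single map $\mathbf{F}$ and the tree structure forces all coordinates to the common value $1$.
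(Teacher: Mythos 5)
First, a point of fact: the paper does not prove this statement at all --- it is imported verbatim from \cite{MR1} --- so there is no internal proof to compare yours against, and I can only judge your argument on its own terms. It has genuine gaps, the most basic one being scope. The theorem is asserted for the Cayley tree of every order $k\geq 1$ and for \emph{all} $p$-adic Gibbs measures, whereas your entire argument runs through the $k=3$ apparatus (the cubic \eqref{eq13}, the shift $x=z+A/3$, the expansions \eqref{A}, \eqref{alpha2}, \eqref{beta2}) and only treats translation-invariant solutions; for general $k$ the fixed-point polynomial has degree $k$ and none of that bookkeeping exists. Your final sentence, which is supposed to upgrade translation-invariant uniqueness to uniqueness among all Gibbs measures, defers to ``the same contraction/fixed-point argument used in \cite{MR1}'' --- but the statement you are proving \emph{is} the theorem of \cite{MR1}, so this is circular; moreover that contraction argument, once written out, proves the whole theorem by itself and makes your cubic analysis superfluous.

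Second, the mechanism you propose for disposing of the cubic is wrong. Under $|q|_p=1$ the cubic \eqref{eq13} can genuinely have roots different from $1$: the paper's Theorem \ref{1-2e} exhibits, under \eqref{qq} and further hypotheses, three distinct such roots, which produce generalized $p$-adic quasi Gibbs measures. So no Hensel or Newton-polygon argument can make roots of \eqref{eq13} ``reassemble into $x=1$''. What is true, and what the theorem rests on, is that these roots never lie in $\ce_p$, hence never yield a $p$-adic \emph{Gibbs} measure: since $|\a|_p\leq|\r-1|_p^2<1$ and $|\b|_p\leq|\r-1|_p^3<1$, the ultrametric inequality forces every root of \eqref{eq-alp1} to satisfy $|z|_p<1$, and then, because $A/3-1=(1-\r)+(\r^3-1)/3-q$ has norm $|q|_p=1$, one gets $|x-1|_p=\big|z+(A/3-1)\big|_p=1$, i.e.\ $x\notin\ce_p$. (Note also that your slope computation, $|\a|_p=|\r-1|_p^2$ and $|\b|_p=|\r-1|_p^3$ exactly, requires the inner factors of \eqref{alpha2}--\eqref{beta2} to be units, which needs conditions like \eqref{qq}, not just $|q|_p=1$; if $p\mid q-1$ those valuations jump.) Finally, the proof that actually covers the theorem as stated is short and avoids all of this: for $\xb$ with entries in $\ce_p$ one computes from \eqref{eq2} that $F_i(\xb;\r)-1=\frac{(\r-1)(x_i-1)}{\sum_{j}x_j+\r}$, and $\big|\sum_j x_j+\r\big|_p=\big|q+(\r-1)+\sum_j(x_j-1)\big|_p=|q|_p=1$; hence by Lemma \ref{pr} any solution of \eqref{eq1} with values in $\ce_p$ obeys $|\hat h_{i,x}-1|_p\leq|\r-1|_p\max_{y\in S(x)}|\hat h_{i,y}-1|_p$, and iterating down the tree forces $\hat h_{i,x}=1$ for all $x$, for every $k\geq 1$, with no case analysis on any polynomial at all.
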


This theorem implies that if $|q|_p=1$ then there is no $p$-adic Gibbs measure except for
$\m_0$. But further, we will show that under the mentioned condition, one can find
generalized $p$-adic quasi Gibbs measures.\\

Now we first assume that $|q|_p<1$ to find
nontrivial $p$-adic Gibbs measures. Under this condition, from the
strong triangle inequality from \eqref{alpha2},\eqref{beta2} one
finds $|\a|_p=|\r-1|_p^2$, and $|\b|_p=|\r-1|_p^3$.  We emphasize that the existence of
nontrivial $p$-adic Gibbs measure means that a solution of
\eqref{eq13} belongs to $\ce_p$, which implies that one can find $h\in\bq_p$
such that $x=\exp_p(h)$.

\begin{lem}\label{sol} Let $\r\in\ce_p$ and $|q|_p<1$. Then a solution $x$ of \eqref{eq13}
belongs to $\ce_p$ iff the corresponding solution of \eqref{eq-alp1} belongs to
$\bz_p\setminus\bz_p^*$.
\end{lem}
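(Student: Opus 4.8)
The plan is to establish the equivalence between membership of a solution $x$ of \eqref{eq13} in $\ce_p$ and membership of the corresponding root $z$ of the depressed cubic \eqref{eq-alp1} in $\bz_p\setminus\bz_p^*$, exploiting the substitution $x=z+A/3$ that connects the two equations. The two directions will be handled by tracking $p$-adic absolute values through this affine change of variable. First I would record the key valuation facts available under the hypotheses $\r\in\ce_p$ and $|q|_p<1$: by Lemma~\ref{21} one has $|\r|_p=1$ and $|\r-1|_p<1$, and the assumption $|q|_p<1$ gives $|\a|_p=|\r-1|_p^2$ and $|\b|_p=|\r-1|_p^3$ as already computed from \eqref{alpha2} and \eqref{beta2}. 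I would also need the valuation of $A$: from \eqref{A} one reads $A=3+3(1-\r)+(\r^3-1)-3q$, so $|A-3|_p<1$ and hence $|A|_p=1$ (since $p>3$), which gives $|A/3|_p=1$.

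The forward direction assumes $x\in\ce_p$, i.e. $|x|_p=1$ and $|x-1|_p<p^{-1/(p-1)}$. To show the corresponding $z=x-A/3$ lies in $\bz_p\setminus\bz_p^*$, I would argue that $|z|_p<1$ directly from the defining cubic: since $z^3+\a z=\b$ with $|\a|_p=|\r-1|_p^2$ and $|\b|_p=|\r-1|_p^3$, the strong triangle inequality forces $z$ to be small. Concretely, if $|z|_p=1$ were to hold then $|z^3+\a z|_p=|z|_p^3=1$ by dominance of the leading term, contradicting $|\b|_p<1$; thus $|z|_p<1$, which is precisely $z\in\bz_p\setminus\bz_p^*$. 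I should also verify this $z$ is genuinely the root matching $x$, but that is immediate from $z=x-A/3$ once $x$ is a root of \eqref{eq13}.

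For the converse, I would start from a root $z$ of \eqref{eq-alp1} with $z\in\bz_p\setminus\bz_p^*$, i.e. $|z|_p<1$, and show $x=z+A/3$ satisfies $x\in\ce_p$. Writing $x-1=(A/3-1)+z$, I would compute $|A/3-1|_p$ from \eqref{A}: since $A-3$ is a $\bq_p$-linear combination of $(1-\r)$, $(\r^3-1)$ and $q$, each of absolute value at most $|\r-1|_p$ (using $|\r^3-1|_p=|\r-1|_p$ from the factorization $\r^3-1=(\r-1)(\r^2+\r+1)$ together with $|\r^2+\r+1|_p=|3|_p=1$, valid as $\r\equiv1$), one gets $|A/3-1|_p\le|\r-1|_p$. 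Combined with $|z|_p<1$ this yields $|x-1|_p<1$, and I would then need to sharpen the bound to the threshold $p^{-1/(p-1)}$; since $|\r-1|_p<p^{-1/(p-1)}$ by the definition \eqref{Exp} of $\ce_p$ and $|z|_p\le|\r-1|_p$ (which I would extract from the cubic as $|z|_p^3\le\max\{|\a z|_p,|\b|_p\}$ forcing $|z|_p\le|\r-1|_p$), the estimate $|x-1|_p<p^{-1/(p-1)}$ follows, and $|x|_p=1$ is then automatic.

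The main obstacle I anticipate is the quantitative control in the converse direction: it is not enough to know $|z|_p<1$, one must bound $|z|_p$ tightly enough (by $|\r-1|_p$) to land strictly inside the $\ce_p$-radius $p^{-1/(p-1)}$ rather than merely inside the unit ball. This requires carefully re-deriving the valuation of the root $z$ from \eqref{eq-alp1} rather than taking $|z|_p<1$ as a black box, and checking that the three summands in $A-3$ all have absolute value genuinely bounded by $|\r-1|_p$ so that the affine shift does not push $x$ outside $\ce_p$. The rest is routine non-Archimedean estimation.
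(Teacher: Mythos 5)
Your overall route---pass to $z=x-A/3$ and track $p$-adic norms through this affine substitution---is the same as the paper's, and your forward direction is in fact a slight strengthening: you deduce smallness of $z$ from the coefficient bounds $|\a|_p=|\r-1|_p^2<1$, $|\b|_p=|\r-1|_p^3<1$ alone, without ever using $x\in\ce_p$ (the paper instead writes $z=(x-1)-\g/3$ with $\g=3(1-\r)+(\r^3-1)-3q$ and uses $|x-1|_p<1$, $|\g|_p<1$). However, as written your forward step is incomplete: ruling out $|z|_p=1$ does not give $|z|_p<1$; you must also exclude $|z|_p>1$. The same dominance argument does this ($|z|_p\geq 1$ forces $|z^3+\a z|_p=|z|_p^3\geq 1>|\b|_p$), so it is a one-line repair, but the inference ``thus $|z|_p<1$'' does not follow from what you wrote.

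The more substantive error is in the converse. You claim that every term of $A-3=3(1-\r)+(\r^3-1)-3q$ has norm at most $|\r-1|_p$. This is false for the term $-3q$: the hypothesis is only $|q|_p<1$, and nothing in the lemma relates $|q|_p$ to $|\r-1|_p$ (take, say, $|q|_p=p^{-1}$ and $|\r-1|_p=p^{-5}$). Hence your chain $|A/3-1|_p\leq|\r-1|_p<p^{-1/(p-1)}$ breaks at its first link. The conclusion nevertheless survives, for a reason your proposal never invokes: the value group of $|\cdot|_p$ is discrete, so for $p>3$ any norm $<1$ is automatically $\leq p^{-1}<p^{-1/(p-1)}$. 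In other words, the ``sharpening to the threshold'' that you flag as the main obstacle is vacuous, and this is precisely what the paper's proof implicitly uses: it only establishes $|x-1|_p=|z+\g/3|_p<1$ (from $|z|_p<1$ and $|\g|_p<1$) and immediately concludes $x\in\ce_p$. With these two repairs your argument is correct; note also that the auxiliary bound $|z|_p\leq|\r-1|_p$ you extract from the cubic, while true, then becomes unnecessary---$|z|_p<1$ suffices.
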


\begin{proof}
First denote $\g=3(1-\r)+(\r^3-1)-3q$. It is clear that $|\g|_p<1$.
Assume that $x\in\ce_p$ is a solution of
\eqref{eq13}. Then for the solution of \eqref{eq-alp1} with \eqref{A} we get
$$
|z|_p=\bigg|x-\frac{A}{3}\bigg|_p=\bigg|x-1+1-\frac{3+\g}{3}\bigg|_p=
|x-1-\g|_p<1.
$$
This means that a solution of \eqref{eq-alp1} must be in
$\bz_p\setminus\bz_p^*$.

Now assume that a solution $z$ of \eqref{eq-alp1} belongs to $\bz_p\setminus\bz_p^*$.
Then
$x=z+A/3$ is a solution of \eqref{eq13}. From \eqref{A}
we find that $|A|_p=1$, and one gets
$|x|_p=|z+A/3|_p=1$. Moreover, we have
$$
|x-1|_p=\bigg|z+\frac{A}{3}-1\bigg|_p=|z+\g|_p<1,
$$
hence $x\in\ce_p$.
\end{proof}

So, to find nontrivial $p$-adic Gibbs measures, we need to solve \eqref{eq-alp1} over the
set $\bz_p\setminus\bz_p^*$. Hence, assume that a solution $z$ of \eqref{eq-alp1}
has the following form $z=p^ky$,
where $y\in\bz^*_p$ and $k\geq 1$. Let us substitute the last form
to \eqref{eq-alp1} and one finds
\begin{equation}\label{eq-y}
y^3+ay=b
\end{equation}
where $a=p^{-2k}\a, \ b=p^{-3k}\b$. It is clear that
$|a|_p=(p^k|\r-1|_p)^2$, $|b|_p=(p^k|\r-1|_p)^3$. Now we chose $k$ such a
way that $p^k|\r-1|_p=1$. Note that such a number does exist.  So,
$|a|_p=|b|_p=1$.

Now we want to solve \eqref{eq-y} in $\bz_p^*$. To do it we are
going to apply Theorem \ref{criterionforp>3}.

Let us consider the canonical forms of $a$ and $b$, i.e.
$$
a=a_0+a_1p+\cdots, \ \ b=b_0+b_1p+\cdots,
$$
where $a_0,b_0\in\{1,\dots,p-1\}$. It then follows from
\eqref{alpha2} and \eqref{beta2} that
\begin{equation}\label{ab1}
a=p-3+(p-1)p+3q-3(\r-1)+\eta p^n, \ \ b=2-3q+3(\r-1)+\eta_1p^m, \ \
n,m\geq 2,
\end{equation}
where $\eta,\eta_1\in \bz_p$. So, $a_0=p-3$ and $b_0=2$. Hence,
$D_0=-4p(p^2-9p+27)$, where $D_0=-4a_0^3-27b_0^2$. This means
$D_0\equiv 0 \ (\textrm{mod}\ p)$. In this case, due to Theorem
\ref{numberofsolutionsforp>3} we need to compute $D=-4a^3-27b^2$. Let us
assume that $|q|_p\leq p^{-3},|\r-1|_p\leq p^{-3}$. Then from \eqref{ab1} one
finds that
$$
a^3\equiv (p-3)^3+3(p-3)^2(p-1)p+3(p-3)(p-1)^2p^2 \ (\textrm{mod}\
p^3), \ \ \ b^2\equiv 4 \ (\textrm{mod}\ p^3).
$$
hence, we have
$$
D=-\bigg(4p(p^2-9p+27)-4\cdot 3^3p+45p^2+\varepsilon p^3\bigg)=-p^2(9+\varepsilon p)
$$
where $\varepsilon\in\bz_p$. This means $d_0=9$ and $ord_p(D)$ is divisible by 2. Moreover,
$d_0^{p-2/2}\equiv 1 \ (\textrm{mod}\ p)$ holds. Then according to
Theorem \ref{numberofsolutionsforp>3}
we conclude that the equation \eqref{eq-y} has three distinct solutions in $\bz^*_p$.

Thus, the equation
\eqref{eq-alp1} has three distinct solutions belonging to $\bz_p\setminus\bz^*_p$.
Hence, by Lemma \ref{sol} the equation \eqref{eq13} has three solutions. Therefore,
by $\m_i$ ($i=1,2,3$) we denote the corresponding $p$-adic Gibbs measures.
Due to Theorem \ref{comp1} one can formulate the following

\begin{thm}\label{1-e} Let $p>3$, $|q|_p\leq p^{-3}$, $\r\in\ce_p$ with $|\r-1|_p\leq p^{-3}$. Then
for the $p$-adic Potts model \eqref{Potts} on the Cayley tree of
order three, there exist four translation-invariant $p$-adic Gibbs measures.
\end{thm}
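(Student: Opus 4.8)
The plan is to assemble the pieces already developed in the excerpt and translate the analytic count of solutions into a count of measures. The key observation is that Theorem~\ref{comp1} establishes a bijection between translation-invariant generalized $p$-adic quasi Gibbs measures and solutions of the fixed point system \eqref{eq11}, and that over the invariant line $(\underbrace{1,\dots,1,h}_m,1,\dots,1)$ this system reduces to the scalar fixed point equation \eqref{eq12}, i.e. to the factored polynomial $(x-1)(x^3-Ax^2-Bx+C)=0$ with $A,B,C$ as in \eqref{ABC}. Thus it suffices to count the relevant roots of this polynomial lying in $\ce_p$ and to verify that each gives rise to a genuine $p$-adic Gibbs measure.

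First I would record the trivial solution $x_0=1$: it always solves \eqref{eq12} and, by Remark~1 together with the hypothesis $\r\in\ce_p$ and the observation that $h_x=1\in\ce_p$, it produces the $p$-adic Gibbs measure $\m_0$. This is the first of the four measures. Next I would invoke the chain of reductions already carried out in the preceding paragraphs under the exact hypotheses $p>3$, $|q|_p\le p^{-3}$, and $|\r-1|_p\le p^{-3}$: by the substitution $x=z+A/3$ the cubic \eqref{eq13} becomes \eqref{eq-alp1}, and by Lemma~\ref{sol} a root $x\in\ce_p$ of \eqref{eq13} corresponds precisely to a root $z\in\bz_p\setminus\bz_p^*$ of \eqref{eq-alp1}. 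Writing $z=p^k y$ with $p^k|\r-1|_p=1$ reduces matters to solving \eqref{eq-y}, $y^3+ay=b$, in $\bz_p^*$ with $|a|_p=|b|_p=1$. The discriminant computation in the excerpt shows $a_0=p-3$, $b_0=2$, hence $D_0\equiv 0 \ (\mathrm{mod}\ p)$, and refining to $D=-p^2(9+\varepsilon p)$ with $d_0=9$ shows that $ord_p(D)$ is even and $d_0^{(p-2)/2}\equiv 1\ (\mathrm{mod}\ p)$. The cited cubic-root criterion (Theorem~\ref{numberofsolutionsforp>3}) then yields exactly three distinct roots in $\bz_p^*$, hence three roots of \eqref{eq-alp1} in $\bz_p\setminus\bz_p^*$, and by Lemma~\ref{sol} three roots of \eqref{eq13} lying in $\ce_p$.

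It remains to conclude. Each of these three roots $x_i\in\ce_p$ ($i=1,2,3$) determines, via the invariant-line ansatz and Theorem~\ref{comp1}, a translation-invariant solution $\hat\h$ of \eqref{eq1}, and hence a translation-invariant generalized $p$-adic quasi Gibbs measure $\m_i$; since $x_i\in\ce_p$ and $\r\in\ce_p$, Remark~1 guarantees that each $\m_i$ is in fact a $p$-adic Gibbs measure. Together with $\m_0$ this gives the asserted four translation-invariant $p$-adic Gibbs measures. I would also note for completeness that these four measures are pairwise distinct: the three nontrivial roots are distinct from each other by the criterion, and they are distinct from $x_0=1$ because under $|q|_p<1$ the value $x=1$ is checked not to satisfy \eqref{eq13} (equivalently, $C=(q-1)^3\not\equiv 0$ in the relevant reduction, so $1$ is not a root of the cubic factor).

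The main obstacle is the discriminant bookkeeping feeding the cubic-root criterion, specifically ensuring that the congruences for $a$ and $b$ in \eqref{ab1} are tracked to high enough $p$-adic precision so that $D=-p^2(9+\varepsilon p)$ is correct through order $p^2$; this is exactly where the stronger hypotheses $|q|_p\le p^{-3}$ and $|\r-1|_p\le p^{-3}$ are needed, and getting the leading coefficient $d_0=9$ right (rather than a multiple of $p$) is what forces the three-root outcome rather than a single root. The remaining steps are bijective translations that follow directly from the already-established Theorem~\ref{comp1} and Lemma~\ref{sol}.
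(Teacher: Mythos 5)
Your proposal is correct and takes essentially the same route as the paper: the paper's own proof of Theorem \ref{1-e} is exactly the chain you assemble, namely the invariant-line reduction to $(x-1)(x^3-Ax^2-Bx+C)=0$, Lemma \ref{sol}, the rescaling $z=p^ky$ with $p^k|\rho-1|_p=1$, the digit computation $a_0=p-3$, $b_0=2$ and $D=-p^2(9+\varepsilon p)$ with $d_0=9$, and then Theorem \ref{numberofsolutionsforp>3} yielding three roots in $\bz_p^*$, hence three solutions of \eqref{eq13} in $\ce_p$ alongside $x_0=1$, each converted into a measure via Theorem \ref{comp1}. The only flaw is your distinctness parenthetical: $C=(q-1)^3\neq 0$ rules out $x=0$ as a root of the cubic factor, not $x=1$; the correct check is $1-A-B+C=-2(\rho-1)^3-3q(\rho-1)^2+q^3\neq 0$, which can in fact vanish under the stated hypotheses (e.g.\ for $\rho=1-q$), a degenerate case that the paper's own argument also silently ignores.
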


Note that in \cite{MR1} if $|q|_p<1$ we did not have other extra conditions for the
existence of the $p$-adic Gibbs measures. From Theorem \ref{1-e} we see that for the existence of
the Gibbs measures one needs $|q|_p<p^{-3}$ extra condition.\\

Now assume that $|q|_p=1$. In this case we are going to establish
the existence of a generalized $p$-adic quasi Gibbs measure. In the
sequel, for the sake of simplicity we impose on $q$ the following constrains:
\begin{equation}\label{qq}
|q-1|_p=1, \ \ |2q-1|_p=1.
\end{equation}
Then from \eqref{A},\eqref{alpha2} and \eqref{beta2} we immediately
find that $|A|_p=1$, $|\a|_p=|\r-1|_p^2$ and $|\b|_p=|\r-1|_p^3$. In this
case, the equation \eqref{eq-alp1} has no solution belonging to
$\bz_p^*$. Indeed, assume that $z\in\bz_p^*$ is a solution, then we
have $|\a|_p=|z\a|_p=|\b-z^3|_p=1$, but this contradicts to
$|\a|_p<1$. Consequently, any solution of \eqref{eq-alp1} belongs to
$\bz_p\setminus\bz_p^*$.  Now using the same argument as above we
reduce the equation \eqref{eq-alp1} to \eqref{eq-y} with
$y\in\bz_p^*$, $a=p^{-2k}\a, \ b=p^{-3k}\b$.  By the same argument, we choose $k$ such a way
that
$p^k|\r-1|_p=1$. So,
$|a|_p=|b|_p=1$.

Assume that $|\r-1|_p\leq p^{-3}$. Then from \eqref{alpha2}, \eqref{beta2} one gets
\begin{equation}\label{ab2}
a\equiv 3(q-1)\ (\textrm{mod}\
p^3), \ \ \ b\equiv 2(q-1)(2q-1) \ (\textrm{mod}\ p^3).
\end{equation}

Let us compute $D=-4a^3-27b^2$. Then from \eqref{ab2} one
finds that
\begin{eqnarray*}
D&=&-\bigg(4\cdot 27(q-1)^3+27\cdot 4(q-1)^2(2q-1)^2+\eta |\r-1|\bigg)\\[2mm]
&=&-108q(q-1)^2(4q-3)-\eta |\r-1|_p, \ \ \textrm{for some} \ \ \eta\in\bz_p.
\end{eqnarray*}
Hence, if $|\r-1|_p<|4q-3|_p$ we conclude that
$|D|_p=|4q-3|_p$.

Now taking into account Theorem \ref{numberofsolutionsforp>3} we can formulate the following

\begin{thm}\label{1-2e} Let $p>3$, $\r\in\ce_p$ with $|\r-1|_p\leq p^{-3}$,
$|\r-1|_p<|4q-3|_p$ and $|q|_p=|q-1|_p=|2q-1|_p=1$. Let
$u_{n+3}=b_0u_{n}-a_0u_{n+1}$ with $u_1=0$, $u_2=-a_0$, $u_3=-b_0$, where
$a_0 \equiv 3(q-1)\ (\textrm{mod}\ p)$, $b_0\equiv 2(q-1)(2q-1)\
(\textrm{mod}\ p)$ with  $a_0,b_0\in\{1,2,\dots,p-1\}$.
Then
for the $q$-state $p$-adic Potts model \eqref{Potts} on the Cayley
tree of order three, there exist one translation-invariant $p$-adic
Gibbs measure and
\begin{enumerate}
\item[(i)] three translation-invariant generalized
$p$-adic quasi Gibbs measures if one of the following conditions are satisfied:
\begin{itemize}
\item[(a)] $|4q-3|_p=1$ and $u_{p-2}\equiv 0 \ (\textrm{mod} \ p)$;

\item[(b)] $|4q-3|_p=p^{-2k}$ for some $k\geq 1$, and $-108q(q-1)^2(4q-3)p^{-2k}$ is a quadratic
residue modulo $p$.
\end{itemize}

\item[(ii)] one translation-invariant generalized
$p$-adic quasi Gibbs measure if one of the following conditions are satisfied:
\begin{itemize}
\item[(c)] $D_0u_{p-2}^2\not\equiv 0\ \textrm{and}\  9a_0^2 \ (\textrm{mod} \ p)$;

\item[(d)] $|4q-3|_p=p^{-(2k+1)}$, \ $k\geq 1$;

\item[(e)] $|4q-3|_p=p^{-2k}$, and $-108q(q-1)^2(4q-3)p^{-2k}$ is not a quadratic
residue modulo $p$.
\end{itemize}

\item[(iii)] otherwise, there is not translation-invariant generalized
$p$-adic quasi Gibbs measure.
\end{enumerate}
\end{thm}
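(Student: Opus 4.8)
The plan is to translate the counting of measures into counting roots in $\bz_p^*$ of the depressed cubic \eqref{eq-y}, and then to read off each alternative from the root-counting result Theorem \ref{numberofsolutionsforp>3}. The trivial root $x=1$ of \eqref{eq12} always produces the $p$-adic Gibbs measure $\m_0$; since $|q|_p=1$, Theorem \ref{Un} guarantees this is the \emph{only} $p$-adic Gibbs measure, so every further root of \eqref{eq13} yields a translation-invariant generalized $p$-adic quasi Gibbs measure that is not a $p$-adic Gibbs one. First I would recall the reduction already carried out above: under $|q|_p=|q-1|_p=|2q-1|_p=1$ equation \eqref{eq-alp1} has no root in $\bz_p^*$, so all its roots lie in $\bz_p\setminus\bz_p^*$, and the scaling $z=p^ky$ with $p^k|\r-1|_p=1$ brings matters to \eqref{eq-y}, $y^3+ay=b$, with $|a|_p=|b|_p=1$ and, by \eqref{ab2}, $a_0\equiv 3(q-1)$, $b_0\equiv 2(q-1)(2q-1)\pmod p$. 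A one-line check that $|x-1|_p=|p^ky+(A/3-1)|_p=1$, using $|A/3-1|_p=|q|_p=1$ and $|p^ky|_p<1$, confirms these roots fall outside $\ce_p$, so the measures they produce are genuinely quasi Gibbs.

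The decisive quantity is the discriminant $D=-4a^3-27b^2$. Combining \eqref{ab2} with $|\r-1|_p\leq p^{-3}$ one gets
\[
D=-108\,q(q-1)^2(4q-3)-\eta\,|\r-1|_p,\qquad \eta\in\bz_p ,
\]
and the hypothesis $|\r-1|_p<|4q-3|_p$ forces the first term to dominate, so that $ord_p(D)=ord_p(4q-3)$ \emph{exactly} and the leading unit of $D$ has the same quadratic-residue character as $-108\,q(q-1)^2(4q-3)$ (here $|{-108}|_p=|q|_p=|q-1|_p=1$ is used). This is precisely where the coupling $|\r-1|_p<|4q-3|_p$ enters, and controlling the error term so that neither $ord_p(D)$ nor the residue class of its leading unit is disturbed is the only genuinely delicate estimate in the argument; everything afterwards is case bookkeeping.

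With $|D|_p=|4q-3|_p$ in hand I would apply Theorem \ref{numberofsolutionsforp>3} according to $v:=ord_p(4q-3)$. If $v=0$ then $|4q-3|_p=1$ and $D_0\not\equiv 0\pmod p$, so the root count is \emph{not} fixed by $D$ alone; here one appeals to the linear recurrence $u_{n+3}=b_0u_n-a_0u_{n+1}$, whose characteristic polynomial is exactly the reduction $y^3+a_0y-b_0$ of \eqref{eq-y} modulo $p$, the prescribed data $u_1=0,\ u_2=-a_0,\ u_3=-b_0$ being arranged so that the single residue $u_{p-2}\bmod p$ records how the cubic factors over $\bbf_p$. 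The vanishing $u_{p-2}\equiv 0\pmod p$ signals complete splitting and gives the three roots of case (i)(a), the complementary condition in (ii)(c) isolates the single-root situation, and the remaining possibility leaves no nontrivial root, namely case (iii). If $v=2k$ is even, Theorem \ref{numberofsolutionsforp>3} reduces the count to whether $-108\,q(q-1)^2(4q-3)p^{-2k}$ is a quadratic residue modulo $p$: a residue yields three roots (case (i)(b)), a non-residue yields one (case (ii)(e)). If $v$ is odd, the cubic has a single root in $\bz_p^*$, which is case (ii)(d).

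Finally, distinct roots $y\in\bz_p^*$ of \eqref{eq-y} produce distinct $x=p^ky+A/3$, hence distinct functions $\h$ and, through Theorem \ref{comp1}, distinct measures; adjoining $\m_0$ gives the stated totals. The main obstacle is the discriminant estimate flagged above: one must ensure that both $ord_p(D)$ and the residue class of its leading unit are dictated solely by $4q-3$, which is exactly what $|\r-1|_p<|4q-3|_p$ secures. The second point requiring care is the borderline case $v=0$, where $D$ carries no information about the number of roots and everything rests on correctly matching $(a_0,b_0)$ to the hypotheses of Theorem \ref{numberofsolutionsforp>3} and verifying that $u_{p-2}$ detects the factorization type of the reduced cubic over $\bbf_p$.
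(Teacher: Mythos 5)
Your proposal retraces the paper's own argument almost line for line: the same reduction of \eqref{eq-alp1} to \eqref{eq-y} via $z=p^ky$ with $p^k|\r-1|_p=1$, the same discriminant $D=-4a^3-27b^2$, and the same case-by-case appeal to Theorem \ref{numberofsolutionsforp>3}; your extra check that the resulting roots lie outside $\ce_p$ (so the new measures are genuinely quasi Gibbs rather than Gibbs) is a useful addition the paper leaves implicit. However, the one step you explicitly defer --- what you call ``the only genuinely delicate estimate,'' namely the congruences \eqref{ab2} and the resulting formula $D=-108q(q-1)^2(4q-3)+\eta(\r-1)$ --- is precisely the step that fails, so the proposal does not actually close the proof.

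Concretely, \eqref{ab2} is inconsistent with \eqref{beta2}. The leading bracket of \eqref{beta2} is $54-81q+27q^2=27(q-1)(q-2)$, so, writing $w=p^{-k}(\r-1)\in\bz_p^*$, one finds
\begin{equation*}
a=3(q-1)w^2+\eta(\r-1),\qquad b=(q-1)(q-2)\,w^3+\eta'(\r-1),\qquad \eta,\eta'\in\bz_p ,
\end{equation*}
and $(q-1)(q-2)$ is \emph{not} the polynomial $2(q-1)(2q-1)$ asserted in \eqref{ab2} (note also that the unit $w$ cannot be dropped from $a$ and $b$ individually, although $w^6$, being a square, is harmless in $D$). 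Consequently
\begin{equation*}
D=-27(q-1)^2w^6\bigl(4(q-1)+(q-2)^2\bigr)+\tilde\eta(\r-1)=-27\,q^2(q-1)^2w^6+\tilde\eta(\r-1),\qquad \tilde\eta\in\bz_p ,
\end{equation*}
which is a $p$-adic unit whenever $|q|_p=|q-1|_p=1$ and $|\r-1|_p<1$. Hence $|D|_p=1$ throughout the regime of the theorem: the quantity $|4q-3|_p$ never enters, the hypothesis $|\r-1|_p<|4q-3|_p$ does no work, and the conclusions attached to cases (b), (d), (e) are unsupported. The correct bifurcation is on $|q-2|_p$: if $|q-2|_p=1$ the count is governed by the $|D|_p=1$ rows of Theorem \ref{numberofsolutionsforp>3} (i.e.\ by $u_{p-2}$, with $D_0\equiv-27q^2(q-1)^2$ and $b_0$ built from $(q-1)(q-2)w^3$), while if $|q-2|_p<1$ one lands in the row $|b|_p<|a|_p=1$, a case the statement does not contemplate at all. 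The error is not cosmetic: take $p=5$, $q=4$, $\r=1+5^3$ (the paper's Example 1). The honest reduced cubic over $\bbf_5$ is $y^3+4y-1$, which has the single simple root $y\equiv 2$, and Hensel's lemma then produces exactly one translation-invariant generalized $p$-adic quasi Gibbs measure besides $\m_0$ --- the opposite of the ``no further measure'' conclusion obtained from \eqref{ab2}, which leads to the rootless cubic $y^3+4y-2$. So the step you flagged but did not perform is not merely delicate; as stated it is false, and any correct proof must redo the expansion of $\a$ and $\b$ from \eqref{alpha2}--\eqref{beta2}, after which the statement itself has to be reformulated.
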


We should stress that in \cite{MR1} it was not investigated the case $|q|_p=1$. From the theorem
we conclude in the mentioned setting one can find new kind of Gibbs measures.

Let us consider some more concrete examples.

{\sc Example 1}. Assume that $p=5$ and $q=4$. It is clear that
$|q|_5=1$ and \eqref{qq} is satisfied. From \eqref{alpha2} and
\eqref{beta2} we get that $a_0=4$, $b_0=2$. One can check that
$D_0=-4\cdot 91$, $u_3=-2$. Hence, we have $D_0u_3^2\equiv 9\cdot
a_0^2 \ (\textrm{mod}\ 5)$, this with Theorem \ref{numberofsolutionsforp>3}
implies \eqref{eq-alp1} has no solution. In this case, for the model there is
only $p$-adic Gibbs measure.

{\sc Example 2}. Assume that $p=5$ and $q=7$. It is clear that
$|q|_5=1$ and \eqref{qq} with $|4q-3|_5=5^{-2}$ is satisfied. One can see that
$108q(q-1)^2(4q-3)5^{-2}\equiv 1 \ (\textrm{mod}\ 5)$. Since 1 is a quadratic residue
modulo 5, this yields that $108q(q-1)^2(4q-3)5^{-2}$ is so. Hence, due to Theorem \ref{1-2e}
there are three translation-invariant generalized $p$-adic quasi Gibbs measures.

\subsection{Regime $|\r|_p<1$} This regime is totaly different from
the previous case. Now we are interested in the existence of
generalized $p$-adic quasi Gibbs measures.

Direct calculations show that one has
\begin{eqnarray}\label{alpha3}
 \alpha
&=&\frac{1}{3}(\r-1)^2\bigg(-4+3q-5\r+6q\r+3\r^2-2\r^3-\r^4\bigg)\\[2mm]
\label{beta3}
\beta&=&\frac{1}{27}(\r-1)^3\bigg(38-63q+27q^2-30\r+27q\r+\nonumber \\[2mm]
&&+39\r^2-27q\r^2+5\r^3-18q\r^3-6\r^4+6\r^5+2\r^6\bigg)
\end{eqnarray}

Assume that $|q|_p<1$. Then from \eqref{alpha3},\eqref{beta3} one
finds that $|\a|_p=1$, and $|\b|_p=1$ if $p\neq 19$, and $|\b|_p<1$
if $p=19$.

Let us first consider $p\neq 19$. Furthermore, for the sake of simplicity, we will
assume that $|\r|_p,|q|_p\leq p^{-2}$.  Then one can
see that
\begin{equation}\label{aabb}
\a=-\frac{4}{3}+\e p^2, \ \ \b=-\frac{38}{27}+\e_1 p^2, \ \ \e,\e_1\in\bz_p.
\end{equation}
So, $a_0\equiv -4/3 \ (\textrm{mod}\ p)$, and $b_0\equiv
-38/27\ (\textrm{mod}\ p)$. Moreover, one can find that
$D=-(4\a^3+27\b^2)=-44+\eta p^2$, where $\eta\in\bz_p$.
Hence,
$$
|D|_p=
\left\{
\begin{array}{ll}
1, \ \ \textrm{if} \ p\neq 11,\\[2mm]
p^{-1}, \ \ \textrm{if} \ p=11.
\end{array}
\right.
$$
Therefore, due to Theorem
\ref{numberofsolutionsforp>3} we conclude that
the equation \eqref{eq-alp1} has a unique solution either $p=11$ or $p\neq 11$ and $D_0u^2_{p-2}\not\equiv 0; 9\
(\textrm{mod}\ p)$.

Now assume that $p=19$. Then $|\b|_p=p^{-1}, |\a|_p=1$.  One can see that
$a_0=5$. We know that $-5$ is not a quadratic residue modulo 19, therefore, due to
Theorem \ref{criterionforp>3} I.2 the equation \eqref{eq-alp1} has no solution in $\bz_p^*$.
Hence, the equation may have a solution in $\bz_p$. In this case, we have $z=p^ky$,
where $k\geq 1$ and
$y\in\bz_p^*$. Then the equation is reduced to \eqref{eq-y} with $a=p^{-2k}\a$ and $b=p^{-3k}\b$.
It is clear that $|a|_p=p^{2k}$, $|b|_p=p^{2k}$. Take $k=1$, then again
Theorem \ref{criterionforp>3} and Theorem \ref{numberofsolutionsforp>3}
I.4 imply that the equation \eqref{eq-y} has a unique solution.
Hence, the equation
\eqref{eq-alp1} has a unique solution belonging to $\bz_p\setminus\bz_p^*$.

So, one can formulate the
following

\begin{thm}\label{1-r} Let $p>3$. Assume that $|\r|_p,|q|_p\leq p^{-2}$. Then
for the $q$-state $p$-adic Potts model \eqref{Potts} on the Cayley
tree of order three, there exist
two translation-invariant
generalized $p$-adic quasi Gibbs measures
\begin{enumerate}
\item[(i)] if $p=11$;

\item[(ii)] if $p=19$;

\item[(iii)] if $p\neq 11,19$ and
$D_0u^2_{p-2}\not\equiv 0; 9\
(\textrm{mod}\ p)$, where $a_0\equiv -4/3 \ (\textrm{mod}\ p)$, and $b_0\equiv
-38/27\ (\textrm{mod}\ p)$, where $a_0,b_0\in\{1,2,\dots,p-1\}$.
\end{enumerate}
\end{thm}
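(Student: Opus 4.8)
The plan is to translate the existence of distinct translation-invariant generalized $p$-adic quasi Gibbs measures into a counting problem for solutions of the depressed cubic \eqref{eq-alp1}, exactly as was done in the regime $\r\in\ce_p$. By Theorem \ref{comp1}, each translation-invariant solution of \eqref{eq11} on the invariant line reduces to solving \eqref{eq12}, hence to the root $x_0=1$ (giving the measure $\m_0$) together with the roots of \eqref{eq13}; and via the substitution $x=z+A/3$ every additional root corresponds to a solution of $z^3+\a z=\b$. So the whole theorem amounts to showing that, in each of the three cases (i)--(iii), the equation \eqref{eq-alp1} has exactly one admissible solution in $\bq_p$, which together with $x_0=1$ yields precisely two translation-invariant generalized $p$-adic quasi Gibbs measures.

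First I would record the $p$-adic sizes of the coefficients. Using \eqref{alpha3} and \eqref{beta3} under the standing hypothesis $|\r|_p,|q|_p\le p^{-2}$, the dominant terms are the constants $-4/3$ and $-38/27$, giving $|\a|_p=1$ and $|\b|_p=1$ when $p\neq 19$, while at $p=19$ the numerator $38-63q+\cdots$ is divisible by $19$ so $|\b|_p=p^{-1}$. I would then compute the discriminant-type quantity $D=-(4\a^3+27\b^2)$. The key arithmetic fact is the congruence $-\big(4(-4/3)^3+27(-38/27)^2\big)=-44$, so modulo $p^2$ one gets $D\equiv -44$, whence $|D|_p=1$ for $p\neq 11,19$ and $|D|_p=p^{-1}$ for $p=11$. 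These reductions feed directly into the classification Theorem \ref{numberofsolutionsforp>3} (and the residue criterion of Theorem \ref{criterionforp>3}).

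Next I would split into the three cases. For $p\neq 11,19$ with $|D|_p=1$, the added hypothesis $D_0u_{p-2}^2\not\equiv 0,9\ (\mathrm{mod}\ p)$ is precisely the branch of Theorem \ref{numberofsolutionsforp>3} that forces a single root, yielding one solution in $\bz_p^*$. For $p=11$ we have $|D|_p=p^{-1}$, i.e. $ord_p(D)$ is odd, and the corresponding branch of the classification theorem again gives a unique solution. The case $p=19$ is the one needing the extra descent: here $a_0=5$ and since $-5$ is a nonresidue modulo $19$, part I.2 of Theorem \ref{criterionforp>3} rules out any solution in $\bz_p^*$, so I would write $z=p^k y$ with $y\in\bz_p^*$, substitute into \eqref{eq-alp1} to get \eqref{eq-y} with $a=p^{-2k}\a$, $b=p^{-3k}\b$, note $|a|_p=|b|_p=p^{2k}$, take $k=1$, and apply Theorem \ref{numberofsolutionsforp>3} I.4 to obtain exactly one root in $\bz_p\setminus\bz_p^*$. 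In all three cases the unique root of \eqref{eq-alp1} produces one nontrivial solution of \eqref{eq13}, which together with $x_0=1$ gives two translation-invariant generalized $p$-adic quasi Gibbs measures by Theorem \ref{comp1}.

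The main obstacle I expect is the case $p=19$, where the naive reduction fails because the relevant residue is a nonresidue, so one must carry out the rescaling $z=p^k y$ and verify that the rescaled coefficients $a,b$ remain $p$-adic integers (not merely elements of $\bq_p$) with the right orders so that Theorem \ref{numberofsolutionsforp>3} applies; this requires knowing $ord_p(\b)$ exactly, which is why the precise computation $|\b|_{19}=19^{-1}$ rather than a mere inequality is essential. The secondary delicate point is bookkeeping the discriminant $D$ modulo $p^2$ rather than modulo $p$, since modulo $p$ alone one cannot distinguish $|D|_p=1$ from $|D|_p=p^{-1}$; the explicit expansions \eqref{aabb} are what make this honest.
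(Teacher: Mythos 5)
Your proposal is correct and follows essentially the same route as the paper's own proof: the same coefficient estimates $|\a|_p=1$, $|\b|_p=1$ (with $|\b|_{19}=19^{-1}$), the same computation $D\equiv -44\ (\mathrm{mod}\ p^2)$ to split off $p=11$, the same application of Theorem \ref{numberofsolutionsforp>3} in cases (i) and (iii), and the same rescaling $z=p^ky$ with $k=1$ and Theorem \ref{criterionforp>3} I.2/I.4 to handle $p=19$. No gaps to report.
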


Now we assume that $|q-1|_p<1$. In what follows, we will assume that $|\r|_p<|q-1|_p^2$.
Then from \eqref{alpha3},\eqref{beta3} one gets
\begin{eqnarray}\label{alpha4}
 \alpha
&=&\frac{1}{3}(\r-1)^2\bigg(-1+3(q-1)+\e\r\bigg)\\[2mm]
\label{beta4}
\beta&=&\frac{1}{27}(\r-1)^3\bigg(2-9(q-1)+27(q-1)^2+\e_1\r\bigg)
\end{eqnarray}
where $\e,\e_1\in\bz_p$. Hence, we rewrite \eqref{alpha4},\eqref{beta4} as follows
\begin{eqnarray}\label{alpha5}
 \alpha
=-\frac{1}{3}+(q-1)+\eta (q-1)^3, \ \ \ \beta=-\frac{2}{27}+\frac{q-1}{3}-(q-1)^2+\eta_1(q-1)^3
\end{eqnarray}
where $\eta,\eta_1\in\bz_p$. It is clear that $|\a|_p=|\b|_p=1$.
Then one finds that
\begin{eqnarray}\label{DD}
-D&=&4\a^3+27\b^3=-\frac{4}{27}+\frac{4}{3}(q-1)-4(q-1)^2\nonumber\\[2mm]
&&+\frac{4}{27}+3(q-1)^2-\frac{4}{3}(q-1)+4(q-1)^2+\tilde\eta(q-1)^3\nonumber\\[2mm]
&=&(q-1)^2(3+\tilde\eta (q-1))
\end{eqnarray}
where $\tilde\eta\in\bz_p$. This means $|D|_p=|q-1|_p^2$ and $d_0=-3$. Hence, Theorem
\ref{numberofsolutionsforp>3} implies that the equation \eqref{eq-alp1} has three solutions if
$\sqrt{-3}$ does exists. Otherwise, it has a unique solution.
Therefore, we have the following

\begin{thm}\label{2-r} Let $p>3$. Assume that $|\r|_p<|q-1|^2_p<1$. Then
for the $q$-state $p$-adic Potts model \eqref{Potts} on the Cayley
tree of order three, there exist
\begin{enumerate}
\item[(i)] four translation-invariant
generalized $p$-adic quasi Gibbs measures if $\sqrt{-3}$ exists;

\item[(ii)] two translation-invariant
generalized $p$-adic quasi Gibbs measures if $\sqrt{-3}$ does not exists.
\end{enumerate}
\end{thm}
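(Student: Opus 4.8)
The plan is to follow the exact template already established in the preceding regimes, reducing everything to the analysis of the reduced cubic \eqref{eq-alp1} and invoking Theorem \ref{numberofsolutionsforp>3}. First I would note that the solution $x=1$ always yields the translation-invariant measure $\m_0$, accounting for one of the measures in each case. The remaining task is to count the solutions of \eqref{eq13}, equivalently the roots of \eqref{eq-alp1}, under the hypothesis $|\r|_p<|q-1|_p^2<1$. I would begin by computing the $p$-adic sizes of $\a$ and $\b$ from the explicit expansions \eqref{alpha3} and \eqref{beta3}. The key preliminary step is the substitution $q-1$ as the small parameter: since $|\r|_p<|q-1|_p^2<1$, every term containing $\r$ is negligible compared to the corresponding power of $(q-1)$, so I would expand $\a$ and $\b$ in powers of $(q-1)$ and absorb the $\r$-dependent remainders into coefficients $\eta,\eta_1\in\bz_p$, arriving at \eqref{alpha5}. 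From $|\a|_p=|\b|_p=1$ one sees in particular that \eqref{eq-alp1} can have no root in $\bz_p\setminus\bz_p^*$ by the same contradiction argument used when $|q|_p=1$, so all roots lie in $\bz_p^*$ and the discriminant test of Theorem \ref{numberofsolutionsforp>3} applies directly without rescaling.

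The decisive computation is the discriminant $D=-(4\a^3+27\b^2)$. I would substitute \eqref{alpha5} and track the expansion carefully in powers of $(q-1)$. The crucial feature, already displayed in \eqref{DD}, is that the constant terms $-\tfrac{4}{27}$ from $4\a^3$ and $+\tfrac{4}{27}$ from $27\b^2$ cancel exactly, and likewise the linear-in-$(q-1)$ terms cancel; what survives is a term of order $(q-1)^2$, giving $|D|_p=|q-1|_p^2$ with leading coefficient $d_0=-3$. This cancellation is the heart of the matter and must be verified with some care, since it is precisely what pushes $\mathrm{ord}_p(D)$ up to the even value $2\,\mathrm{ord}_p(q-1)$ and isolates the sign $-3$ that determines the branching.

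Once $|D|_p=|q-1|_p^2$ and $d_0=-3$ are in hand, the conclusion is immediate from Theorem \ref{numberofsolutionsforp>3}: because $\mathrm{ord}_p(D)$ is even, the number of roots of \eqref{eq-alp1} in $\bz_p^*$ is governed by whether $d_0=-3$ is a quadratic residue modulo $p$, i.e.\ by whether $\sqrt{-3}\in\bq_p$. If $\sqrt{-3}$ exists the cubic splits into three distinct roots, yielding three further translation-invariant generalized $p$-adic quasi Gibbs measures and hence four in total; if not, there is a single root, giving two measures in total. This gives cases (i) and (ii) of the statement.

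The main obstacle I anticipate is the exact double cancellation in the discriminant computation \eqref{DD}. The individual expansions of $4\a^3$ and $27\b^2$ each contain constant and linear terms in $(q-1)$ that must match and cancel precisely, and one must confirm that the $\r$-dependent remainders (which are smaller than $|q-1|_p^2$ by the hypothesis $|\r|_p<|q-1|_p^2$) do not interfere with the leading $(q-1)^2$ coefficient $3$; this is exactly where the strict inequality $|\r|_p<|q-1|_p^2$ rather than $\le$ is used. Everything else—the reduction to \eqref{eq-alp1}, the localization of roots in $\bz_p^*$, and the final residue test—follows the established pattern and is routine.
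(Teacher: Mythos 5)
Your proposal is correct and follows essentially the same route as the paper's proof: expand $\a$ and $\b$ in powers of $(q-1)$ as in \eqref{alpha5}, verify the double cancellation in $D=-(4\a^3+27\b^2)$ to get $|D|_p=|q-1|_p^2$ with $d_0=-3$ (the paper's \eqref{DD}), and then apply Theorem \ref{numberofsolutionsforp>3} so that the count of roots is three or one according to whether $\sqrt{-3}\in\bq_p$, giving four or two measures once $x=1$ is included. Your additional remark that all roots of \eqref{eq-alp1} must lie in $\bz_p^*$ (since $|\a|_p=|\b|_p=1$ forces $|z|_p=1$) is a correct step that the paper leaves implicit, and your observation about where the strict inequality $|\r|_p<|q-1|_p^2$ is used matches the paper's computation.
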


\begin{rem}\label{41} Let $|\r|_p<|q-1|^2_p<1$ be satisfied, and we assume that
the equation \eqref{eq13} has three solutions $x_1,x_2,x_3$, which means $\sqrt{-3}\in\bq_p$.
Due to Vieta's
Theorem one has
\begin{equation}\label{Vi}
|x_1x_2+x_1x_3+x_2x_3|_p=1, \ \ \ |x_1x_2x_3|_p=|q-1|^3_p<1.
\end{equation}
Since from \eqref{ABC} we have $|A|_p=|B|_p=1$. Now from $x_i=z_i+A/3$, where $z_i$ solutions
of \eqref{eq-alp1}, one gets that
$|x_i|_p\leq 1$. It then follows from \eqref{Vi} that at least one of the solutions' norm
should be
strictly less than 1. Suppose that the norm of two of solutions is strictly less l,
say $|x_1|_p<1$ and
$|x_2|_p<1$. Then from \eqref{Vi} one finds that $|B|_p<1$, which is impossible. So, only
one solution's norm is less than strictly 1, say $x_1$, i.e. $|x_1|_p<1$, $|x_2|_p=|x_3|_p=1$.\\
\end{rem}

Now we are going to investigate solutions of \eqref{eq1} over the
invariant line $(1,1,\dots,h,1,\dots,1)$. Let us introduce some
notations. If $x\in W_n$, then instead of $h_x$ we use the symbol
$h_x^{(n)}$.

Denote
\begin{equation}\label{g(x)}
g_\r(x)=\frac{\r x+q-1}{x+\r+q-2}.
\end{equation}
Note that $f_\r(x)=(g_\r(x))^3$. Then one can see that
\begin{eqnarray}\label{g-xy}
&&|g_\r(x)-g_\r(x)|_p=\frac{|x-y|_p|\r-1|_p|\r+q-1|_p}{|x+\r+q-2|_p|y+\r+q-2|_p},\\[3mm]
\label{g-1} && g^{-1}_\r(x)=\frac{(\r+q-2)x-q+1}{\r-x}
\end{eqnarray}
Moreover, one has the following
\begin{lem}\label{g-p} Let $|q-1|_p<1$, and $|\r|_p\leq|q-1|^2_p$.
The following assertions hold true:
\begin{enumerate}
\item[(i)] If $|x|_p\neq 1$, then $|g_\r(x)|_p\leq\max\{|q-1|_p,|\r|_p\}$;
\item[(ii)] If $|g_\r(x)|_p>1$, then $|x|_p=1$.
\end{enumerate}
\end{lem}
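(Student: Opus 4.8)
The plan is to read $|g_\r(x)|_p$ off directly from the canonical form of $g_\r$ by a case analysis on $|x|_p$, using the strong triangle inequality throughout. The only preliminary observations needed are that, since $|q-1|_p<1$, the constant $q-2=(q-1)-1$ satisfies $|q-2|_p=1$, and that the standing hypotheses $|\r|_p\leq|q-1|_p^2$ and $|q-1|_p<1$ force the strict chain $|\r|_p\leq|q-1|_p^2<|q-1|_p<1$. With these in hand I will know in each term of the numerator $\r x+q-1$ and the denominator $x+\r+q-2$ exactly which summand is dominant, so that the ultrametric inequality computes the norm of each sum exactly.

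For part (i), first suppose $|x|_p>1$. Then in the denominator one has $|x|_p>1=|q-2|_p>|\r|_p$, so by the strong triangle inequality $|x+\r+q-2|_p=|x|_p$; in particular the denominator does not vanish. For the numerator I only need the crude bound $|\r x+q-1|_p\leq\max\{|\r|_p|x|_p,|q-1|_p\}$, whence
\[
|g_\r(x)|_p\leq\frac{\max\{|\r|_p|x|_p,|q-1|_p\}}{|x|_p}=\max\Big\{|\r|_p,\frac{|q-1|_p}{|x|_p}\Big\}<|q-1|_p,
\]
because $|x|_p>1$ makes $|q-1|_p/|x|_p<|q-1|_p$ while $|\r|_p<|q-1|_p$. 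Next suppose $|x|_p<1$. Here the denominator is dominated by the constant term, giving $|x+\r+q-2|_p=|q-2|_p=1$, and the numerator is dominated by $q-1$, since $|\r x|_p=|\r|_p|x|_p<|\r|_p\leq|q-1|_p^2<|q-1|_p$ yields $|\r x+q-1|_p=|q-1|_p$. Hence $|g_\r(x)|_p=|q-1|_p$. In both subcases $|g_\r(x)|_p\leq|q-1|_p\leq\max\{|q-1|_p,|\r|_p\}$, which is exactly (i).

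Part (ii) is then the contrapositive of (i): if $|x|_p\neq1$, the computation above gives $|g_\r(x)|_p\leq|q-1|_p<1$, so in particular $|g_\r(x)|_p\leq 1$, and therefore the hypothesis $|g_\r(x)|_p>1$ can only hold when $|x|_p=1$. I do not expect a genuine obstacle here: the whole lemma is a bookkeeping exercise in the ultrametric inequality. The one point that must not be overlooked is the identity $|q-2|_p=1$, which is what pins the denominator's norm in both subcases and guarantees it is nonzero; every other estimate follows from the strict chain $|\r|_p<|q-1|_p<1$, and it is worth noting that the proof in fact delivers the sharper bound $|g_\r(x)|_p\leq|q-1|_p$.
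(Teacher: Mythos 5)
Your proof is correct. For part (i) you follow essentially the same route as the paper: a case split on $|x|_p<1$ versus $|x|_p>1$, with the ultrametric inequality pinning the norms of numerator and denominator (your treatment of $|x|_p>1$ in one unified estimate is slightly cleaner than the paper's, which splits further into $1<|x|_p\leq |q-1|_p/|\r|_p$ and $|x|_p>|q-1|_p/|\r|_p$, obtaining the values $\leq|q-1|_p$ and $=|\r|_p$ respectively --- this is also why the $\max$ appears in the statement). Where you genuinely diverge is part (ii): the paper proves it directly by writing $x=g_\r^{-1}(y)$ with $y=g_\r(x)$, using the explicit inverse $g_\r^{-1}(y)=\frac{(\r+q-2)y-q+1}{\r-y}$ and computing $|x|_p=|y|_p/|y|_p=1$ when $|y|_p>1$; you instead observe that (ii) is the contrapositive of your strengthened form of (i), namely $|x|_p\neq 1\Rightarrow|g_\r(x)|_p\leq|q-1|_p<1$. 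Your route is more economical --- it needs no inverse formula and no check that $\r-y\neq 0$ --- and it makes explicit the sharper bound $|g_\r(x)|_p\leq|q-1|_p$; the paper's route has the mild virtue of being self-contained for (ii) (independent of (i)) and of exhibiting the inverse map, which is recorded anyway as a displayed formula alongside the lemma. Both arguments are sound, and your handling of the nonvanishing of the denominator (via $|q-2|_p=1$) closes the one point that a purely formal contrapositive might have left open.
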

\begin{proof} (i). Let $|x|_p<1$, then from \eqref{g(x)} we get
$$
|g_\r(x)|_p=\bigg|\frac{\r x+q-1}{x+\r+q-2}\bigg|_p=|q-1|_p<1.
$$
Now assume $|x|_p>1$, then analogously one finds
$$
|g_\r(x)|_p\left\{
\begin{array}{ll}
=|\r|_p, \ \ \textrm{if} \ |x|_p>\frac{|q-1|_p}{|\t|_p},\\[2mm]
\leq |q-1|_p, \ \ \textrm{if} \ 1<|x|_p\leq \frac{|q-1|_p}{|\r|_p}.
\end{array}
\right.
$$

(ii) Denoting $y=g_\r(x)$, from \eqref{g-1} one finds
$$
|x|_p=|g^{-1}_\r(y)|_p=\bigg|\frac{(\r+q-2)y-q+1}{\r-y}\bigg|_p=\frac{|y|_p}{|y|_p}=1.
$$
\end{proof}

\begin{thm}\label{U-1} Let $|\r|_p<|q-1|^2_p<1$ and $\sqrt{-3}\in\bq_p$.
Assume that $\{h_x\}_{x\in V\setminus\{(0)\}}$ is a solution of
\eqref{eq1} such that $|h_x|_p\neq 1$ for all $x\in
V\setminus\{(0)\}$. Then $h_x=x_1$ for every $x$.
\end{thm}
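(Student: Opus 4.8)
The plan is to restrict the tree recursion \eqref{eq1} to the invariant line, turn it into a scalar relation involving $g_\r$, and then show that the norm hypothesis $|h_x|_p\neq1$ traps every $h_x$ in a small ball on which $g_\r$ is a strict $p$-adic contraction toward the fixed point $x_1$. First I would record that on the line $(1,\dots,h,\dots,1)$ a direct substitution into \eqref{eq2} gives $F_m(\hat\h;\r)=g_\r(h)$ for the distinguished coordinate and $F_i(\hat\h;\r)=1$ for $i\neq m$, so that $\mathbf{F}(\hat\h_y;\r)$ again lies on the invariant line. Taking the coordinatewise product over the three successors (recall $k=3$), the relation \eqref{eq1} becomes the scalar recursion $h_x=\prod_{y\in S(x)}g_\r(h_y)$.

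Next I would carry out the valuation bookkeeping forced by $|\r|_p<|q-1|_p^2<1$, namely $|\r-1|_p=1$, $|\r+q-1|_p=|q-1|_p$, and $|\r+q-2|_p=1$. Since $|h_y|_p\neq1$ for every vertex, Lemma \ref{g-p}(i) gives $|g_\r(h_y)|_p\leq|q-1|_p$, and hence $|h_x|_p\leq|q-1|_p^3<1$ for all $x$; thus every $h_x$ lies in the ball $B=\{z:\,|z|_p\leq|q-1|_p^3\}$. Here the hypothesis $|h_y|_p\neq1$ is essential, as it is precisely what activates Lemma \ref{g-p}(i). By Remark \ref{41} the distinguished root $x_1$ of \eqref{eq13} satisfies $|x_1|_p=|q-1|_p^3$, so $x_1\in B$ as well, and it is a fixed point, i.e. $x_1=(g_\r(x_1))^3$.

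The core estimate is the contraction property of $g_\r$ on $B$. For $z,w\in B$ the denominators in \eqref{g-xy} are units, $|z+\r+q-2|_p=|w+\r+q-2|_p=1$, so \eqref{g-xy} collapses to $|g_\r(z)-g_\r(w)|_p=|q-1|_p\,|z-w|_p$. Applying Lemma \ref{pr} to the two products $h_x=\prod_{y\in S(x)}g_\r(h_y)$ and $x_1=\prod_{y\in S(x)}g_\r(x_1)$, whose factors all have norm $\leq|q-1|_p\leq1$, yields $|h_x-x_1|_p\leq\max_{y\in S(x)}|g_\r(h_y)-g_\r(x_1)|_p=|q-1|_p\max_{y\in S(x)}|h_y-x_1|_p$. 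Iterating this inequality $n$ levels down the (infinite) tree and using the uniform bound $|h_z-x_1|_p\leq\max\{|h_z|_p,|x_1|_p\}\leq|q-1|_p^3$ gives $|h_x-x_1|_p\leq|q-1|_p^{\,n+3}$ for every $n$; letting $n\to\infty$ forces $h_x=x_1$.

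The routine but delicate part is the valuation computation that pins the contraction factor at exactly $|q-1|_p$ and keeps the denominators unit-valued on $B$. The only genuine point of substance is that $x_1$ itself must lie in $B$, so that the contraction estimate applies to each pair $(h_y,x_1)$; this is exactly what Remark \ref{41} provides via Vieta's formulas $|x_1x_2x_3|_p=|q-1|_p^3$ together with $|x_2|_p=|x_3|_p=1$. Once the contraction and the location of $x_1$ are in hand, the conclusion is immediate from the ultrametric estimate and the fact that the tree is infinite, so the iteration can be pushed arbitrarily far.
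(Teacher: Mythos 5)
Your proof is correct and follows essentially the same route as the paper's: it uses Lemma \ref{g-p}(i) to force $|h_x|_p<1$ at every vertex, the Lipschitz identity \eqref{g-xy} together with Lemma \ref{pr} to get a per-level contraction toward $x_1$, and iteration down the infinite tree to conclude $h_x=x_1$. The only (harmless) differences are cosmetic: you trap all $h_x$ directly in the ball of radius $|q-1|_p^3$ rather than arguing by contradiction, you pin down $|x_1|_p=|q-1|_p^3$ exactly from Remark \ref{41} where the paper only uses $|x_1|_p<1$, and your contraction constant is $|q-1|_p$ per level versus the paper's $|q-1|_p^2$ --- either constant being strictly less than $1$ is all the argument needs.
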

\begin{proof} Let us first show that  $|h_x|_p<1$ for all $x$.
Suppose that $|h_x^{(n_0)}|_p>1$ for some $n_0\in\bn$ and $x\in
W_{n_0}$. Since $\{h_x\}$ is a solution of \eqref{eq1}, therefore,
we have
\begin{equation}\label{h-x}
h_x^{(n_0)}=g_\r(h_{(x,1)}^{(n_0+1)})g_\r(h_{(x,2)}^{(n_0+1)})g_\r(h_{(x,3)}^{(n_0+1)}),
\end{equation}
here we have used coordinate structure of the tree.

Now according to $|h_{(x,i)}^{(n_0+1)}|_p\neq 1$, ($i=1,2,3$), then Lemma \ref{g-p} (i) implies
that $|g_\r(h_{(x,i)}^{(n_0+1)})|_p<1$, which with \eqref{h-x} means
$|h_{x}^{(n_0)}|_p<1$. It is a contradiction.

Hence, $|h_x|_p<1$ for all $x$. Then from \eqref{g-xy} we obtain
\begin{equation}\label{g-xy-11}
|g_\r(h_{x})-g_\r(x_1)|_p=|q-1|_p|h_x-x_1|_p
\end{equation}
for any $x\in V\setminus\{(0)\}$.

Now denote
$$
\|h^{(n)}\|_p=\max\{|h_{x}^{(n)}|_p:\ x\in W_n\}.
$$

Let $\e>0$ be an arbitrary number. Then from the proof of Lemma
\ref{g-p} (i) with \eqref{g-xy-11}, Lemma \ref{pr} one finds
\begin{eqnarray}\label{f-hx-x11}
|h_x^{(n)}-x_1|_p&=&\bigg|\prod_{i=1}^3g_\r(h_{(x,i)}^{(n+1)})-(g_\r(x_1))^3\bigg|_p\nonumber\\
&\leq &
|q-1|^2_p\max_{i=1,2,3}\bigg\{|h_{(x,i)}^{(n+1)}-x_1|_p\bigg\}.
\nonumber
\end{eqnarray}
Thus, we derive
$$
\|h^{(n)}-x_1\|_p\leq |q-1|^2_p\|h^{(n+1)}-x_1\|_p.
$$
So, iterating the last inequality $N$ times one gets
\begin{equation}\label{f-iter1}
\|h^{(n)}-x_1\|_p\leq |q-1|^{2^N}_p\|h^{(n+N)}-x_1\|_p.
\end{equation}
Choosing $N$ such that $|q-1|^{2^N}_p<\epsilon$, from \eqref{f-iter1} we
find $\|h^{(n)}-x_1\|_p<\e$. Arbitrariness of $\epsilon$ yields that
$h_x=x_1$. This completes the proof.
\end{proof}

From this theorem we conclude that other solutions of \eqref{eq1}
may exist under condition $|h_x|_p=1$ for all $x\in
V\setminus\{(0)\}$. Note that using arguments of \cite{MR2} one can
show the existence of periodic solutions of \eqref{eq1}.

\section{Boundedness and phase transitions}

In the previous section we have established the existence of generalized $p$-adic quasi
Gibbs measures. In this
section we are going to investigate boundedness of the measures, and
moreover, establish the occurrence of phase transitions.

First we need an auxiliary result. Assume that $\h$ is a solution of \eqref{eq1},
then one finds a constant $a_\h(x)\in\bq_p$ such that
\begin{equation}\label{aN1}
\prod_{y\in
S(x)}\sum_{j=0}^{q-1}\rho^{\d_{ij}}h_{j,y}=a_{\h}(x)h_{i,x}
\end{equation}
for any $i\in\{0,\dots,q-1\}$. This implies
\begin{eqnarray}\label{aN2}
\prod_{x\in W_{n}}\prod_{y\in S(x)}\sum_{j=0}^{q-1}
\rho^{\d_{ij}}h_{j,y}=\prod_{x\in
W_n}a_{\h}(x)h_{i,x}=A_{\h,n}\prod_{x\in W_n}h_{i,x},
\end{eqnarray}
where
\begin{equation}\label{aN3}
A_{\h,n}=\prod_{x\in W_n}a_{\h}(x).
\end{equation}
Given $j\in\Phi$, by $\eta^{(j)}\in\Om_{W_n}$ we denote a
configuration on $W_n$ defined as follows: $\eta^{(j)}(x)=j$ for all
$x\in W_n$.

Hence, by  \eqref{mu},\eqref{aN2} we have
\begin{eqnarray*}
1&=&\sum_{\s\in\Om_n}\sum_{\w\in\Om_{W_n}}\m^{(n+1)}_\h(\s\vee \w)\\
&=&\sum_{\s\in\Om_n}\sum_{\w\in\Om_{W_n}}\frac{1}{Z^{(\h)}_{n+1,\rho}}\rho^{H(\s\vee
\w)}\prod_{x\in
W_{n+1}}h_{\w(x),x}\\
&=&\frac{1}{Z^{(\h)}_{n+1,\rho}}\sum_{\s\in\Om_n}\rho^{H(\s)}
\prod_{x\in W_n}\prod_{y\in S(x)}\sum_{j=0}^q \rho^{\d_{\s(x),j}}h_{j,y}\\
&=&\frac{A_{\h,n}}{Z^{(\h)}_{n+1,\rho}}
\sum_{\s\in\Om_n}\rho^{H(\s)}\prod_{x\in W_n}h_{\s(x),x}\\
&=&\frac{A_{\h,n}}{Z^{(\h)}_{n+1,\rho}}Z_{n,\rho}^{(\h)}
\end{eqnarray*}
which implies
\begin{equation}\label{ZN2}
Z^{(\h)}_{n+1,\rho}=A_{\h,n}Z^{(\h)}_{n,\rho}.
\end{equation}

In this section we basically investigate measures corresponding to solutions of \eqref{eq12}.
Let $\tilde x$ be a solution of \eqref{eq12}, and $\tilde\m$ be the corresponding $p$-adic measure
to such a solution. Then according to \eqref{ZN2} the
partition function $Z^{(\tilde x)}_{\r,n}$ corresponding to the measure $\tilde\m$
has the following form
\begin{equation}\label{Z-IN}
Z^{(\tilde x)}_{\r,n}=a_i^{|V_{n-1}|}
\end{equation}
where $a=(\tilde x+\r+q-2)^3$. Note that when $q\geq 3$, one can check that $a\neq 0$.

For a given configuration $\s\in\Om_{V_n}$ denote
$$
\#\s=\{x\in W_n:\ \s(x)=1\}.
$$

From \eqref{mu},\eqref{H} and \eqref{Z-IN} we find
\begin{eqnarray}\label{e-mui}
|\tilde\m(\s)|_p&=&\frac{|\rho|_p^{H(\s)}}{|Z^{(\tilde x)}_{\r,n}|_p}\cdot
\prod_{x\in W_n}\big|h_{\s(x),x}\big|_p\nonumber\\
&=&\frac{|\tilde x|_p^{\#\s}|\rho|_p^{H(\s)}}{|\tilde x+\r+q-2|_p^{3|V_{n-1}|}}.
\end{eqnarray}

\subsection{Regime $\r\in\ce_p$}

In this subsection we consider two modes: $|q|_p<1$ and $|q|_p=1$.

In this section we shall prove the existence of a phase transition.
Namely one has the following

\begin{thm}\label{bound1}
Let $p>3$, $|q|_p\leq p^{-3}$, $\r\in\ce_p$ with $|\r-1|_p\leq p^{-3}$.
Then for the
$p$-adic $q$-state Potts model \eqref{Potts} on the Cayley tree of order three
the $p$-adic Gibbs measures $\m_k$, $k=0,1,2,3$ are unbounded.
Moreover, there is a phase transition.
\end{thm}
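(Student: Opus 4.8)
The plan is to prove unboundedness by exhibiting, for each of the four measures $\m_k$, a sequence of configurations along which the measure blows up in $p$-adic absolute value, and then to deduce the phase transition directly from the definition. The starting point is formula \eqref{e-mui}, which expresses $|\tilde\m(\s)|_p$ in terms of $|\tilde x|_p^{\#\s}$, $|\rho|_p^{H(\s)}$ and the denominator $|\tilde x+\r+q-2|_p^{3|V_{n-1}|}$. Since $\r\in\ce_p$ we have $|\rho|_p=1$, so the factor $|\rho|_p^{H(\s)}=1$ and drops out entirely; this is the key simplification coming from the regime. The problem therefore reduces to controlling the two remaining factors, which depend only on the $p$-adic norm of the solution $\tilde x$ and on the norm of $\tilde x+\r+q-2$.

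First I would pin down the norms of the four solutions. The solution $x_0=1$ gives $|\tilde x|_p=1$, while for the three nontrivial solutions $x_1,x_2,x_3$ I would use the analysis already carried out before Theorem \ref{1-e}: these arise from solutions of \eqref{eq-alp1} lying in $\bz_p\setminus\bz_p^*$, and by Lemma \ref{sol} they lie in $\ce_p$, so again $|\tilde x|_p=1$ and in fact $|\tilde x-1|_p<1$. Thus for every $k$ one has $|\tilde x|_p=1$, which forces the numerator factor $|\tilde x|_p^{\#\s}=1$ to be harmless. The entire blow-up must therefore come from the denominator $|\tilde x+\r+q-2|_p^{3|V_{n-1}|}$. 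Under the standing hypotheses $|q|_p\leq p^{-3}$ and $|\r-1|_p\leq p^{-3}$, I would compute
$$
\tilde x+\r+q-2=(\tilde x-1)+(\r-1)+q,
$$
and show via the strong triangle inequality that $|\tilde x+\r+q-2|_p<1$. For $x_0=1$ this is immediate, giving $|1+\r+q-2|_p=|(\r-1)+q|_p<1$; for the nontrivial solutions I would use $|\tilde x-1|_p<1$ together with the bounds on $\r$ and $q$ to conclude the same strict inequality.

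Once $|\tilde x+\r+q-2|_p=p^{-s}$ for some fixed $s\geq 1$, formula \eqref{e-mui} gives, for any configuration $\s$,
$$
|\tilde\m(\s)|_p=\frac{1}{|\tilde x+\r+q-2|_p^{3|V_{n-1}|}}=p^{3s|V_{n-1}|},
$$
which tends to $\infty$ as $n\to\infty$ since $|V_{n-1}|\to\infty$ on the Cayley tree of order three. Taking, say, the sequence of one-point or full-level cylinder events and letting $n$ grow, each $\m_k$ is seen to be unbounded, because $\sup_{A}|\m_k(A)|_p\geq\sup_n|\tilde\m(\s_n)|_p=\infty$. Finally, for the phase transition I would invoke the definition from Section 2.4: we have exhibited (by Theorem \ref{1-e}) at least two distinct generalized $p$-adic quasi Gibbs measures, and we have just shown at least one of them is unbounded, which is exactly the criterion for a phase transition to occur.

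I expect the main obstacle to be the denominator estimate, namely verifying cleanly that $|\tilde x+\r+q-2|_p<1$ for the three nontrivial solutions. This requires combining the information $|\tilde x-1|_p<1$ (obtained only indirectly through Lemma \ref{sol} and the substitution $z=p^k y$) with the quantitative hypotheses on $|\r-1|_p$ and $|q|_p$, and being careful that no cancellation pushes the norm back up to $1$. Once that strict inequality is secured, everything else is a short application of \eqref{e-mui} and the growth of $|V_{n-1}|$.
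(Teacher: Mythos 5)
Your proposal is correct and follows essentially the same route as the paper: both rest on the observation that all four solutions $x_k$ lie in $\ce_p$ (so $|x_k|_p=1$ and $|\rho|_p=1$ kill the configuration-dependent factors in \eqref{e-mui}), then bound $|x_k+\r+q-2|_p=|(x_k-1)+(\r-1)+q|_p\leq p^{-1}$ by the strong triangle inequality, giving $|\m_k(\s)|_p\geq p^{3|V_{n-1}|}\to\infty$, and finally invoke the definition of phase transition. Your worry about cancellation in the denominator estimate is unfounded: the ultrametric inequality only ever gives $|a+b+c|_p\leq\max\{|a|_p,|b|_p,|c|_p\}$, so a sum of three terms each of norm $<1$ can never have norm equal to $1$.
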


\begin{proof} Note that under the condition of the theorem, there are four
$p$-adic Gibbs measures $\m_k$, $k=0,1,2,3$ (see Theorem \ref{1-e}). In this case, for
the solutions $x_k$ $(k=0,1,2,3)$ of \eqref{eq12} we have $x_k\in\ce_p$. Therefore,
one gets
$$
|x_k+\r+q-2|_p=|x_k-1+\r-1+q|_p\leq\frac{1}{p}.
$$
Hence, from \eqref{e-mui} one finds
\begin{eqnarray*}
|\m_k(\s)|_p=\frac{1}{|x_k+\r+q-2|_p^{3|V_{n-1}|}}\geq p^{3|V_{n-1}|}
\end{eqnarray*}
which implies the unboundedness of $\m_k$. Thus, there is a phase transition.
\end{proof}

Now consider the case $|q|_p=1$. In this setting one has

\begin{thm}\label{bound2}
Let $p>3$,  $\r\in\ce_p$ with $|\r-1|_p\leq |4q-3|_p$, and $|q|_p=|q-1|_p=|2q-1|_p=1$.
Let $\m_0$ be the $p$-adic Gibbs measure and assume that the conditions of Theorem \ref{1-2e}
are satisfied, i.e. there is at least one
translation-invariant generalized $p$-adic quasi Gibbs measure $\tilde\m$ for the
$p$-adic $q$-state Potts model \eqref{Potts} on the Cayley tree of order three.
Then the measure $\m_0$ is bounded, and $\tilde\m$ is unbounded. This means that
there is a phase transition.
\end{thm}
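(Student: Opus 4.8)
The plan is to read off the $p$-adic sizes of the cylinder measures directly from the closed form \eqref{e-mui}, so that boundedness reduces to estimating the single quantity $|\tilde x+\r+q-2|_p$ attached to each solution $\tilde x$ of \eqref{eq12}. Since we are in the regime $\r\in\ce_p$, Lemma \ref{21} gives $|\rho|_p=1$, so the factor $|\rho|_p^{H(\s)}$ in \eqref{e-mui} is trivial. Next I would check that each relevant solution satisfies $|\tilde x|_p=1$: for $x_0=1$ this is clear, while for the solution producing $\tilde\m$ we write $\tilde x=z+A/3$ with $z\in\bz_p\setminus\bz_p^*$ (so that $|z|_p=|\r-1|_p<1$ by the scaling $z=p^ky$, $p^k|\r-1|_p=1$, used in its construction) and invoke $|A|_p=1$ from \eqref{A}, whence $|\tilde x|_p=|A/3|_p=1$. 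Then $|\tilde x|_p^{\#\s}=1$, and \eqref{e-mui} collapses to $|\tilde\m(\s)|_p=|\tilde x+\r+q-2|_p^{-3|V_{n-1}|}$, which is independent of $\s$. Boundedness of each measure is thus decided by whether the base $|\tilde x+\r+q-2|_p$ equals $1$ or is strictly smaller, as $|V_{n-1}|\to\infty$.

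For $\m_0$ I would compute $x_0+\r+q-2=\r+q-1=q+(\r-1)$; since $|q|_p=1$ and $|\r-1|_p<1$, the strong triangle inequality gives $|x_0+\r+q-2|_p=1$, so $|\m_0(\s)|_p=1$ for every $\s$ and $\m_0$ is bounded. For $\tilde\m$ the decisive step is the algebraic identity $A+3(\r+q-2)=\r^3-1$, immediate from $A=\r^3-3\r-3q+5$ in \eqref{ABC}; it yields $\tilde x+\r+q-2=z+\frac{A}{3}+\r+q-2=z+\frac{\r^3-1}{3}$. Now $|z|_p=|\r-1|_p$ and $|\r^3-1|_p=|\r-1|_p\,|\r^2+\r+1|_p=|\r-1|_p$ (because $\r\equiv1$ forces $\r^2+\r+1\equiv3$, a unit), so both summands have norm $|\r-1|_p<1$ and hence $|\tilde x+\r+q-2|_p\leq|\r-1|_p<1$; it is nonzero because the partition function base $a=(\tilde x+\r+q-2)^3\neq0$ for $q\geq3$. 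Therefore $|\tilde\m(\s)|_p\geq|\r-1|_p^{-3|V_{n-1}|}\to\infty$, and $\tilde\m$ is unbounded.

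Finally, $\m_0$ and $\tilde\m$ are distinct generalized $p$-adic quasi Gibbs measures (distinctness is clear from $\tilde x\neq1$, equivalently from their differing boundedness), one bounded and one unbounded, so the definition of a phase transition recalled earlier applies and the proof is complete. The only genuine obstacle is the bookkeeping that isolates the cancellation in the denominator: the whole dichotomy hinges on the fact that for $x_0=1$ the unit $q$ survives in $x_0+\r+q-2$ (keeping the norm equal to $1$), whereas for $\tilde x$ the identity $A+3(\r+q-2)=\r^3-1$ cancels the unit part and forces $|\tilde x+\r+q-2|_p<1$. I would therefore present that identity with care and verify the estimate $|z|_p=|\r-1|_p$ coming from the scaling used to construct $\tilde\m$.
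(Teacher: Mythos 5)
Your proof is correct and follows essentially the same route as the paper: both rest on formula \eqref{e-mui}, both get boundedness of $\m_0$ from $|x_0+\r+q-2|_p=|q+(\r-1)|_p=1$, and both get unboundedness of $\tilde\m$ from the same cancellation $\tilde x+\r+q-2=\tilde z+\frac{\r^3-1}{3}$ (your identity $A+3(\r+q-2)=\r^3-1$ is exactly the inline simplification in \eqref{xx}), forcing $|\tilde x+\r+q-2|_p<1$ and hence $|\tilde\m(\s)|_p\to\infty$. The only differences are cosmetic: you verify $|\tilde x|_p=1$ explicitly (the paper leaves this implicit) and use the sharper $|\tilde z|_p=|\r-1|_p$ where the paper only needs $|\tilde z|_p<1$.
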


\begin{proof} We recall that the measure $\m_0$ corresponds to the solution $x_0=1$.
It is clear that
$$
|x_0+\r+q-2|_p=|\r-1+q|_p=1.
$$
Hence, from \eqref{e-mui} one finds
\begin{eqnarray*}
|\m_0(\s)|_p=\frac{1}{|x_0+\r+q-2|_p^{3|V_{n-1}|}}=1
\end{eqnarray*}
which yields the boundedness of $\m_0$.

Now
assume that $\tilde\m$ corresponds to a solution $\tilde x$ of \eqref{eq13}. Note that one has
$\tilde x=\tilde z+A/3$, where $\tilde z$ is a solution of \eqref{eq-alp1} such that $|\tilde z|_p<1$.

So, from \eqref{A} we have
\begin{eqnarray}\label{xx}
|\tilde x+\r+q-2|_p&=&\bigg|\tilde z+\frac{A}{3}+\r+q-2\bigg|_p\nonumber\\[2mm]
&=&\bigg|\tilde z+1-(\r-1)+\frac{\r^3-1}{3}-q+\r+q-2\bigg|_p\nonumber\\ [2mm]
&=&|3\tilde z+(\r^3-1)|_p\leq \frac{1}{p}
\end{eqnarray}

Hence, from \eqref{e-mui} with \eqref{xx} one gets
$|\tilde\m(\s)|_p\geq p^{3|V_{n-1}|}$, so $\tilde \m$ is unbounded. This implies that the
existence of a phase transition.
\end{proof}

\subsection{Regime $|\r|_p<1$}

In this subsection we are going to establish the existence of quasi phase transition.

\begin{thm}\label{bound3}
Let $p>3$,  $|\r|_p,|q|_p\leq p^{-2}$. Assume that the conditions of Theorem \ref{1-r} are
satisfied. In this case, there two translation-invariant generalized $p$-adic Gibbs
measures $\m_0$ and $\tilde\m$. Then the measures $\m_0$, $\tilde\m$ are bounded.
This means that
there is a quasi phase transition.
\end{thm}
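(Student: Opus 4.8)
The plan is to prove boundedness of both measures by direct estimation via \eqref{e-mui}, and then to read off the quasi phase transition from its definition. Recall that $\m_0$ corresponds to the solution $x_0=1$ of \eqref{eq12}, while $\tilde\m$ corresponds to the nontrivial solution $\tilde x=\tilde z+A/3$ of \eqref{eq13}, where $\tilde z$ is the unique solution of \eqref{eq-alp1} produced in the proof of Theorem \ref{1-r}. In view of \eqref{e-mui}, boundedness of a measure associated with a solution $\tilde x$ follows once we verify that $|\tilde x+\r+q-2|_p=1$ and $|\tilde x|_p\le 1$; indeed these two facts give $|\tilde\m(\s)|_p=|\tilde x|_p^{\#\s}|\rho|_p^{H(\s)}\le 1$ for all $\s$ and $n$, because $|\rho|_p<1$ and $H(\s)\ge 0$.

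First I would dispose of $\m_0$. Here $|x_0|_p=1$, and since $|\r|_p,|q|_p\le p^{-2}$ the strong triangle inequality yields $|x_0+\r+q-2|_p=|\r+q-1|_p=1$. Substituting into \eqref{e-mui} gives $|\m_0(\s)|_p=|\rho|_p^{H(\s)}\le 1$, so $\m_0$ is bounded.

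For $\tilde\m$ I would first observe from \eqref{ABC} that $|A|_p=1$ under the present hypotheses, hence $|A/3|_p=1$ and, since $\tilde z\in\bz_p$, also $|\tilde x|_p\le\max\{|\tilde z|_p,|A/3|_p\}=1$. The key step is the algebraic identity $\tilde x+\r+q-2=\tilde z+(\r^3-1)/3$, obtained by collecting terms in $A/3+\r+q-2$; since $|(\r^3-1)/3|_p=1$, this gives $|\tilde x+\r+q-2|_p\le 1$, and it remains to show equality. If $|\tilde z|_p<1$ (the case $p=19$ in Theorem \ref{1-r}), the strong triangle inequality forces $|\tilde z+(\r^3-1)/3|_p=1$ immediately. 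If $|\tilde z|_p=1$, I would reduce modulo $p$: by \eqref{aabb} one has $(\r^3-1)/3\equiv -1/3$ and $\tilde z\equiv z_0$ for some $z_0\in\{1,\dots,p-1\}$, so the norm could drop below $1$ only if $z_0\equiv 1/3\pmod p$; but substituting $z=1/3$ into the reduced cubic $z^3-(4/3)z+38/27\equiv 0\pmod p$ gives the value $1\not\equiv 0$, so $z_0\ne 1/3$ and $|\tilde x+\r+q-2|_p=1$. Thus $|\tilde\m(\s)|_p=|\tilde x|_p^{\#\s}|\rho|_p^{H(\s)}\le 1$, and $\tilde\m$ is bounded as well.

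Finally, since $\m_0$ and $\tilde\m$ arise from the two distinct solutions guaranteed by Theorem \ref{1-r} and both have been shown to be bounded, the definition of quasi phase transition is met, completing the argument. The main obstacle is precisely the no-cancellation check in the case $|\tilde z|_p=1$, namely verifying $\tilde z\not\equiv 1/3\pmod p$: this is exactly what keeps the denominator in \eqref{e-mui} equal to $1$ and hence prevents $\tilde\m$ from being unbounded.
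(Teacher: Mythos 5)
Your proof is correct and takes essentially the same route as the paper's: the same estimate of $|\m_0(\s)|_p$ via \eqref{e-mui}, the same algebraic identity $\tilde x+\r+q-2=\tilde z+(\r^3-1)/3$ (the paper's \eqref{xx1}), the same split on $|\tilde z|_p$ according to $p=19$ or $p\neq 19$, and the same reduction modulo $p$ using \eqref{aabb} to exclude $\tilde z\equiv 1/3 \pmod p$, which is exactly the paper's contradiction $1\equiv 0 \pmod p$. The only differences are cosmetic and harmless: you make explicit the needed bound $|\tilde x|_p\le 1$ (left implicit in the paper), while your side remark that $|A|_p=1$ can fail (e.g.\ for $p=5$, where $|A|_p=|5|_p<1$) — but your argument only uses $|A/3|_p\le 1$, which always holds.
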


\begin{proof} Note that the measure $\m_0$ corresponds to the solution $x_0=1$.
So, we have
$$
|x_0+\r+q-2|_p=|\r+q-1|_p=1.
$$
Hence, from \eqref{e-mui} one finds
\begin{eqnarray*}
|\m_0(\s)|_p=\frac{|\rho|_p^{H(\s)}}{|x_0+\r+q-2|_p^{3|V_{n-1}|}}\leq |\r|_p.
\end{eqnarray*}
Therefore, $\m_0$ is bounded.

Now
assume that $\tilde\m$ corresponds to a solution $\tilde x$ of \eqref{eq13}. Note that
one has $\tilde x=\tilde z+A/3$, where $\tilde z$ is a solution of \eqref{eq-alp1}
such that
$$
|\tilde z|_p=
\left\{
\begin{array}{ll}
1, \ \ \textrm{if} \ \ p\neq 19,\\
p^{-1}, \ \ \textrm{if} \ \ p=19.
\end{array}
\right.
$$
for the detail we refer to the proof of Theorem \ref{1-r}.

So, from \eqref{A} we have
\begin{eqnarray}\label{xx1}
|\tilde x+\r+q-2|_p&=&\bigg|\tilde z+\frac{A}{3}+q-1\bigg|_p\nonumber\\[2mm]
&=&\bigg|\tilde z-\frac{1}{3}+\frac{\r^3}{3}\bigg|_p.
\end{eqnarray}

We want to show that $|\tilde z-1/3|_p=1$. If $|\tilde z|_p<1$, then it is evident.

Let $|\tilde z|_p=1$, and assume that $|\tilde z-1/3|_p<1$. Then it yields that
$\tilde z=1/3+\t\eta_1 p$ for some $\eta_1\in\bz_p$. Since, $\tilde z$ is a solution of
\eqref{eq-alp1}, then from \eqref{aabb} one gets
$$
\frac{1}{3^3}+\tilde\eta_1p-\bigg(-\frac{4}{3}+\e p^2\bigg)\bigg(\frac{1}{3}+\eta_1p\bigg)=
-\frac{38}{27}+\e_1p^2.
$$
This implies that
$$
\frac{1}{27}-\frac{4}{9}+\frac{38}{27}\equiv 0 \ (\textrm{mod}\ p).
$$
The last one is equivalent to $1\equiv 0 \ (\textrm{mod}\ p)$ which is impossible. Therefore,
$|\tilde z-1/3|_p=1$, this with \eqref{xx1} yields that
$|\tilde x+\r+q-2|_p=1$. Consequently, from \eqref{e-mui} one gets that the measure
$\tilde\m$ is bounded. Thus, a quasi phase transition occurs.
\end{proof}

Note that when $k=2$ the similar phase transition has occurred for the model (see \cite{Mq}).

\begin{thm}\label{bound3}
Let $p>3$,  $|\r|_p<|q-1|_p^2<1$, and assume that $\sqrt{-3}$ exists. Then for the
translation-invariant generalized $p$-adic quasi Gibbs
measures $\m_k$ ($k=0,1,2,3$) one has: the measures $\m_0$ is unbounded, but
the measures $\m_k$ $(k=1,2,3$) are bounded.
Moreover, there is a strong phase transition.
\end{thm}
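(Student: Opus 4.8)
The plan is to compute the $p$-adic norm $|\tilde m(\s)|_p$ for each measure via formula \eqref{e-mui}, which requires controlling $|x_k+\r+q-2|_p$ for the four solutions $x_0,x_1,x_2,x_3$ of \eqref{eq12}. The key input is Remark \ref{41}, which already establishes, under the hypotheses $|\r|_p<|q-1|_p^2<1$ and $\sqrt{-3}\in\bq_p$, a crucial dichotomy among the three nontrivial roots: exactly one of them, say $x_1$, satisfies $|x_1|_p<1$, while the other two satisfy $|x_2|_p=|x_3|_p=1$. Together with the trivial solution $x_0=1$, this gives four distinct measures, matching the count in Theorem \ref{2-r}(i). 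So the whole argument is driven by separating the root $x_0=1$ from the three others according to their norms.

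First I would treat $\m_0$, corresponding to $x_0=1$. Here $|x_0+\r+q-2|_p=|\r+q-1|_p$. Since $|\r|_p<|q-1|_p^2<1$ and $|q-1|_p<1$, the strong triangle inequality gives $|\r+q-1|_p=|q-1|_p<1$ (the dominant term is $q-1$, as $|\r|_p$ is strictly smaller). Feeding this into \eqref{e-mui} yields
\begin{equation*}
|\m_0(\s)|_p=\frac{|\r|_p^{H(\s)}}{|q-1|_p^{3|V_{n-1}|}},
\end{equation*}
and the denominator decays like $|q-1|_p^{3|V_{n-1}|}\to 0$ as $n\to\infty$ while the numerator is at worst $1$ (or can be made to contribute trivially on a suitable configuration, e.g. $\s\equiv 0$, giving $H(\s)$ maximal but $|\r|_p^{H}$ only improving unboundedness). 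Hence $|\m_0(\s)|_p$ can be made arbitrarily large, so $\m_0$ is unbounded.

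Next I would handle $\m_k$ for $k=1,2,3$, the measures attached to the nontrivial roots. For $k=2,3$ we have $|x_k|_p=1$, and I must evaluate $|x_k+\r+q-2|_p=|x_k-1+(\r-1)+(q-1)|_p$. Since $|\r-1|_p$ and $|q-1|_p$ are at most $|q-1|_p<1$, the term $x_k-1$ controls the norm, and one expects $|x_k+\r+q-2|_p=1$ provided $|x_k-1|_p=1$, i.e. provided $x_k\not\equiv 1\ (\mathrm{mod}\ p)$; then \eqref{e-mui} gives $|\m_k(\s)|_p=|\r|_p^{H(\s)}\le 1$, forcing boundedness. For $k=1$, where $|x_1|_p<1$, one computes $|x_1+\r+q-2|_p=|(q-1)+(\r-1)+(x_1-1)|_p$; since $|x_1-1|_p=1$ here (as $|x_1|_p<1$ means $x_1-1$ is a unit), again $|x_1+\r+q-2|_p=1$ and $\m_1$ is bounded. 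The main obstacle is precisely verifying that $|x_k+\r+q-2|_p=1$ for all three nontrivial roots, i.e. ruling out the degenerate possibility $x_k\equiv 1\ (\mathrm{mod}\ p)$ for $k=2,3$; this must be extracted from the explicit reduction $x_k=z_k+A/3$ together with the norm information on $z_k$ coming from the analysis of \eqref{eq-alp1} and the relations \eqref{alpha5}, \eqref{beta5}, in the same spirit as the unit-computation carried out for the previous theorem. Once boundedness of $\m_1,\m_2,\m_3$ and unboundedness of $\m_0$ are established, the strong phase transition follows directly: taking $A_n=\{\s:\s|_{V_n}\equiv 0\}$ (or any fixed configuration sequence realizing the extremal norm), one has $|\m_k(A_n)|_p\to 0$ for a bounded measure while $|\m_0(A_n)|_p\to\infty$, which is exactly the definition of a strong phase transition given in Section 2.
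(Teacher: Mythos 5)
Your proposal has two genuine problems, one a deferred step and one an outright error in the choice of configurations.

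First, the boundedness of $\m_2,\m_3$. You correctly isolate the crux --- one must show $|x_{2,3}-1|_p=1$, i.e.\ rule out $x_{2,3}\equiv 1$ modulo $p$, since $|x_{2,3}|_p=1$ alone gives nothing --- but you then defer it (``this must be extracted from \dots in the same spirit as the unit-computation carried out for the previous theorem''), citing a nonexistent equation \eqref{alpha5}-type relation for $\b$ along the way. This deferred step is not routine; it is the technical core of the paper's proof. There, one uses \eqref{Dand3x_0^2+a} to get $\big|{-3}(z_1/2)^2+\a\big|_p=|D|_p<1$, hence via \eqref{x+-} that $|z_{2,3}+z_1/2|_p<1$; separately, from $|3z_1^2+\a|_p<1$ and \eqref{alpha5} one shows $|z_1-1/3|_p\,|z_1+1/3|_p<1$, which in either case forces $|z_1/2+1/3|_p=1$; combining, $|3z_{2,3}-1|_p=1$, and only then does $|x_{2,3}+\r+q-2|_p=|3z_{2,3}-1+\r^3|_p=1$ follow. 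Without some version of this computation your claim that $\m_2,\m_3$ are bounded is unsupported.

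Second, and more seriously, your configurations do not do what you claim. Write $|\r|_p=p^{-k}$, $|q-1|_p=p^{-m}$, so the hypothesis gives $k>2m$, and $|\m_0(\s)|_p=p^{3m|V_{n-1}|-kH(\s)}$ by \eqref{e-mu22}. The numerator $|\r|_p^{H(\s)}$ works \emph{against} unboundedness, so your parenthetical that $\s\equiv 0$ gives ``$H(\s)$ maximal but $|\r|_p^{H}$ only improving unboundedness'' is backwards: on the constant configuration $H(\s)\approx 3|V_{n-1}|$, hence $|\m_0(\s)|_p\approx p^{3(m-k)|V_{n-1}|}\to 0$. To prove $\m_0$ unbounded one needs $H(\s)$ small; the paper takes the level-alternating configuration $\s_{0,n}$ with $H(\s_{0,n})=0$. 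The same error sinks your strong phase transition: on your $A_n=\{\s:\s|_{V_n}\equiv 0\}$ \emph{both} $|\m_0|_p$ and $|\m_2|_p=|\r|_p^{H}$ tend to $0$, while on the alternating configuration $|\m_2|_p\equiv 1$ does not tend to $0$. One needs a single sequence with $H(\s)$ of intermediate size; the paper uses $\tilde\s_n$ equal to $1$ on $V_{n-1}$ and $0$ on $W_n$, so that $H(\tilde\s_n)=|V_{n-1}|$, and even then must split into the cases $3m>k$ (treated) and $3m\le k$ (noted as technically involved) to conclude $|\m_0(\tilde\s_n)|_p\to\infty$ while $|\m_2(\tilde\s_n)|_p\to 0$. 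As written, your argument for both the unboundedness of $\m_0$ and the strong phase transition fails.
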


\begin{proof} We again recall that the measure $\m_0$ corresponds to the solution $x_0=1$.
So, one has
$$
|x_0+\r+q-2|_p=|\r+q-1|_p=|q-1|_p.
$$
Hence, from \eqref{e-mui} one finds
\begin{eqnarray}\label{e-mu22}
|\m_0(\s)|_p=\frac{|\rho|_p^{H(\s)}}{|x_0+\r+q-2|_p^{3|V_{n-1}|}}
= \frac{|\r|_p^{H(\s)}}{|q-1|_p^{3|V_{n-1}|}}.
\end{eqnarray}

Now let us choose $\s_{0,n}\in\Om_{V_{2n}}$ as follows
\begin{eqnarray*}
\s_{0,n}(x)= \left\{
\begin{array}{ll}
1, \ \ x\in W_{2k}, \\[2mm]
0, \ \ x\in W_{2k-1},
\end{array}
\right. \ \ 1\leq k\leq n.
\end{eqnarray*}
Then one can see that $H(\s_{0,n})=0$, therefore it follows from
\eqref{e-mu22} that
$$
|\m_0(\s_{0,n})|_p\geq p^{2|V_{2n-1}|}\to\infty \ \ \textrm{as} \ \
n\to\infty.
$$
This yields that the measure $\m_0$ is not bounded.

From Remark \ref{41} we know that the equation \eqref{eq13} has three solutions
$x_1,x_2$ and $x_3$ such that $|x_1|_p<1$ and $|x_2|_p=|x_3|_p=1$.
Let $\m_i$ be the corresponding
generalized $p$-adic quasi Gibbs measures ($k=1,2,3$). Note that the solutions $x_i$ are defined
by $z_i$ solutions of \eqref{eq-alp1} via $x_i=z_i+A/3$.

Now consider the measure $\m_1$. From $|x_1|_p<1$ it follows that
$|x_1+\r+q-2|_p=1$, which immediately yields that $\m_1$ is bounded.

To study $\m_{2,3}$ we need some analysis.
From the proof of Theorem \ref{2-r} we know that $|D|_p<1$. Due to \eqref{Dand3x_0^2+a}
one finds that
$$
\bigg|-3\bigg(\frac{z_1}{2}\bigg)^2+\a\bigg|_p=|D|_p<1
$$
Now from \eqref{x+-} we obtain that
\begin{equation}\label{z23}
\bigg|z_{2,3}+\frac{z_1}{2}\bigg|_p<1.
\end{equation}

On the other hand, from the proof of Theorem \ref{numberofsolutionsforp>3} one finds
that $|3z_1^2+\a|_p<1$. The last inequality with \eqref{alpha5} yields that
$$
\bigg|3z_1^2-\frac{1}{3}+\eta p\bigg|_p<1,
$$
where $\eta\in\bz_p$. Then using the last one we have
\begin{eqnarray}\label{z234}
\bigg|z_1-\frac{1}{3}\bigg|_p\bigg|z_1+\frac{1}{3}\bigg|_p&=&
\bigg|z_1^2-\frac{1}{9}\bigg|_p \nonumber  \\[3mm]
&=&\bigg|3z_1^2-\frac{1}{3}\bigg|_p \nonumber \\
&=&\bigg|\bigg(3z_1^2-\frac{1}{3}+\eta p\bigg)-\eta p\bigg|_p<1
\end{eqnarray}

Hence, \eqref{z234} implies that
$$
\bigg|z_1-\frac{1}{3}\bigg|_p<1 \ \ \ \textrm{or} \ \ \  \bigg|z_1-\frac{1}{3}\bigg|_p<1.
$$
In both cases, we infer that
\begin{equation}\label{1z2}
\bigg|\frac{z_1}{2}+\frac{1}{3}\bigg|_p =\bigg|z_1+\frac{2}{3}\bigg|_p=1.
\end{equation}

Now using \eqref{z23} and \eqref{1z2} one gets
\begin{eqnarray}\label{1z3}
|3z_{2,3}-1|_p=\bigg|z_{2,3}-\frac{1}{3}\bigg|_p=
\bigg|\bigg(z_{2,3}+\frac{z_1}{2}\bigg)-\bigg(\frac{z_1}{2}+\frac{1}{3}\bigg)\bigg|_p=1.
\end{eqnarray}

By means of \eqref{1z3} we find
$$
|x_{2,3}+\r+q-2|_p=\bigg|z_{2,3}+A/3+\r+q-2\bigg|_p=
|3z_{2,3}-1+\r^3|_p=1.
$$

Consequently, the last equality with \eqref{e-mui} yields that
\begin{equation}\label{m111}
\m_{2,3}(\s)=|\r|_p^{H(\s)}\leq 1.
\end{equation}
So, the measures $\m_{2,3}$ are bounded.

Now consider the following multiplication (see \eqref{e-mu22},\eqref{m111})
\begin{equation}\label{m112}
|\m_0(\s)\m_2(\s)|_p=\frac{|\r|_p^{2H(\s)}}{|q-1|_p^{3|V_{n-1}|}}
\end{equation}

Let $|\r|_p=p^{-k}$ and $|q-1|_p=p^{-m}$. Then from $|\r|_p<|q-1|_p^2<1$ one finds that
$0<2m<k$. Hence, \eqref{m112} can be rewritten as follows
\begin{equation}\label{m113}
|\m_0(\s)\m_2(\s)|_p=p^{3m|V_{n-1}|-2kH(\s)}.
\end{equation}

Now we choose $\tilde\s_{n}\in\Om_{V_{n}}$ as follows
\begin{eqnarray*}
\tilde\s_{n}(x)= \left\{
\begin{array}{ll}
1, \ \ x\in V_{n-1}, \\[2mm]
0, \ \ x\in W_n,
\end{array}
\right.
\end{eqnarray*}
Then one can see that $H(\tilde\s_{n})=|V_{n-1}|$.

From $3m<4m<2k$ we conclude that
$$
3m|V_{n-1}|<2k|V_{n-1}|=2kH(\tilde\s_n)
$$
so, the last inequality with \eqref{m113} yields that
\begin{equation}\label{m114}
|\m_0(\tilde\s_n)\m_2(\tilde\s_n)|_p\leq 1 \ \ \textrm{for all} \ \ n\geq 1.
\end{equation}

For the sake of simplicity we assume that $3m>k$. Then one finds
$$
3m|V_{n-1}|-kH(\tilde\s_n)=(3m-k)|V_{n-1}|>0
$$
which with \eqref{e-mu22} implies that $|\m_0(\tilde\s_n)|_p\to\infty$ as $n\to\infty$. This
with \eqref{m114} yields that $|\m_2(\tilde\s_n)|_p\to 0$ as $n\to\infty$. Consequently,
there exists a strong phase transition.

Note that if $3m\leq k$, the same phase transition occurs, but it is
technically involved. Therefore, we leave this case.
\end{proof}

\section{Conclusions}

In the present paper, we have studied a phase transition problem
for the $q$-state $p$-adic Potts model over the
Cayley tree of order three. We consider a more general notion
of $p$-adic Gibbs measure which depends on parameter $\rho\in\bq_p$.
Such a measure
is called {\it generalized $p$-adic quasi Gibbs measure}. When $\rho$ equals to $p$-adic
exponent, then it coincides with the usual $p$-adic Gibbs measure (see \cite{MR1}).
When $\rho=p$, then
it coincides with $p$-adic quasi Gibbs measure (see \cite{Mq}). In the present paper
we have considered
two regimes with respect to the values of $|\rho|_p$. Namely, in the first
regime, one takes $\rho=exp_p(J)$ for some $J\in\bq_p$, in the second one we let $|\rho|_p<1$.
In each regime, we first find conditions for the existence of generalized
$p$-adic quasi Gibbs measures. Furthermore, in the first regime, we
established the existence of
the phase transition under the condition $|q|_p\leq p^{-3}$.
If $|q|_p=1$, we
proved the existence of a phase transition under some conditions. In the second regime,
when $|\r|_p,|q|_p\leq p^{-2}$ we proved the existence of a quasi phase transition. It
turns out that if
$|\r|_p<|q-1|_p^2<1$, then one finds the existence of
the strong phase transition.

\section*{Acknowledgement} The first named author (F.M.)
acknowledges the Scientific and Technological Research Council of Turkey (TUBITAK)
for support, and Zirve University for kind hospitality.

\appendix

\section{Cubic equation $x^3+ax=b$ over $\bz_p^*$}

In this section we are going to provide a criterion of the existence
of a solution of the cubic equation
\begin{equation}\label{cubiceqn}
x^3+ax=b,
\end{equation}
where $a,b\in\bq_p$. \footnote{Material of this section is taken
from \cite{MOS}}

Let us consider the following depressed cubic equation in the field
$\bbf_p=\bz/p\bz$
\begin{eqnarray}\label{cubiccong}
x^3+\bar{a}x=\bar{b},
\end{eqnarray}
where $\bar{a},\bar{b}\in \bbf_p.$  We assume that
$\bar{a}\neq\bar{0}$ and $\bar{b}\neq\bar{0}$. The number of
solutions ${\mathbf{N}}_{\bbf_p}(x^3+\bar{a}x-\bar{b})$ of this
equation was described in \cite{ZHS2}.

\begin{prop}[\cite{ZHS2}]\label{CubicinF_p}
Let $p>3$ be a prime number and $\bar{a},\bar{b}\in\bbf_p$ with
$\bar{a}\bar{b}\neq\bar{0}$. Let
$\overline{D}=-4\bar{a}^3-27\bar{b}^2$ and
$u_{n+3}=\bar{b}u_n-\bar{a}u_{n+1}$ for $n\in\bn$ with
$u_1=\bar{0},$ $u_2=-\bar{a},$ $u_3=\bar{b}.$ Then the following
holds true:
$$
{\mathbf{N}}_{\bbf_p}(x^3+\bar{a}x-\bar{b})=\left\{
\begin{array}{l}
3 \ \ \  if \ \ \ \overline{D}u_{p-2}^2=\bar{0} \\
0 \ \ \ if \ \ \ \overline{D}u_{p-2}^2=9\bar{a}^2 \\
1 \ \ \ if \ \ \ \overline{D}u_{p-2}^2\neq \bar{0}, 9\bar{a}^2
\end{array}
\right.
$$
\end{prop}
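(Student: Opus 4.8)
The plan is to exploit the fact that the characteristic polynomial of the recurrence $u_{n+3}=\bar b u_n-\bar a u_{n+1}$ is precisely $f(x)=x^3+\bar a x-\bar b$, whose roots $\lambda_1,\lambda_2,\lambda_3$ in an algebraic closure of $\bbf_p$ are exactly the roots of \eqref{cubiccong}. First I would treat the separable case $\overline D\neq\bar0$, so that the $\lambda_i$ are distinct. Using the generating function $U(x)=\sum_{n\ge1}u_nx^n$ together with the initial data $u_1=\bar0,\ u_2=-\bar a,\ u_3=\bar b$, a short manipulation gives $U(x)=\dfrac{x^2(\bar bx-\bar a)}{\prod_i(1-\lambda_i x)}$; a partial-fraction decomposition then yields the closed form $u_n=\sum_{i=1}^3\dfrac{\lambda_i^{\,n+2}}{f'(\lambda_i)}$, the numerator having been simplified to $\lambda_i^2$ via $\lambda_i^3=\bar b-\bar a\lambda_i$.

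The crucial specialization is $n=p-2$, which gives $u_{p-2}=\sum_i \dfrac{\lambda_i^{\,p}}{f'(\lambda_i)}$, so that each $\lambda_i^p$ is the image of $\lambda_i$ under the Frobenius automorphism $\phi:x\mapsto x^p$. I would then invoke the standard dichotomy: writing $\delta=\prod_{i<j}(\lambda_i-\lambda_j)$ one has $\delta^2=\overline D$, and $\overline D$ is a nonzero square in $\bbf_p$ exactly when $\phi$ permutes the roots by an even permutation (the identity or a $3$-cycle), while $\overline D$ is a non-square exactly when $\phi$ acts as a transposition. These three permutation types correspond, respectively, to $f$ splitting completely ($N=3$), $f$ being irreducible ($N=0$), and $f$ factoring as a linear times an irreducible quadratic ($N=1$); this is the entry point for the count $N$.

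It then remains to evaluate $u_{p-2}$ in each case. When $\phi=\mathrm{id}$ (so $N=3$) every $\lambda_i^p=\lambda_i$ and the Lagrange identity $\sum_i\lambda_i/f'(\lambda_i)=\bar0$ forces $u_{p-2}=\bar0$, hence $\overline D u_{p-2}^2=\bar0$. When $\phi$ is a $3$-cycle (so $N=0$), substituting $\lambda_i^p=\lambda_{\sigma(i)}$ and clearing denominators over $\delta$ produces a symmetric numerator equal to $\mp3\bar a$, so $u_{p-2}=\mp3\bar a/\delta$ and therefore $\overline D u_{p-2}^2=\delta^2\cdot 9\bar a^2/\delta^2=9\bar a^2$. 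When $\phi$ is the transposition fixing the rational root $\lambda_1$ (so $N=1$), the same computation collapses, using $\lambda_1+\lambda_2+\lambda_3=\bar0$, to $u_{p-2}=3\lambda_1/f'(\lambda_1)$, whence $\overline D u_{p-2}^2=9\lambda_1^2(\lambda_2-\lambda_3)^2$; since $(\lambda_2-\lambda_3)^2$ is the discriminant of the irreducible quadratic factor it is a non-square in $\bbf_p$, so $\overline D u_{p-2}^2$ is a non-square and in particular differs from both $\bar0$ and the square $9\bar a^2$. Matching these three outputs against the trichotomy in Proposition \ref{CubicinF_p} closes the separable case.

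I expect the symmetric-function bookkeeping in the $3$-cycle and transposition cases to be the main obstacle: one must clear denominators correctly over $\delta$ and recognize the resulting expressions as $-3\bar a$ and $3\lambda_1(\lambda_2-\lambda_3)$, repeatedly using $\sum_{i<j}\lambda_i\lambda_j=\bar a$ and the identity $\sum_i\lambda_i^k/f'(\lambda_i)=\bar0$ for $k=0,1$. A secondary point demanding care is that these computations presuppose $\overline D\neq\bar0$; the repeated-root situation $\overline D=\bar0$ (which does occur here, e.g. $f(x)=(x-1)^2(x+2)$ with $\bar a,\bar b\neq\bar0$) must be treated separately or excluded, since then $f'$ vanishes at a root and the closed form for $u_n$ degenerates. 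Finally I would note that $9\bar a^2\neq\bar0$, using $\bar a\neq\bar0$ and $p>3$, so that the three alternatives in the statement are genuinely disjoint.
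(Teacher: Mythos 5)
First, a point of reference: the paper does not prove this proposition at all --- it is imported verbatim from \cite{ZHS2} --- so your argument cannot be measured against an in-paper proof, only against Sun's original one, which works by direct manipulation of the Lucas-type sequence $u_n$ and residue criteria. Your Frobenius route through the separable case is correct, and I checked the computations: the characteristic polynomial of the recurrence is indeed $f(x)=x^3+\bar a x-\bar b$; the closed form $u_n=\sum_i\lambda_i^{n+2}/f'(\lambda_i)$ reproduces $u_1=\bar 0$, $u_2=-\bar a$, $u_3=\bar b$ (via $e_1=\bar 0$, $e_2=\bar a$, $e_3=\bar b$ and the complete homogeneous symmetric function identities); the numerators in the three Frobenius cases come out to $\bar 0$, $\mp 3\bar a$, and $3\lambda_1(\lambda_2-\lambda_3)$ as you claim; and the three resulting values $\bar 0$, $9\bar a^2$, and a non-square are pairwise distinct because $\bar a\neq\bar 0$, $p>3$, and a nonzero non-square is neither $\bar 0$ nor the nonzero square $9\bar a^2$ (for the last case you should say explicitly that $\lambda_1\neq\bar 0$, which holds since $f(\bar 0)=-\bar b\neq\bar 0$). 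Since the three Frobenius types exhaust the separable possibilities and produce disjoint values of $\overline D u_{p-2}^2$, the biconditional follows there.

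The genuine gap is the case $\overline D=\bar 0$, which you flag but explicitly defer (``treated separately or excluded''). It cannot be excluded: it falls squarely under the first branch of the trichotomy, since $\overline D u_{p-2}^2=\bar 0$ automatically, and the statement then asserts $N=3$. Handling it also forces you to pin down the counting convention, which your write-up never does. When $\overline D=\bar 0$ and $\bar a\bar b\neq\bar 0$, a triple root is impossible (a triple root $r$ would force $3r=\bar 0$, hence $r=\bar 0$ and $\bar b=\bar 0$), so $\gcd(f,f')$ is linear and the double root $r$ lies in $\mathbb{F}_p$ with $3r^2=-\bar a$, $r\neq\bar 0$, and the remaining root $-2r$ is also in $\mathbb{F}_p$; thus $f$ has exactly \emph{two distinct} roots in $\mathbb{F}_p$ and three counted with multiplicity. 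Your own example $(x-1)^2(x+2)$ over $\mathbb{F}_7$ shows that with distinct-root counting the proposition's first branch would be false ($N=2$, not $3$), so ${\mathbf{N}}_{\mathbb{F}_p}$ must be read with multiplicity --- exactly the reading confirmed by the paper's Proposition \ref{numberofcongequation} I.2. Adding this one observation (rationality of both roots when $\overline D=\bar 0$, plus the multiplicity convention) closes the first branch and completes your proof; as it stands, the argument covers only the separable case of a statement whose degenerate case is not vacuous.
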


\begin{prop}[\cite{ZHS2}]\label{numberofcongequation}
Let $p>3$ be a prime number and $\bar{a},\bar{b}\in\bbf_p$,
$\bar{a}\bar{b}\neq\bar{0}$. Let
$\overline{D}=-4\bar{a}^3-27\bar{b}^2$ and
$u_{n+3}=\bar{b}u_n-\bar{a}u_{n+1}$ with
$\overline{D}u_{p-2}^2\neq9\bar{a}^2,$ $u_1=\bar{0},$
$u_2=-\bar{a},$ $u_3=\bar{b}.$
\begin{itemize}
  \item [I] Let $\overline{D}u_{p-2}^2=\bar{0}.$ Then the following statements holds true:
  \begin{itemize}
  \item [I.1] The equation \eqref{cubiccong} has 3 distinct solutions in $\bbf_p$ if and only if $\overline{D}\neq\bar{0}.$ Moreover, one has that $3\bar{x}^2+\bar{a}\neq0$ for any root $\bar{x};$
  \item [I.2] The equation \eqref{cubiccong} has 2 distinct solutions in $\bbf_p$ while one of them of multiplicity 2 if and only if $\bar{D}=\bar{0}.$ If $\bar{x}_1$, $\bar{x}_2$ are 2 distinct solutions while $\bar{x}_1$ is a multiple solution then $\bar{x}_1=\frac{3\bar{b}}{2\bar{a}},$ $\bar{x}_2=-\frac{3\bar{b}}{\bar{a}},$ and $3\bar{x}_2^2+\bar{a}\neq\bar{0};$
  \item [I.3] The equation \eqref{cubiccong} does not have any solution of multiplicity 3.
\end{itemize}

  \item [II] Let $\overline{D}u_{p-2}^2\neq\bar{0}, 9\bar{a}^2.$ If $\bar{x}$ is a solution of the equation \eqref{cubiccong} then $3\bar{x}^2+\bar{a}\neq\bar{0}.$
\end{itemize}
\end{prop}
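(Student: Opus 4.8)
The plan is to reduce the counting to Proposition~\ref{CubicinF_p} and then upgrade it to statements about \emph{multiplicities} using two classical facts about a cubic over a field: a root is multiple exactly when it is a common zero of $f$ and $f'$, and $f$ has a multiple root in the algebraic closure exactly when its discriminant $\overline D$ vanishes. Throughout write $f(x)=x^3+\bar a x-\bar b$, so that $f'(x)=3x^2+\bar a$ (legitimate since $p>3$). I read Proposition~\ref{CubicinF_p} as counting roots in $\bbf_p$ \emph{with multiplicity}; then in Case~I, where $\overline D u_{p-2}^2=\bar 0$, it asserts that $f$ splits completely over $\bbf_p$, say $f(x)=(x-r_1)(x-r_2)(x-r_3)$ with $r_i\in\bbf_p$, while in Case~II, where $\overline D u_{p-2}^2\neq\bar 0,\,9\bar a^2$, it asserts that $f$ has a single root in $\bbf_p$ counted with multiplicity.

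First I would settle the degenerate possibilities. For I.3, a triple root $\bar x$ would force $3\bar x=\bar 0$ (the coefficient of $x^2$ is zero), hence $\bar x=\bar 0$; but then $f'(\bar x)=\bar a\neq\bar 0$, contradicting that $\bar x$ is multiple. Thus $\bar a\neq\bar 0$ excludes a triple root. For the double-root case of I.2, if $\bar x_1$ is multiple then $f(\bar x_1)=f'(\bar x_1)=\bar 0$; the relation $f'(\bar x_1)=\bar 0$ gives $\bar x_1^2=-\bar a/3$, and inserting this into $f(\bar x_1)=\bar 0$ yields $\frac{2\bar a}{3}\bar x_1=\bar b$, so $\bar x_1=\frac{3\bar b}{2\bar a}$. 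Since the three roots sum to $\bar 0$, the remaining root is $\bar x_2=-2\bar x_1=-\frac{3\bar b}{\bar a}$; by I.3 it cannot coincide with $\bar x_1$, so it is simple and $f'(\bar x_2)=3\bar x_2^2+\bar a\neq\bar 0$. All of $\bar a\neq\bar 0$, $\bar b\neq\bar 0$ and $p>3$ enter here to make these expressions well defined, nonzero, and distinct.

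The two equivalences then follow by splitting Case~I on the value of $\overline D$. If $\overline D\neq\bar 0$, then $f$ is separable and, being split, has three distinct roots, each simple so that $3r_i^2+\bar a\neq\bar 0$; conversely, three distinct solutions force separability, i.e.\ $\overline D\neq\bar 0$. This is I.1. If instead $\overline D=\bar 0$, then $f$ has a multiple root, which by I.3 is a double root, so $f=(x-\bar x_1)^2(x-\bar x_2)$ with the two distinct values computed above, giving exactly the two solutions described in I.2. For Case~II the argument is immediate: $\overline D u_{p-2}^2\neq\bar 0$ forces $\overline D\neq\bar 0$, so $f$ is separable and every root is simple; in particular the unique root $\bar x$ in $\bbf_p$ satisfies $f'(\bar x)=3\bar x^2+\bar a\neq\bar 0$.

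The main obstacle is bookkeeping rather than depth. One must be careful that Proposition~\ref{CubicinF_p} is being applied \emph{with multiplicity}, so that the value $3$ in Case~I accommodates both a genuinely separable split and the degenerate split $(x-\bar x_1)^2(x-\bar x_2)$; and one must check that the standing hypothesis $\overline D u_{p-2}^2\neq 9\bar a^2$ is exactly what removes the irreducible (rootless) branch of Proposition~\ref{CubicinF_p}, so that Cases~I and~II are jointly exhaustive. Beyond that, the only computation is the derivation of $\bar x_1=\frac{3\bar b}{2\bar a}$ and $\bar x_2=-\frac{3\bar b}{\bar a}$, together with the verifications that these lie in $\bbf_p$ and are distinct, which is where the hypotheses $p>3$ and $\bar a\bar b\neq\bar 0$ are genuinely used.
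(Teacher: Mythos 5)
Your proof is correct, and it is necessarily a different route from the paper's, because the paper contains no proof of this statement at all: Proposition \ref{numberofcongequation}, like its companion Proposition \ref{CubicinF_p}, is imported verbatim from Sun \cite{ZHS2} and used as a black box in the proofs of Theorems \ref{criterionforp>3} and \ref{numberofsolutionsforp>3}. What you do instead is derive the multiplicity refinements from the counting trichotomy plus two standard facts valid in characteristic $p>3$: a root of $f(x)=x^3+\bar a x-\bar b$ is multiple iff it kills both $f$ and $f'(x)=3x^2+\bar a$, and $f$ has a repeated root iff its discriminant $\overline D=-4\bar a^3-27\bar b^2$ vanishes. Your individual steps check out: a triple root would have to be $\bar 0$ (the roots sum to $\bar 0$ and $3$ is invertible), which is impossible since $f'(\bar 0)=\bar a\neq\bar 0$ (or since $f(\bar 0)=-\bar b\neq\bar 0$); solving $f=f'=\bar 0$ gives $\bar x_1=3\bar b/(2\bar a)$ and then $\bar x_2=-2\bar x_1=-3\bar b/\bar a$, distinct exactly because $p>3$ and $\bar a\bar b\neq\bar 0$; and in Case II the hypothesis $\overline D u_{p-2}^2\neq\bar 0$ forces $\overline D\neq\bar 0$, hence separability, hence $3\bar x^2+\bar a\neq\bar 0$ at every root. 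The hinge you rightly flagged, namely reading Proposition \ref{CubicinF_p} as a count \emph{with multiplicity}, is indeed forced rather than optional: for $p=7$, $\bar a=\bar 1$, $\bar b=\bar 2$ one has $\overline D=\bar 0$ and $x^3+x-2=(x-1)(x+4)^2$ in $\bbf_7[x]$, so only two distinct roots exist although $\overline D u_{p-2}^2=\bar 0$; under a distinct-root convention Proposition \ref{CubicinF_p} would be false and would contradict item I.2, while under the multiplicity reading the two cited propositions cohere (and your formulas match this example: $\bar 3=3\bar b/(2\bar a)$, $\bar 1=-3\bar b/\bar a$). What your route buys is self-containedness: only the trichotomy of Proposition \ref{CubicinF_p} needs to be imported from \cite{ZHS2}, the rest being elementary discriminant bookkeeping, so the appendix could cite one external statement instead of two. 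What the paper's wholesale citation buys is immunity from the interpretation question, since \cite{ZHS2} establishes the factorization-type statements directly rather than extracting them from the count; if you wanted your argument to stand on its own inside the paper, it would be worth adding one sentence making the multiplicity convention in Proposition \ref{CubicinF_p} explicit.
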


\begin{rem} Due to Proposition \ref{numberofcongequation}, one may conclude that under the assumption of Proposition \ref{numberofcongequation}, there always exists at least one solution $\bar{x}$ of the equation \eqref{cubiccong} such that $3\bar{x}^2+\bar{a}\neq\bar{0}.$
\end{rem}

In what follows we need the following auxiliary result.

\begin{prop}\label{NecessaryconditionforZ_p^*}
Let $p$ be any prime. Suppose that the equation \eqref{cubiceqn} has
a solution in $\bz_p^{*},$  where $a,b\in\bq_p$ with $ab\neq 0.$
Then one of the following conditions holds true:
\begin{itemize}
  \item [(i)] $|a|_p<|b|_p=1;$
  \item [(ii)] $|b|_p<|a|_p=1;$
  \item [(iii)] $|a|_p=|b|_p\geq 1.$
\end{itemize}
\end{prop}

The proof is straightforward.

We are going to state the solvability criterion for the depressed
cubic equation \eqref{cubiceqn} in domain $\bz_p^{*}$ at $p>3.$

Let $a,b\in\bq_p$ be two nonzero $p-$adic numbers with
$a=\frac{a^{*}}{|a|_p},$ $b=\frac{b^{*}}{|b|_p}$ where
$a^{*},b^{*}\in\bz_p^{*}$ with $a^{*}=a_0+a_1\cdot p+a_2\cdot
p^2+\cdots$ and $b^{*}=b_0+b_1\cdot p+b_2\cdot p^2+\cdots.$

We set $D_0=-4a_0^3-27b_0^2$ and $u_{n+3}=b_0u_n-a_0u_{n+1}$ with
$u_1=0,$ $u_2=-a_0,$ and $u_3=b_0$ for $n=\overline{1,p-3}$

\begin{thm}\label{criterionforp>3}
Let $p>3$ be a prime. Then the equation \eqref{cubiceqn} has a
solution in $\bz_p^{*}$ if and only if one of the following
conditions hold true:
  \begin{itemize}
    \item [I.1] $|a|_p<|b|_p=1$ and $b_0^{\frac{p-1}{(3,p-1)}}\equiv 1 \ (mod \ p);$
    \item [I.2] $|b|_p<|a|_p=1$ and $(-a_0)^{\frac{p-1}{2}}\equiv 1 \ (mod \ p);$
    \item [I.3] $|a|_p=|b|_p=1$ and $D_0u_{p-2}^2\not\equiv 9a_0^{2} \ (mod \ p);$
    \item [I.4] $|a|_p=|b|_p>1.$
  \end{itemize}
\end{thm}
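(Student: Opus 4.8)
The plan is to split the argument along the three norm regimes furnished by Proposition \ref{NecessaryconditionforZ_p^*}, which already guarantees that a unit solution can only occur in the configurations (i) $|a|_p<|b|_p=1$, (ii) $|b|_p<|a|_p=1$, and (iii) $|a|_p=|b|_p\geq 1$; the last splits into $|a|_p=|b|_p=1$ and $|a|_p=|b|_p>1$, so that (i)--(iii) correspond exactly to the four alternatives I.1--I.4. In each regime the necessity of the stated arithmetic condition will come from reducing a putative unit solution modulo $p$, while sufficiency will come from exhibiting a \emph{simple} root of the reduced congruence and lifting it by Hensel's Lemma \ref{Hensel}. Writing $f(x)=x^3+ax-b$, the controlling quantity is the derivative $f'(x)=3x^2+a$, and since $p>3$ the factor $3$ is invertible modulo $p$; this is where the hypothesis $p>3$ enters repeatedly.

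For I.1, since $|a|_p<1$ and $|b|_p=1$ the reduction of \eqref{cubiceqn} modulo $p$ is $x^3\equiv b_0$, which has a (necessarily nonzero) solution iff $b_0$ is a cubic residue, i.e. by Proposition \ref{aisresidueofp}(i) with $q=3$ iff $b_0^{(p-1)/(3,p-1)}\equiv 1\pmod p$. At such a root $\theta\neq 0$ one has $f'(\theta)\equiv 3\theta^2\not\equiv 0\pmod p$, so Hensel's Lemma with $i=0$ lifts it to a root in $\bz_p^*$. For I.2, with $|b|_p<|a|_p=1$ the reduced equation factors as $x(x^2+a_0)\equiv 0$, so a unit solution forces $x^2\equiv -a_0$, solvable iff $-a_0$ is a quadratic residue, i.e. $(-a_0)^{(p-1)/2}\equiv 1\pmod p$; there $f'(\theta)\equiv 3(-a_0)+a_0=-2a_0\not\equiv 0\pmod p$, and Hensel lifts again. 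In both cases the converse is immediate, since reducing a unit solution modulo $p$ returns the respective residue condition.

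The substantive case is I.3, where $|a|_p=|b|_p=1$ and the reduced equation $x^3+a_0x\equiv b_0$ has $a_0b_0\neq 0$. Here I invoke the full classification of Proposition \ref{CubicinF_p}: the congruence is unsolvable precisely when $D_0u_{p-2}^2\equiv 9a_0^2\pmod p$, so a solution exists iff $D_0u_{p-2}^2\not\equiv 9a_0^2\pmod p$. The delicate point is that Hensel lifting needs a root with $f'(\bar x)=3\bar x^2+a_0\neq 0$, and this is exactly supplied by the Remark following Proposition \ref{numberofcongequation}, which asserts that under $D_0u_{p-2}^2\neq 9a_0^2$ there is always a root with $3\bar x^2+a_0\neq 0$. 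Lifting that root (automatically a unit, since $b_0\neq 0$ forces $\bar x\neq 0$) produces the required element of $\bz_p^*$, and the converse follows once more by reduction. I expect this to be the main obstacle, as it is the only regime in which the number of residual roots is genuinely variable and one must certify nonvanishing of the derivative rather than read it off a factorization.

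Finally, I.4 ($|a|_p=|b|_p>1$) carries no arithmetic condition, and here Lemma \ref{Hensel} does not apply directly because $a,b\notin\bz_p$. Instead I would argue by successive approximation: since $|b|_p>|x^3|_p$ for every $x\in\bz_p$, the map $T(x)=(b-x^3)/a$ sends $\bz_p$ into $\bz_p^*$ (indeed $|T(x)|_p=|b|_p/|a|_p=1$), and for $x,y\in\bz_p$ one has $|T(x)-T(y)|_p=|x-y|_p\,|x^2+xy+y^2|_p/|a|_p\leq|x-y|_p/|a|_p$, so $T$ is a contraction of factor $|a|_p^{-1}<1$ on the complete space $\bz_p$. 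Its unique fixed point solves \eqref{cubiceqn} and lies in $\bz_p^*$, which gives unconditional existence in I.4 and completes the four-case analysis.
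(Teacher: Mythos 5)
Your proposal is correct, and for cases I.1--I.3 it follows the paper's proof essentially verbatim: the same norm trichotomy from Proposition \ref{NecessaryconditionforZ_p^*}, the same reductions modulo $p$ (cubic residue via Proposition \ref{aisresidueofp} in I.1, quadratic residue with derivative $-2a_0$ in I.2, Proposition \ref{CubicinF_p} plus the Remark after Proposition \ref{numberofcongequation} to certify a simple root in I.3), and the same Hensel lifting. The genuine divergence is in I.4. The paper handles the obstruction $a,b\notin\bz_p$ by clearing denominators: writing $a=p^{-m}a^*$, $b=p^{-m}b^*$ with $a^*,b^*\in\bz_p^*$, it observes that \eqref{cubiceqn} has a unit root iff $p^m x^3+a^*x=b^*$ does, then applies Hensel's Lemma to $g(x)=p^mx^3+a^*x-b^*$ at a solution $x_0$ of the \emph{linear} congruence $a_0x_0\equiv b_0\ (\mathrm{mod}\ p)$ (the cubic term dies modulo $p$, and $g'(x_0)\equiv a_0\not\equiv 0$). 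You instead run a Banach fixed-point argument with $T(x)=(b-x^3)/a$, checking $|T(x)|_p=|b|_p/|a|_p=1$ and the contraction factor $|a|_p^{-1}<1$ on the complete space $\bz_p$; this is valid and self-contained. The trade-off: the paper's rescaling keeps the whole theorem inside the single Hensel framework and sets up the digit bookkeeping ($a_0x_0\equiv b_0$) reused in the counting Theorem \ref{numberofsolutionsforp>3}, while your contraction argument avoids the rescaling trick entirely and gives, at no extra cost, uniqueness of the root in all of $\bz_p$ --- which anticipates the entry ${\mathbf{N}}_{\bz_p^{*}}(x^3+ax-b)=1$ for $|a|_p=|b|_p>1$ in that later theorem, a fact the paper must argue separately via the uniqueness of the solution of the linear congruence.
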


\begin{proof}
Note that if  the equation \eqref{cubiceqn}
has a solution in $\bz_p^{*}$, then due to Proposition
\ref{NecessaryconditionforZ_p^*} there are three possible cases.
Let us consider every case.

I.1.  Assume that the equation \eqref{cubiceqn}
has a solution $x\in\bz_p^{*}$, when $|a|_p<1,$
$|b|_p=1$.  Due to $|a|_p<1$ and $|b|_p=1,$ we get
$$
x_0^3+ax_0\equiv x_0^3 \equiv b_0 \ (mod \ p).
$$
Hence Proposition \ref{aisresidueofp} yields that
$b_0^{\frac{p-1}{(3,p-1)}}\equiv 1 \ (mod \ p).$

Now assume that  $|a|_p<1,$  $|b|_p=1,$ and
$b_0^{\frac{p-1}{(3,p-1)}}\equiv 1 \ (mod \ p)$ are satisfied. From
$b_0^{\frac{p-1}{(3,p-1)}}\equiv 1 \ (mod \ p)$ due to Proposition \ref{aisresidueofp} one
can find $x_0\in\bz$
such that $x_0^3\equiv b_0 \ (mod \ p)$ and $(x_0,p)=1$. Let us
consider the following polynomial $f_{a,b}(x)=x^3+ax-b.$ By $|a|_p<1$ one gets
$$
f_{a,b}(x_0)=x_0^3+ax_0-b\equiv x_0^3-b_0\equiv 0 \ (mod \ p),\quad
f'_{a,b}(x_0)=3x_0^2+a\not\equiv 0 \ (mod \  p).
$$
Hence, Hensel's Lemma \ref{Hensel} implies the existence of a solution $x\in\bz_p$ of \eqref{cubiceqn}
with $|x-x_0|_p\leq \frac{1}{p}$, where
$|x_0|_p=1.$ This means that $|x|_p=1$, i.e., $x\in\bz_p^{*}.$\\

I.2. Assume that the equation \eqref{cubiceqn} has a
solution $x\in\bz_p^{*}$ when $|a|_p=1,$
$|b|_p<1$.  Then it follows that
$$
x_0^3+a_0x_0\equiv x_0(x_0^2+a_0)\equiv b\equiv 0 \ (mod \ p).
$$
This means that $x_0^2\equiv -a_0 \ (mod \ p)$, and again
Proposition \ref{aisresidueofp} implies
$(-a_0)^{\frac{p-1}{2}}\equiv 1 \ (mod \ p).$

Now we suppose that $|a|_p=1,$  $|b|_p<1,$ and
$(-a_0)^{\frac{p-1}{2}}\equiv 1 \ (mod \ p)$.  The congruence
$(-a_0)^{\frac{p-1}{2}}\equiv 1 \ (mod \ p)$ yields the existence $x_0\in\bz$
such that $x_0^2+a_0\equiv 0 \ (mod \ p)$ and $(x_0,p)=1$.
Again consider the polynomial $f_{a,b}(x)$. Due to $|b|_p<1$ one finds
\begin{eqnarray*}
f_{a,b}(x_0)&=&x_0^3+ax_0-b\equiv x_0(x_0^2+a_0)\equiv 0 \ (mod \ p),\\
f'_{a,b}(x_0)&=&3x_0^2+a\equiv3(x_0^2+a_0)-2a_0\not\equiv 0 \ (mod \  p).
\end{eqnarray*}
Hence, Hensel's Lemma \ref{Hensel} implies that the equation \eqref{cubiceqn} has
a solution $x\in\bz_p$ with $|x-x_0|_p\leq \frac{1}{p}$ where
$|x_0|_p=1.$ This means $x\in\bz_p^{*}.$

I.3. Assume that the equation \eqref{cubiceqn} has a
solution $x\in\bz_p^{*}$ when $|a|_p=1,$
$|b|_p=1$. Then we have
$$
x_0^3+a_0x_0\equiv b_0 \ (mod \ p).
$$
This means that the depressed cubic equation $x^3+a_0x=b_0$ has at
least one solution in the finite field $\bbf_p$. According to
Proposition \ref{CubicinF_p} one has $D_0u_{p-2}^2\not\equiv
9a_0^{2} \ (mod \ p).$

Now we assume that $|a|_p=1,$  $|b|_p=1,$ and
$D_0u_{p-2}^2\not\equiv 9a_0^{2} \ (mod \ p)$ are satisfied. The congruence
$D_0u_{p-2}^2\not\equiv 9a_0^{2} \ (mod \ p)$ yields  the equation
$x^3+a_0x\equiv b_0 \ (mod \ p)$ has at least one solution.
Proposition \ref{numberofcongequation} implies that among all solutions of
$x^3+a_0x\equiv b_0 \ (mod \ p)$, there always exists at
least one solution $x_0$ such that $3x_0^2+a_0\not\equiv 0 \ (mod \
p)$ with $(x_0,p)=1.$

Again, if we apply Hensel's Lemma \ref{Hensel} to $f_{a,b}(x)$
at the point $x=x_0$ then we get that \eqref{cubiceqn} has
a solution $x\in\bz_p$ with $|x-x_0|_p\leq
\frac{1}{p}$ where $|x_0|_p=1.$ So, $x\in\bz_p^{*}.$

I.4. In this case, it is enough to establish the existence of a solution of
\eqref{cubiceqn} under the condition $|a|_p=|b|_p>1$. We have  $a=p^{-m}a^{*},$
$b=p^{-m}b^{*}$ with $a^{*},b^{*}\in\bz_p^{*}$, for some $m\geq 1$. It is clear that the
depressed equation $x^3+p^{-m}a^{*}x=p^{-m}b^{*}$ has a solution in
$\bz_p^{*}$ if and only if the equation $p^{m}x^3+a^{*}x=b^{*}$ has
a solution in $\bz_p^{*}$, where $a^{*},b^{*}\in\bz_p^{*}.$ To this
end, we consider the polynomial
$g(x)=p^{m}x^3+a^{*}x-b^{*}.$ If $a_0,b_0$ are the first digits of
$a^{*},b^{*}\in\bz_p^{*}$ then one finds is $x_0\in\bz$ such that
$a_0x_0\equiv b_0 \ (mod \ p)$ with $(x_0,p)=1.$ Then, we obtain
$$
g(x_0)\equiv p^{m}x_0^3+a_0x_0-b_0\equiv 0 \ (mod \ p), \quad
g'(x_0)\equiv 3p^{m}x_0^2+a_0\not\equiv 0\ (mod \ p).
$$
Again, due to Hensel's Lemma \ref{Hensel}, the equation
$p^{m}x^3+a^{*}x=b^{*}$ has a solution $x\in\bz_p^{*}$ with
$|x-x_0|_p\leq \frac{1}{p}$. This means that
$x\in\bz_p^{*}.$
\end{proof}

Let $D=-4(a|a|_p)^3-27(b|b|_p)^2$. We have $D=\frac{D^{*}}{|D|_p}$
whenever $D\neq 0,$ where $D^{*}\in\bz_p^{*}$ and
$D^{*}=d_0+d_1\cdot p+d_2\cdot p^2+\cdots$

\begin{thm}\label{numberofsolutionsforp>3} Let $p>3$ be a prime.
Then one has
$$
{\mathbf{N}}_{\bz_p^{*}}(x^3+ax-b)=\left\{
\begin{array}{l}
3, \ \ \ |a|_p<|b|_p=1, \ p\equiv 1\ (mod \ 3), \ b_0^{\frac{p-1}{3}}\equiv 1\ (mod \ p)   \\
3, \ \ \ |a|_p=|b|_p=1, \ D=0 \\
3, \ \ \ |a|_p=|b|_p=1, \ 0<|D|_p<1, \ 2\mid\log_p|D|_p, d_0^{\frac{p-1}{2}}\equiv1(mod \ p) \\
3, \ \ \ |a|_p=|b|_p=1, \ |D|_p=1, \ u_{p-2}\equiv 0\ (mod \ p)\\
2, \ \ \ |b|_p<|a|_p=1, \ (-a_0)^{\frac{p-1}{2}}\equiv 1 \ (mod \ p)\\
1, \ \ \ |a|_p<|b|_p=1, \ p\equiv 2\ (mod \ 3) \\
1, \ \ \ |a|_p=|b|_p=1, \ 0<|D|_p<1, \ 2\mid ord_p(D),\ d_0^{\frac{p-1}{2}}\not\equiv1(mod \ p) \\
1, \ \ \ |a|_p=|b|_p=1, \ 0<|D|_p<1, \ 2\nmid ord_p(D), \\
1, \ \ \ |a|_p=|b|_p=1, \ D_0u_{p-2}^2\not\equiv 0,9a_0^{2}\ (mod \ p)\\
1, \ \ \ |a|_p=|b|_p>1\\
0, \ \ \ otherwise
\end{array}
\right.
$$
\end{thm}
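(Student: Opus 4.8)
The plan is to reduce everything to a count over $\bbf_p$ followed by Hensel lifting, organised by the norm data of $a$ and $b$. The entry point is Proposition~\ref{NecessaryconditionforZ_p^*}: a root in $\bz_p^{*}$ can exist only when $|a|_p<|b|_p=1$, $|b|_p<|a|_p=1$, or $|a|_p=|b|_p\geq 1$. Every configuration outside this trichotomy therefore contributes nothing and is collected in the final ``otherwise'' row, so I would only need to treat these three regimes. In each admissible regime the strategy is the same: reduce \eqref{cubiceqn} modulo $p$, count the roots of the reduced equation, verify that $3x_0^2+a\not\equiv 0\pmod p$ at each relevant root so that Hensel's Lemma~\ref{Hensel} (with $i=0$) lifts it uniquely, and read off $\mathbf{N}_{\bz_p^{*}}$.

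For $|a|_p<|b|_p=1$ the reduction is $x^3\equiv b_0\pmod p$, and Proposition~\ref{aisresidueofp} with $q=3$ gives three roots when $p\equiv 1\pmod 3$ and $b_0^{(p-1)/3}\equiv 1$, exactly one when $p\equiv 2\pmod 3$, and none otherwise; each root has $3x_0^2\not\equiv 0$ and lifts, yielding the first and sixth rows. For $|b|_p<|a|_p=1$ the reduction $x(x^2+a_0)\equiv 0$ forces $x_0^2\equiv -a_0$, so Proposition~\ref{aisresidueofp} with $q=2$ gives two roots iff $(-a_0)^{(p-1)/2}\equiv 1$, each lifting since the derivative is $-2a_0\not\equiv 0$, which is the fifth row. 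For $|a|_p=|b|_p>1$ I would clear denominators as in Theorem~\ref{criterionforp>3}~I.4, reducing to $p^{m}x^3+a^{*}x=b^{*}$ whose mod-$p$ form is the linear $a_0x\equiv b_0$; its single simple root lifts to give the tenth row.

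The regime $|a|_p=|b|_p=1$ is governed by $|D|_p$, where $D\equiv D_0\pmod p$. If $|D|_p=1$ the reduced cubic $x^3+a_0x\equiv b_0$ is separable, so all of its $\bbf_p$-roots are simple units and Hensel transfers the count of Proposition~\ref{CubicinF_p} verbatim: three roots when $u_{p-2}\equiv 0$ (row four), one root when $D_0u_{p-2}^2\not\equiv 0,9a_0^2$ (row nine), and none when $D_0u_{p-2}^2\equiv 9a_0^2$ (``otherwise''). The difficulty is $|D|_p<1$, for then $D_0\equiv 0$ and Proposition~\ref{numberofcongequation}~I.2 shows the reduced cubic has a simple root $\bar x_2\equiv -3b_0/a_0$ together with a double root $\bar x_1\equiv 3b_0/(2a_0)$, so Hensel no longer applies at $\bar x_1$.

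To resolve the double root I would lift only the simple $\bar x_2$ to $x_2\in\bz_p^{*}$ and then divide, writing $x^3+ax-b=(x-x_2)Q(x)$ with $Q(x)=x^2+cx+e$, $c,e\in\bz_p$, and $Q\equiv(x-\bar x_1)^2\pmod p$. The whole count then hinges on when the two (necessarily unit) roots of $Q$ lie in $\bq_p$, i.e. when its discriminant $c^2-4e$ is a square. The decisive identity is the multiplicativity of the cubic discriminant, $D=(c^2-4e)\,Q(x_2)^2$ with $Q(x_2)\equiv(\bar x_2-\bar x_1)^2\not\equiv 0$ a unit; hence $c^2-4e$ shares the valuation $s=ord_p(D)$ and the square class of $D$. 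Thus $Q$ splits in $\bz_p^{*}$ precisely when $D$ is a square, that is when $s$ is even and $d_0^{(p-1)/2}\equiv 1$, giving $1+2=3$ solutions (row three); when $s$ is even with $d_0$ a nonresidue (row seven) or $s$ is odd (row eight) the quadratic has no $\bq_p$-roots and only $x_2$ survives, giving $1$. The case $D=0$ is the same factorization with $c^2-4e=0$, so $Q=(x+c/2)^2$ contributes the double unit root $-c/2$ and, counting with multiplicity, $1+2=3$ (row two); here $D=0$ forces $-a/3$ to be a square, which is what makes this case unconditional. I expect the main obstacle to be exactly this square-class analysis of the peeled-off quadratic, including the multiplicity bookkeeping needed so that the $\bz_p^{*}$-count respects the with-multiplicity convention already built into Proposition~\ref{CubicinF_p}; the remaining regimes are routine simple-root lifts.
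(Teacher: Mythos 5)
Your proposal is correct and takes essentially the same route as the paper: the same trichotomy of norm cases from Proposition \ref{NecessaryconditionforZ_p^*}, the same mod-$p$ reductions lifted by Hensel's Lemma via Propositions \ref{aisresidueofp}, \ref{CubicinF_p} and \ref{numberofcongequation}, and in the hard case $|a|_p=|b|_p=1$, $|D|_p<1$ the same device of lifting the simple root, peeling off the quadratic factor, and comparing its discriminant's square class with that of $D$ (including the same with-multiplicity count when $D=0$). Your appeal to discriminant multiplicativity, $D=\mathrm{disc}(Q)\,Q(x_2)^2$, is precisely the identity the paper verifies by hand, since $3a\bar{x}^2-9b\bar{x}-a^2=-\bigl(3\bar{x}^2+a\bigr)^2=-Q(\bar{x})^2$ once one uses $\bar{x}^3+a\bar{x}=b$, so the two arguments coincide up to presentation.
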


\begin{proof}
Let $p>3$ be a prime number and $a,b\in\bq_p$ be two nonzero
$p-$adic numbers.

We want to describe the number
${\mathbf{N}}_{\bz_p^{*}}(x^3+ax-b)$ of solutions of the depressed
cubic equation \eqref{cubiceqn} in the domain $\bz_p^{*}.$

According to Theorem \ref{criterionforp>3}, the equation
\eqref{cubiceqn} has a solution in $\bz_p^{*}$ if and
only if one of the conditions I.1-I.4 holds true. We want to find a number
of solutions in every case.

First assume that $|a|_p<|b|_p=1$ and
$b_0^{\frac{p-1}{(3,p-1)}}\equiv 1 \ (mod \ p).$ In this case, we
showed in the proof of I.1, the
number of solutions of \eqref{cubiceqn} in $\bz_p^{*}$
is the same as the number of solutions of the equation $x_0^3= b_0$
in $\bbf_p.$ Hence, Proposition \ref{aisresidueofp} implies that the last
equation has 3 distinct solutions if $p\equiv 1 \ (mod \ 3)$ and it
has a unique solution if $p\equiv 2 \ (mod \ 3).$

Therefore, if $|a|_p<|b|_p=1$, $p\equiv 1 \ (mod \ 3)$, and
$b_0^{\frac{p-1}{3}}\equiv 1 \ (mod \ p)$ then ${\mathbf{N}}_{\bz_p^{*}}(x^3+ax-b)=3$,
and if
$|a|_p<|b|_p=1,$ $p\equiv 2 \ (mod \ 3)$ and $b_0^{p-1}\equiv 1 \
(mod \ p)$ then ${\mathbf{N}}_{\bz_p^{*}}(x^3+ax-b)=1$.
It is worth mentioning that one always has
$b_0^{p-1}\equiv 1 \ (mod \ p)$ since $(b_0,p)=1$.\\

Now consider $|b|_p<|a|_p=1$ and
$(-a_0)^{\frac{p-1}{2}}\equiv 1 \ (mod \ p).$ We showed that the
number of solutions of \eqref{cubiceqn} in $\bz_p^{*}$
is the same as the number of solutions of the equation $x_0^2=-a_0$
in $\bbf_p.$ From  $(-a_0)^{\frac{p-1}{2}}\equiv 1 \ (mod \ p),$ one concludes that the
last equation has 2 distinct solutions in $\bbf_p.$
Hence, ${\mathbf{N}}_{\bz_p^{*}}(x^3+ax-b)=2$.

Now assume that $|a|_p=|b|_p=1$ and
$D_0u_{p-2}^2\not\equiv 9a_0^{2} \ (mod \ p).$ In this case, we know that
the equation \eqref{cubiceqn} has a
solution $\bar{x}$ such that $\bar{x}\equiv\bar{x}_0 \ (mod \ p),$
where $\bar{x}_0$ is a solution of
\begin{eqnarray}\label{I3}
x_0^3+a_0x_0\equiv b_0 \ (mod \ p)
\end{eqnarray}
such that $3\bar{x}_0^2+a_0\not\equiv 0 \ (mod \ p).$

Due to Proposition \ref{CubicinF_p}, if $D_0u_{p-2}^2\not\equiv 0,
9a_0^{2} \ (mod \ p)$ then \eqref{I3} does not have any
solution except $\bar{x}_0$. Therefore, due to Hensel's Lemma
\ref{Hensel}, if $|a|_p=|b|_p=1$ and $D_0u_{p-2}^2\not\equiv 0,
9a_0^{2} \ (mod \ p)$ then ${\mathbf{N}}_{\bz_p^{*}}(x^3+ax-b)=1$.

Now, assume that $|a|_p=|b|_p=1$ and $D_0u_{p-2}^2\equiv 0
\ (mod \ p)$. Then Proposition \ref{CubicinF_p} yields that the equation
\eqref{I3} has 2 more solutions besides
$\bar{x}_0.$ We denote them by $x_0$ and $y_0$. There is no loss of
generality in assuming that $0<x_0,y_0<p.$

If $D_0\not\equiv 0 \
(mod \ p)$ then Proposition \ref{numberofcongequation} implies that all
solutions $x_0,y_0,\bar{x}_0$ of \eqref{I3} are distinct from each other and
one has
$3{x}_0^2+a_0\not\equiv 0 \ (mod \ p),$ $3{y}_0^2+a_0\not\equiv 0 \
(mod \ p).$ Therefore, by applying Hensel's Lemma \ref{Hensel} to
\eqref{cubiceqn} one finds 2 more solutions $x,$ $y$. Moreover, we have
$\bar{x}$ such that ${x}\equiv{x}_0 \ (mod \ p),$ ${y}\equiv{y}_0 \
(mod \ p).$ Hence, $|a|_p=|b|_p=1,$  $D_0\not\equiv 0 \ (mod \ p)$
(equivalently $|D|_p=1$) and $u_{p-2}\equiv 0 \ (mod \ p)$ then ${\mathbf{N}}_{\bz_p^{*}}(x^3+ax-b)=3$.

Now we suppose that $D_0\equiv 0 \ (mod \ p)$. Then
Proposition \ref{numberofcongequation} implies that two solutions $x_0,y_0$ of
\eqref{I3} are equal to each other, and one has
$\bar{x}_0=-2x_0,$ $3{x}_0^2+a_0\equiv 0 \ (mod \ p).$
Let us study this case in a detail.

Since $\bar{x}$ is a solution of the equation \eqref{cubiceqn}, then one
finds
$$
x^3+ax-b=(x-\bar{x})(x^2+\bar{x}x+\bar{x}^2+a).
$$

We are going to study the quadratic equation
\begin{eqnarray}\label{quadraticequat}
x^2+\bar{x}x+\bar{x}^2+a=0.
\end{eqnarray}
It is clear that
$$
\left(x+\frac{\bar{x}}{2}\right)^2=-\left(3\left(\frac{\bar{x}}{2}\right)^2+a\right).
$$
Since $a,\bar{x}\in\bz_p^{*}$ and $p>3$, we have that
$\left|3\left(\frac{\bar{x}}{2}\right)^2+a\right|_p\leq1.$ Since
$\bar{x}$ is a solution of \eqref{cubiceqn}, one gets $
4\bar{x}\left(3\left(\frac{\bar{x}}{2}\right)^2+a\right)=a\bar{x}+3b.
$ We then have that
\begin{eqnarray*}
D&=&-4a^3-27b^2=3(a^2\bar{x}^2-9b^2)-a^2(3\bar{x}^2+4a)\\
&=&12\bar{x}\left(3\left(\frac{\bar{x}}{2}\right)^2+a\right)(a\bar{x}-3b)-4a^2\left(3\left(\frac{\bar{x}}{2}\right)^2+a\right)\\
&=&4\left(3\left(\frac{\bar{x}}{2}\right)^2+a\right)\left(3a\bar{x}^2-9b\bar{x}-a^2\right).
\end{eqnarray*}

Due to $\bar{x}\equiv\bar{x}_0=-2x_0 \ (mod \ p),$ $3x_0^2+a_0\equiv
0 \ (mod \ p)$, and $2x_0a_0\equiv 3b_0\ (mod \ p)$ we find that
$3a\bar{x}^2-9b\bar{x}-a^2\equiv -9a_0^2\ (mod \ p).$ This means
that $\left|3a\bar{x}^2-9b\bar{x}-a^2\right|_p=1$, and
$-\left(3a\bar{x}^2-9b\bar{x}-a^2\right)$ is a complete square of
some $p-$adic integer number. Therefore, we obtain that
\begin{eqnarray}\label{Dand3x_0^2+a}
-\left(3\left(\frac{\bar{x}}{2}\right)^2+a\right)=\frac{D}{-4\left(3a\bar{x}^2-9b\bar{x}-a^2\right)}
\end{eqnarray}

Let us analyze the quadratic equation \eqref{quadraticequat}.

If $D=0$ then \eqref{quadraticequat} has
solutions $x_1=x_2=\frac{3b}{2a}$ in $\bz_p^{*}.$ Therefore,
$|a|_p=|b|_p=1$ and $D=0$, then \eqref{cubiceqn}
has 3 solutions such that $x_1=x_2=\frac{3b}{2a}$ and
$x_3=-\frac{3b}{a.}$

Let $D\neq 0.$ Since $D\equiv D_0\equiv 0 \ (mod \ p)$, there exists
$k\in\bn$ such that $|D|_p=p^{-k}$, i.e., $D=\frac{D^{*}}{|D|_p},$
where $D^{*}\in\bz_p^{*}$ with $D^{*}=d_0+d_1\cdot p+d_2\cdot
p^2+\cdots$.

The quadratic equation \eqref{quadraticequat} has a solution if and
only if $ord_p(D)=-k$ is even number and
$d_0^{\frac{p-1}{2}}\equiv 1 \ (mod \ p)$. In this case
$-\left(3\left(\frac{\bar{x}}{2}\right)^2+a\right)$ is a complete
square and \eqref{quadraticequat} has  two
distinct solutions in $\bz_p^{*}$, which are given by
\begin{eqnarray}\label{x+-}
x_{\pm}=-\frac{\bar{x}}{2}\pm\sqrt{-\left(3\left(\frac{\bar{x}}{2}\right)^2+a\right)}
\end{eqnarray}

So, if $|a|_p=|b|_p=1,$ $0<|D|_p<1,$ $2\mid ord_p(D)$, and
$d_0^{\frac{p-1}{2}}\equiv 1 \ (mod \ p)$ then ${\mathbf{N}}_{\bz_p^{*}}(x^3+ax-b)=3$.
If
$|a|_p=|b|_p=1,$ $0<|D|_p<1,$ $2\mid ord_p(D)$, and
$d_0^{\frac{p-1}{2}}\not\equiv 1 \ (mod \ p)$ or $|a|_p=|b|_p=1,$
$0<|D|_p<1,$ and $2\nmid ord_p(D)$ then
${\mathbf{N}}_{\bz_p^{*}}(x^3+ax-b)=1$.\\

Now consider the case $|a|_p=|b|_p>1.$ We know that
${\mathbf{N}}_{\bz_p^{*}}(x^3+ax-b)$ is the same as
the number of solutions of the linear
equation $a_0x_0=b_0$ in $\bbf_p.$ Since $a_0\neq 0$, the last
equation has a unique solution. Therefore, if  $|a|_p=|b|_p>1$ then
${\mathbf{N}}_{\bz_p^{*}}(x^3+ax-b)=1$.

\end{proof}


\begin{thebibliography}{99}

\bibitem{AK} Albeverio S., Karwowski W. A random walk on $p$-adics, the
generator and its spectrum, \textit{ Stochastic. Process. Appl.} {\bf 53} (1994) 1-Ц22.

\bibitem{AZ1} Albeverio S.,  Zhao X. On the relation between different
constructions of random walks on $p$-adics, \textit{ Markov Process.
Related Fields} {\bf 6} (2000) 239-Ц256.

\bibitem{AZ2} Albeverio S., Zhao X. Measure-valued branching processes
associated with random walks on $p$-adics, \textit{Ann. Probab.} {\bf 28}(2000) 1680-Ц1710.


\bibitem{AKK1} Albeverio S., Khrennikov A., Kloeden P.E., Memory retrieval as a
$p$-adic dynamical system, BioSys. {\bf 49}(1999), 105--115.

\bibitem{AKS} Albeverio S., Khrennikov A. Yu., Shelkovich, V.M.
\textit{Theory of p-adic Distributions. Linear and Nonlinear
Models}, Cambridge Univ. Press, Cambridge, 2010.


\bibitem{AKh} Anashin V., Khrennikov A., \textit{Applied Algebraic Dynamics},
Walter de Gruyter, Berlin, New York, 2009.

\bibitem{ADV}  Aref\'eva I. Ya., Dragovic B.,
Volovich I.V. $p-$ adic summability of the anharmonic ocillator, \textit{Phys. Lett. B} {\bf 200}(1988) 512--514.

\bibitem{ADFV} Aref\'eva I. Ya., Dragovic B., Frampton P.H.,  Volovich I.V.
The wave function of the Universe and $p -$ adic gravity, \textit{Int. J.
Modern Phys. A} {\bf 6}(1991) 4341--4358.

\bibitem{AV1} Arrowsmith D.K., Vivaldi F., Some $p-$adic
representations of the Smale horseshoe, \textit{Phys. Lett. A} {\bf
176}(1993), 292--294.

\bibitem{AV2} Arrowsmith D.K., Vivaldi F.,  Geometry of $p$-adic Siegel
discs. \textit{Physica D}, {\bf 71}(1994), 222Ц-236.


\bibitem{ABK}  Avetisov V.A., Bikulov A.H., Kozyrev S.V. Application of
pЦadic analysis to models of spontaneous breaking of the replica
symmetry, \textit{J. Phys. A: Math. Gen.} {\bf 32}(1999) 8785--8791.

\bibitem{BC} Beltrametti E., Cassinelli G. Quantum mechanics and $p-$ adic
numbers, \textit{Found. Phys.} {\bf 2}(1972) 1--7.


\bibitem{BD} Besser A., Deninger C., $p$-adic Mahler measures, \textit{J. Reine
Angew. Math.} {\bf 517} (1999), 19Ц50.


\bibitem{BP} B\"{o}cherer, S., Panchishkin, A. A. Admissible p-adic measures
attached to triple products of elliptic cusp forms. \textit{Doc.
Math.} 2006, Extra Vol., 77Ц-132.

\bibitem{BS} B\"{o}cherer, S., Schmidt, C.-G. $p$-adic measures attached to
Siegel modular forms. \textit{Ann. Inst. Fourier (Grenoble)} {\bf
50} (2000), 1375-Ц1443

\bibitem{Ch} Childress, N., The coefficients of a $p$-adic measure and its
$\Gamma$-transform. \textit{Manuscripta Math.} {\bf 116} (2005),
249-Ц263.

\bibitem{DF} Del Muto M., Fig$\grave{a}$-Talamanca A. Diffusion on locally
compact ultrametric spaces, \textit{Expo. Math.} {\bf 22}(2004) 197Ц-211.


\bibitem{Dob1}  Dobrushin R.L. The problem of uniqueness of a
Gibbsian random field and the problem of phase transitions, \textit{
Funct.Anal. Appl.} {\bf 2} (1968) 302--312.

\bibitem{Dob2} Dobrushin R.L. Prescribing a system of random variables by conditional
distributions, \textit{Theor. Probab. Appl.} {\bf 15}(1970) 458--486.




\bibitem{F1} We
point out that stochastic processes on the field $\bq_p$ of $p$-adic
numbers with values of real numbers have been studied by many
authors, for example, \cite{AK,AZ1,AZ2,DF,Koc,Y}. In those
investigations wide classes of Markov processes on $\bq_p$ were
constructed and studied. In our case the situation is different,
since probability measures take their values in $\bq_p$. This leads
our investigation to some difficulties. For example, there is no
information about the compactness of $p$-adic values probability
measures.

\bibitem{FO}   Freund P.G.O., Olson M. Non-Archimedian
strings, \textit{Phys. Lett. B} {\bf 199}(1987) 186--190.

\bibitem{Ga} Ganikhodjaev N.N., On pure phases of the three-state
ferromagnetic Potts model on the Bethe lattice order two,
\textit{Theor. Math. Phys.} {\bf 85} (1990) 163--175.

\bibitem{GMR} Ganikhodjaev N.N., Mukhamedov F.M., Rozikov U.A.
Phase transitions of the Ising model on $\bz$ in the $p$-adic number
field, \textit{Uzbek. Math. Jour.} {\bf 4} (1998) 23--29 (Russian).

\bibitem{GMR1} Ganikhodjaev N.N., Mukhamedov F.M., Rozikov U.A.
Existence of a phase transition for the Potts $p$-adic model on the set Z,
 \textit{Theor. Math. Phys.} {\bf 130} (2002) 425--431.

\bibitem{GR} Ganikhodjaev N.N., Rozikov U.A. Description of periodic extreme Gibbs measures of some lattice
model on the Cayley tree. \textit{Theor. Math. Phys.} {\bf 111} (1997), 480--486.

\bibitem{G} Georgii H.O. \textit{ Gibbs measures and phase transitions},
Walter de Gruyter, Berlin, 1988.

\bibitem{GRR} Gandolfo D., Rozikov U.A., Ruiz J.
On $p$-adic Gibbs Measures for Hard Core Model on a Cayley Tree, {\tt ArXiv:  arXiv:1107.4884}.

\bibitem{HT} Harris, M., Tilouine, J., $p$-adic measures and square roots of
special values of triple product L-functions. \textit{Math. Ann.}
{\bf 320} (2001), 127-Ц147.


\bibitem{KaKo} Kaneko H., Kochubei A.N.,  Weak solutions of
stochastic differential equations over the field of $p$-adic
numbers, \textit{Tohoku Math. J.} {\bf 59}(2007), 547--564.


\bibitem{KM} Khamraev M., Mukhamedov F.M. On $p$-adic $\lambda$-model on
the Cayley tree, \textit{Jour. Math. Phys.} {\bf 45}(2004) 4025--4034.


\bibitem{kas1} Katsaras A.K.  Extensions of $p$-adic vector measures,
\textit{Indag. Math.N.S.} {\bf 19} (2008) 579--600.

\bibitem{kas2} Katsaras A.K.  On spaces of $p$-adic vector measures,
    \textit{P-Adic Numbers, Ultrametric Analysis, Appl.} {\bf  1} (2009)  190--203.

\bibitem{kas3} Katsaras A.K.  On $p$-adic vector measures,
    \textit{Jour. Math. Anal. Appl.} {\bf 365} (2010), 342Ц-357.

\bibitem{K3} Khrennikov A.Yu.  $p$-adic valued probability measures, \textit{Indag. Mathem. N.S.}
{\bf 7}(1996) 311--330.


\bibitem{Kh1} Khrennikov A.Yu. \textit{ $p$-adic Valued Distributions in Mathematical
Physics}, Kluwer Academic Publisher, Dordrecht, 1994.

\bibitem{Kh2} Khrennikov A.Yu.  \textit{Non-Archimedean analysis: quantum paradoxes,
dynamical systems and biological models}, Kluwer Academic Publisher,
Dordrecht, 1997.

\bibitem{Kh98} Khrennikov A.Yu., The Bernoulli theorem for probabilities that take
$p$-adic values. {\it Doklady Mathematics}, {\bf 55} (1998), N. 3,
402--405.


\bibitem{Kh07} Khrennikov A.Yu., Generalized Probabilities
Taking Values in Non-Archimedean Fields and in Topological Groups,
\textit{Russian J. Math. Phys.} {\bf 14} (2007), 142-Ц159.

\bibitem{KK1} Khrennikov A.Yu., Kozyrev S.V., Wavelets on ultrametric spaces,
\textit{Appl. Comput. Harmonic Anal.} {\bf 19}(2005), 61--76.

\bibitem{KK2} Khrennikov A.Yu., Kozyrev S.V., Ultrametric random field.
\textit{Infin. Dimens. Anal. Quantum Probab. Relat. Top.} {\bf
9}(2006), 199-213.

\bibitem{KK3} Khrennikov A.Yu., Kozyrev S.V.,  Replica symmetry breaking
related to a general ultrametric space I,II,III, \textit{Physica A},
{\bf 359}(2006), 222-240; 241-266; {\bf 378}(2007), 283-298.

\bibitem{KL} Khrennikov A.Yu., Ludkovsky S.  Stochastic processes on
non-Archimedean spaces with values in non-Archimedean fields, \textit{Markov Process. Related Fields}
 {\bf 9}(2003) 131--162.

\bibitem{KMM} Khrennikov A., Mukhamedov F., Mendes J.F.F. On $p$-adic Gibbs measures of countable
state Potts model on the Cayley tree, \textit{Nonlinearity} {\bf 20}(2007) 2923Ц-2937.

\bibitem{KhN} Khrennikov  A.Yu., Nilsson M.  \textit{$p$-adic deterministic and
random dynamical systems}, Kluwer, Dordreht, 2004.

\bibitem{KYR} Khrennikov  A.Yu.,  Yamada S., van Rooij A., Measure-theoretical
approach to $p$-adic probability theory, \textit{Annals Math. Blaise
Pascal} {\bf 6} (1999) 21--32.

\bibitem{Ko} Koblitz N.  \textit{$p$-adic numbers, $p$-adic analysis and
zeta-function}, Berlin, Springer, 1977.

\bibitem{Koc} Kochubei A.N. \textit{Pseudo-differential equations and stochastics over non-Archimedean
fields}, Mongr. Textbooks Pure Appl. Math. 244 Marcel Dekker, New
York, 2001.

\bibitem{Koz} Kozyrev S.V., Wavelets and spectral analysis of ultrametric pseudodifferential operators
\textit{Sbornik Math.} {\bf 198}(2007), 97--116.



\bibitem{Lu}  Ludkovsky S.V.  Non-Archimedean valued quasi-invariant
descending at infinity measures, \textit{ Int. J. Math. Math. Sci.} {\bf 2005}(2005)
N. 23, 3799--3817.



\bibitem{MP} Marinary E., Parisi G.  On the $p$-adic five point function, \textit{ Phys. Lett. B}\
 {\bf 203}(1988) 52--56.

\bibitem{Ma} Mahler K., \textit{$p$-adic numbers and their functions}.
Cambridge Tracts in Mathematics, {\bf 76}, Cambridge Univ. Press,
Cambridge-New York, 1981.

\bibitem{Mona} Monna A., Springer T., Integration non-Archim┤edienne 1, 2.
\textit{Indag. Math.} {\bf 25} (1963) 634-Ц653.

\bibitem{M} Mukhamedov F.M., On the existence of generalized Gibbs
measures for the one-dimensional $p$-adic countable state Potts model,
\textit{Proc. Steklov Inst. Math.} {\bf 265} (2009) 165-Ц176.

\bibitem{M1} Mukhamedov F.M., Existence of $P$-adic quasi Gibbs
measure for countable state Potts
model on the Cayley tree, \textit{J. Ineqal. Appl.} 2012, {\bf 2012}:104.


\bibitem{Mq}  Mukhamedov F., On $p$-adic quasi Gibbs measures for $q + 1$-state
Potts model on the Cayley tree,
\textit{P-adic Numbers, Ultametric Anal.
Appl.} {\bf 2}(2010), 241--251.

\bibitem{Mq2}  Mukhamedov F., A dynamical
system approach to phase transitions for $p$-adic Potts
model on the Cayley tree of order two,
\textit{Rep. Math. Phys.} (accepted).


\bibitem{MOS} Mukhamedov F., Omirov B., Saburov M. On cubic equations over $P-$adic field, \textit{ArXiv: 1204.1743}.


\bibitem{MR1} Mukhamedov F.M., Rozikov U.A. On Gibbs measures of $p$-adic
Potts model on the Cayley tree, \textit{Indag. Math. N.S.} {\bf 15} (2004) 85--100.

\bibitem{MR2}  Mukhamedov F.M., Rozikov U.A. On inhomogeneous $p$-adic Potts
model on a Cayley tree, \textit{Infin. Dimens. Anal. Quantum Probab.
Relat. Top.} {\bf 8}(2005) 277--290.


\bibitem{Per1} Peruggi, F., di Liberto F., Monroy G.,  Phase diagrams of the $q$-state
Potts model on Bethe lattices. \textit{Phys. A} {\bf 141} (1987), 151-Ц186.

\bibitem{Per2} Peruggi, F., di Liberto F., Monroy G.,
The Potts model on Bethe lattices. I. General results. \textit{J. Phys. A} {\bf 16}
(1983), 811-Ц827.


\bibitem{R} A.M.Robert, \textit{A course of $p$-adic analysis}, Springer, New York, 2000.

\bibitem{Ro} van Rooij A., \textit{Non-archimedean functional analysis}, Marcel Dekker, New York, 1978.

\bibitem{Ros} Rosen K.H., \textit{Elementary number theory and its applications}, Pearson 2011.

\bibitem{R} Rozikov U.A. A contour method on Cayley trees. \textit{Jour. Stat. Phys.}
{\bf 130} (2008), 801--813.

\bibitem{S}  Schikhof W.H. \textit{Ultrametric Calculus}, Cambridge
University Press, Cambridge, 1984.




\bibitem{Sh} Shiryaev A.N. \textit{Probability}, Nauka, Moscow, 1980.

\bibitem{ZHS2} Z.H. Sun, Cubic and quartic congruences modulo a prime, \textit{J. Num. Theory}, {\bf 102} (2003) 41--89.


\bibitem{Vi} Vishek M.M. Nonarchimedean measures  connected with Dirichlet
sereis. \textit{Mat. Sb., N. Ser.} {\bf 99}(1976), 248--260.



\bibitem{VVZ} Vladimirov V.S., Volovich I.V., Zelenov E.I. \textit{ $p$-adic Analysis and
Mathematical Physics}, World Scientific, Singapour, 1994.

\bibitem{V1} Volovich I.V. Number theory as the ultimate physical theory, \textit{p-Adic Numbers, Ultrametric Analysis Appl.} {\bf 2}(2010), 77-Ц87;// Preprint TH.4781/87, 1987.

\bibitem{V2}  Volovich I.V. $p-$adic string, \textit{Classical Quantum Gravity} {\bf 4} (1987) L83-L87.

\bibitem{Y} Yasuda K., Extension of measures to infinite-dimensional spaces
over $p$-adic field, \textit{Osaka J. Math.} {\bf 37}(2000) 967-Ц985.

\bibitem{Za} Zachary S. Countable state space Markov random fields and
Markov chains on trees, \textit{Ann. Prob.} {\bf 11}
(1983), 894Ц-903.


\bibitem{W} Wu F.Y., The Potts model, \textit{Rev. Mod. Phys.} {\bf 54} (1982) 235--268.

\bibitem{F31} In \cite{KK1,KK2} a notion of ultrametric Markovianity, which
describes independence of contributions to random field from
different ultrametric balls, has been introduced, and showed that
Gaussian random fields on general ultrametric spaces (which were
related with hierarchical trees), which were defined as a solution
of pseudodifferential stochastic equation  \cite{AKS,KaKo},
satisfies the Markovianity. In addition, covariation of the defined
random field was computed with the help of wavelet analysis on
ultrametric spaces (see also \cite{AKS,Koz}). Some applications of
the results to replica matrices, related to general ultrametric
spaces have been investigated in \cite{KK3}.



\bibitem{F4} In the real case, when the
state space is compact, then the existence follows from the
compactness of the set of all probability measures (i.e. Prohorov's
Theorem). When the state space is non-compact, then there is a
Dobrushin's Theorem \cite{Dob1,Dob2} which gives a sufficient
condition for the existence of the Gibbs measure for a large class
of Hamiltonians.



\end{thebibliography}
\end{document}